\newenvironment{proof}{\begin{IEEEproof}}{\end{IEEEproof}}
\newtheorem{theorem}{Theorem}[section]
\newtheorem{definition}{Definition}[section]
\newtheorem{proposition}{Proposition}[section]
\newtheorem{lemma}{Lemma}[section]
\newtheorem{example}{Example}[section]
\long\def\symbolfootnote[#1]#2{\begingroup%
\def\thefootnote{\fnsymbol{footnote}}\footnote[#1]{#2}\endgroup}
\def\dref#1{(\ref{#1})}
 \def\Cap{\bigcap} \def\Cup{\bigcup}
\def\be{\begin{equation}} \def\ee{\end{equation}}
\def\ba{\begin{array}} \def\ea{\end{array}} \def\bna{\begin{eqnarray}}
\def\ena{\end{eqnarray}}
 \def\bna{\begin{eqnarray}}
\def\ena{\end{eqnarray}} \def\dref#1{(\ref{#1})}
\begin{document}

\title{Improving on the Cut-Set Bound via Geometric Analysis of Typical Sets}


\author{Xiugang Wu, Ayfer \"{O}zg\"{u}r and
      Liang-Liang Xie

\thanks{This work was supported in part by the NSF CAREER award 1254786 and by the Center for Science of Information (CSoI), an NSF Science and Technology Center, under grant agreement CCF-0939370. This paper was presented in part at the 2015 IEEE International Symposium on Information Theory \cite{ISIT2015} and the 2016 International Zurich Seminar on Communications \cite{IZS2016}.}
\thanks{X. Wu and A. \"{O}zg\"{u}r are with the Department of Electrical Engineering, Stanford University, Stanford, CA 94305, USA (e-mail: x23wu@stanford.edu; aozgur@stanford.edu).}
\thanks{L.-L. Xie is with the Department
of Electrical and Computer Engineering, University of Waterloo, Waterloo, Ontario, Canada N2L 3G1 (e-mail: llxie@uwaterloo.ca).}
}

\maketitle

\begin{abstract}
We consider the discrete memoryless symmetric primitive relay channel, where, a source $X$ wants to send information to a destination $Y$ with the help of a relay $Z$ and the relay can communicate to the destination via an error-free digital link of rate $R_0$, while $Y$ and  $Z$ are conditionally independent and identically distributed given $X$. We develop two new upper bounds on the capacity of this channel that are tighter than existing bounds, including the celebrated cut-set bound. Our approach significantly deviates from the standard information-theoretic approach for proving upper bounds on the capacity of multi-user channels. We build on the blowing-up lemma to analyze the probabilistic geometric relations between the typical sets of the  $n$-letter random variables associated with a reliable code for communicating over this channel. These relations translate to new entropy inequalities between the $n$-letter random variables involved.

As an application of our bounds, we study an open question posed by (Cover, 1987), namely, what is the minimum needed $Z$-$Y$ link rate $R_0^*$ in order for the capacity of the relay channel to be equal to that of the broadcast cut. We consider the special case when the $X$-$Y$ and $X$-$Z$ links are both binary symmetric channels. Our tighter bounds on the capacity of the relay channel immediately translate to tighter lower bounds for $R_0^*$. More interestingly, we show that when $p\to 1/2$, $R_0^*\geq 0.1803$; even though the broadcast channel becomes completely noisy as $p\to 1/2$ and its capacity, and therefore the capacity of the relay channel, goes to zero, a strictly positive rate $R_0$ is required for the relay channel capacity to be equal to the broadcast bound. Existing upper bounds on the capacity of the relay channel, and the cut-set bound in particular, would rather imply $R_0^*\to 0$, while achievability schemes require $R_0^*\to 1$. We conjecture that $R_0^*\to 1$ as $p\to 1/2$.

\end{abstract}

\section{Introduction}\label{S:Introduction}

Characterizing the capacity of relay channels \cite{van71} has been a long-standing open problem in network information theory. The seminal work of Cover and El Gamal \cite{covelg79} has introduced two basic achievability schemes: Decode-and-Forward and Compress-and-Forward, and derived a general upper bound on the capacity of this channel, now known as the cut-set bound. Over the last decade, significant progress has been made on the achievability side: these schemes have been extended and unified to multi-relay networks \cite{XieKumar05}--\cite{WuXie14} and many  new relaying strategies have been discovered, such as Amplify-and-Forward, Quantize-Map-and-Forward, Compute-and-Forward, Noisy Network Coding, Hybrid Coding etc \cite{schein}--\cite{Hybrid}.  However, the progress on developing upper bounds that are tighter than the cut-set bound has been relatively limited. In particular, in most of the special cases where the capacity is known, the converse is given by the cut-set bound \cite{covelg79}, \cite{orthogonal}--\cite{deterministic}.

In general, however, the cut-set bound is known to be not tight. Specifically, consider the primitive relay channel depicted in Fig. \ref{F:primitive}, where the source's input $X $ is received by the relay $Z$ and the destination $Y$ through a channel  $p(y,z|x)$, and the relay $Z$ can communicate to the destination $Y$ via an error-free digital link of rate $R_0$. When $Y$ and $Z$ are conditionally independent given $X$, and $Y$ is a stochastically degraded version of $Z$, Zhang \cite{Zhang} uses the blowing-up lemma \cite{blowingup} to show that
the capacity can be strictly smaller than the cut-set bound in certain regimes of this channel.
However, Zhang's result does not provide any information regarding the gap or suggest a way to compute it. For a special case of the primitive relay channel where the noise for the $X$-$Y$ link is modulo additive and $Z$ is a corrupted version of this noise,  Aleksic, Razaghi and Yu characterize the capacity and show that it is strictly lower than the cut-set bound \cite{mod}. 
While this result provides an exact capacity characterization for a non-trivial special case, it builds strongly on the peculiarity of the channel model and in this respect its scope is more limited than Zhang's result.

\begin{figure}[hbt]
\centering
\includegraphics[width=0.3\textwidth]{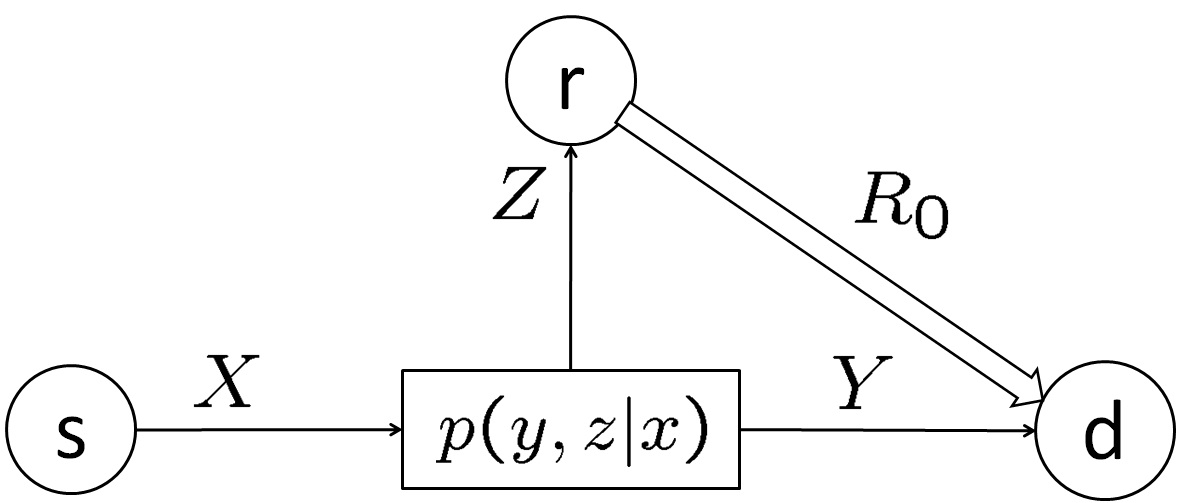}
\caption{Primitive relay channel.}
 \label{F:primitive}
\end{figure}

More recently, a new upper bound demonstrating an explicit gap to the cut-set bound  was developed by Xue \cite{Xue} for general primitive relay channels. 
Xue's bound relates the gap of the cut-set bound to the reliability function of the  $X$-$Y$ link. In particular, it builds on the blowing-up lemma to lower bound the successful decoding probability based on $Y$ only and then compare it to the reliability function of the single-user channel $X$-$Y$. Unlike Zhang's result, Xue's bound identifies an explicit gap to the cut-set bound that can be numerically computed.
However, it also has some obvious drawbacks over the cut-set bound. For example, it bounds only the information flow from the source and the relay to the destination (the multiple-access cut) and ignores the flow from the source to the relay and the destination (the broadcast cut). As such, it can be also looser than the cut-set bound since it does not capture the inherent trade-off between maximizing the information flows across these two different cuts of the network.

In this paper, we present two new upper bounds on the capacity of the primitive relay channel that are generally tighter than the cut-set bound. To simplify exposition, we concentrate on the symmetric case ($Y$ and $Z$ are conditionally independent and identically distributed given $X$) in this paper,  however our results can be extended to the asymmetric case via channel simulation arguments \cite{ISIT2016}. Just like Zhang and Xue, we critically build on the blowing up lemma, however we develop novel ways for utilizing it which lead to simpler arguments and tighter results. In general, proving an upper bound on the capacity of a multi-user channel involves dealing with entropy relations between the various $n$-letter random variables induced by the reliable code and the channel structure (together with using Fano's inequality). In order to prove the desired relations between the entropies of the $n$-letter random variables involved, in this paper we consider their $B$-length i.i.d. extensions (leading to length $B$ i.i.d. sequences of $n$-letter random variables). We then use the blowing-up lemma to analyze the geometry of the typical sets associated with these  $B$-length sequences. The key step in our development is to translate the (probabilistic) geometric relations between these typical sets into new entropy relations between the random variables involved. While both of our bounds are based on this same approach, they use different arguments to translate the geometry of the typical sets to entropy relations, and eventually lead to two different bounds on the capacity of the channel which do not include each other in general.

As an application of our bounds, we consider the binary symmetric channel, i.e., we assume both the $X$-$Y$ and $X$-$Z$ links are binary symmetric channels with crossover probability, say, $p$. We demonstrate that both our bounds perform strictly better than the cut-set bound and Xue's bound, and particularly, our second bound provides  considerable gain over these earlier bounds. We then use our bounds to investigate an open question posed by Cover \cite{Cover'sopenproblem} which asks for the  minimum required $Z$-$Y$ link rate $R^*_0$ in order for the capacity of the relay channel to be equal to the capacity of the broadcast cut, i.e. $\max_{p(x)}I(X;Y,Z)$. Obviously as $R^*_0$ becomes larger the capacity of the relay channel does approach the capacity of the broadcast cut. For example, in the binary symmetric case if $R_0=1$, the relay can convey its noisy observation as it is to the destinaton, therefore the broadcast cut capacity is trivially achievable. In this sense, Cover's open problem asks how smaller $R_0$ can be made than $1$ without decreasing the capacity of the relay channel. Interestingly, there is a striking dichotomy between the currently available upper and lower bounds for $R^*_0$ when $p\to 1/2$, i.e. when the broadcast channel becomes completely noisy and its capacity, and therefore the capacity of the relay channel, goes to zero. Achievability schemes, Hash-and-Forward in particular, require $R_0\to 1$ even though the capacity itself tends to zero. The cut-set bound and Xue's bound, on the other hand, require $R_0\to 0$ in order for the capacity to be equal to the broadcast capacity. By strengthening our second bound in this specific case, we show that $R^*_0\geq 0.1803$; indeed a strictly positive rate $R_0$ is needed in order to achieve the vanishing broadcast capacity. We conjecture that $R^*_0\to 1$ when $p\to 1/2$; to achieve the broadcast capacity the relay has no choice but to forward its observation, which is almost pure noise, as it is to the destination.

\subsection{Organization of the Paper}
The remainder of the paper is organized as follows. Sections \ref{S:ChannelModel} and \ref{S:existing} introduces the channel model and reviews the existing upper bounds for primitive relay channels, respectively.  Section \ref{S:Newbounds} discusses our new upper bounds for symmetric primitive relay channels in detail,
followed by a treatment on the binary symmetric channel case in Section \ref{S:BSC}. Sections \ref{S:sharpenproof}, \ref{S:novelproof} and \ref{S:furtherimprovement} are then dedicated to the proofs of our bounds. Finally, some concluding remarks are included in Section \ref{conclusion}.

\section{Channel Model}\label{S:ChannelModel}

Consider a primitive relay channel as depicted in Fig. \ref{F:primitive}. The source's input $X $ is received by the relay $Z$ and the destination $Y$ through a channel
$$(\Omega_X, p(y,z|x),  \Omega_Y \times \Omega_Z)$$
where $\Omega_X ,\Omega_Y $ and $\Omega_Z $ are finite sets denoting the alphabets of the source, the destination and the relay, respectively, and $p(y,z|x)$ is the channel transition probability; the relay $Z$ can communicate to the destination $Y$ via an error-free digital link of rate $R_0$.

For this channel, a code of rate $R$ for $n$ channel uses, denoted by $$(\mathcal{C}_{(n,R)}, f_n(z^n), g_n(y^n,f_n(z^n))), \mbox{ or simply, } (\mathcal{C}_{(n,R)}, f_n, g_n), $$
consists of the following:
\begin{enumerate}
  \item A codebook at the source $X$,
$$\mathcal{C}_{(n,R)}=\{x^n (m)\in\Omega_X^n, m\in \{1,2,\ldots, 2^{nR}\} \};$$
  \item An encoding function at the relay $Z$,
$$f_n: \Omega_Z^n \rightarrow \{1,2,\ldots, 2^{nR_0}\};$$
  \item A decoding function at the destination $Y$,
$$g_n: \Omega_Y^n  \times \{1,2,\ldots, 2^{nR_0}\}  \rightarrow \{1,2,\ldots, 2^{nR}\}.$$
\end{enumerate}

The average probability of error of the code is defined as
$$P_e^{(n)}=\mbox{Pr}(g_n(Y^n,f_n(Z^n)) \neq M ),$$
where the message $M$ is assumed to be uniformly drawn from the message set $ \{1,2,\ldots, 2^{nR}\}$. A rate $R$ is said to be achievable if there exists a sequence of codes
$$\{(\mathcal{C}_{(n,R)}, f_n, g_n)\}_{n=1}^{\infty}$$
such that the average probability of error $P_e^{(n)} \to 0$ as $n \to \infty$.

The capacity of the primitive relay channel is the supremum of all achievable rates, denoted by $C(R_0)$. Also, denote by $C_{XY},C_{XZ}$ and $C_{XYZ}$ the capacities of the channels
$X$-$Y$, $X$-$Z$, and $X$-$YZ$, respectively. Obviously, we have $C(0)=C_{XY}$ and $C(\infty)=C_{XYZ}$.

\subsection{Symmetric Primitive Relay Channel}  In this paper, we focus on the symmetric case of the primitive relay channel, that is, when $Y$ and $Z$ are conditionally independent and identically distributed given $X$. Formally, a primitive relay channel is said to be symmetric if
\begin{enumerate}
  \item $p(y,z|x)=p(y|x)p(z|x)$,
  \item $\Omega_Y = \Omega_Z:=\Omega$, and $\mbox{Pr}(Y=\omega|X=x)=\mbox{Pr}(Z=\omega|X=x)$ for any $\omega \in \Omega$ and $x \in \Omega_X$.
\end{enumerate}%
In this case, we also use $p(\omega|x)$ to denote the transition probability of both the $X$-$Y$ and $X$-$Z$ channels.

\section{Existing Upper Bounds for Primitive Relay Channels}\label{S:existing}

For general primitive relay channels, the well-known cut-set bound can be stated as follows.
\begin{proposition}[Cut-set Bound]\label{P:cutset}
For the general primitive relay channel, if a rate $R$ is achievable, then there exists some $p(x)$ such that
\begin{numcases}{}
 R   \leq I(X;Y,Z)\label{E: cut1} \\
R    \leq   I(X;Y)+R_0  \label{E: cut2}.
\end{numcases}
\end{proposition}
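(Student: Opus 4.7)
The plan is to follow the standard information-theoretic recipe for converse arguments, since the cut-set bound for the primitive relay channel is well established; the only care needed is to adapt it to the fact that the relay-to-destination link is a noiseless rate-$R_0$ pipe rather than a general channel. I would begin by fixing a sequence of codes $(\mathcal{C}_{(n,R)},f_n,g_n)$ with $P_e^{(n)}\to 0$ and a uniformly distributed message $M$, so that $H(M)=nR$. Fano's inequality applied to the decoder $g_n(Y^n,f_n(Z^n))$ yields $H(M\mid Y^n,f_n(Z^n))\le n\epsilon_n$ with $\epsilon_n\to 0$. This is the only use of reliability, and it feeds both cuts.

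For the broadcast cut \eqref{E: cut1}, first note that $H(M\mid Y^n,Z^n)\le H(M\mid Y^n,f_n(Z^n))\le n\epsilon_n$ since $f_n(Z^n)$ is a function of $Z^n$. Hence
\[
nR \;=\; H(M) \;\le\; I(M;Y^n,Z^n)+n\epsilon_n \;\le\; I(X^n;Y^n,Z^n)+n\epsilon_n,
\]
where the last step is data processing ($X^n$ is a function of $M$). Memorylessness of $p(y,z\mid x)$ then gives $I(X^n;Y^n,Z^n)\le\sum_{i=1}^n I(X_i;Y_i,Z_i)$ via the standard chain-rule/conditioning-reduces-entropy argument.

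For the multiple-access cut \eqref{E: cut2}, I would expand
\[
nR \;\le\; I(M;Y^n,f_n(Z^n))+n\epsilon_n \;\le\; I(X^n;Y^n,f_n(Z^n))+n\epsilon_n,
\]
and then split $I(X^n;Y^n,f_n(Z^n))=I(X^n;Y^n)+I(X^n;f_n(Z^n)\mid Y^n)$. The second term is at most $H(f_n(Z^n))\le nR_0$ because $f_n$ takes values in a set of size $2^{nR_0}$, and the first is bounded by $\sum_i I(X_i;Y_i)$ by memorylessness.

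To finish, I would single-letterize both bounds simultaneously by introducing a time-sharing random variable $Q$ uniform on $\{1,\dots,n\}$ and independent of everything else, and set $X=X_Q$, $Y=Y_Q$, $Z=Z_Q$; the channel symmetry ensures $(X,Y,Z)\sim p(x)p(y,z\mid x)$ for the induced $p(x)$, and both inequalities then become $R\le I(X;Y,Z)+\epsilon_n$ and $R\le I(X;Y)+R_0+\epsilon_n$ under a common input distribution. Letting $n\to\infty$ and using continuity/compactness to select an optimizing $p(x)$ completes the proof. No step here is genuinely hard; the only subtlety is remembering to use $H(f_n(Z^n))\le nR_0$ rather than trying to push $f_n(Z^n)$ through a per-letter decomposition, and to drop $f_n(Z^n)$ inside the conditional entropy on the broadcast side so the bound involves $Y^n,Z^n$ rather than $Y^n,f_n(Z^n)$.
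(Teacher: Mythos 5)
Your proof is correct and follows the same route the paper recalls in Section~\ref{SS:improvingcutset}: Fano's inequality, data processing for the broadcast cut, the chain-rule split $I(X^n;Y^n,I_n)=I(X^n;Y^n)+I(X^n;I_n|Y^n)$ with $I(X^n;I_n|Y^n)\le H(I_n)\le nR_0$ for the multiple-access cut, and time-sharing for single-letterization (the paper mentions fixed composition codes as an alternative, which it actually needs later for the coupling in Theorems~\ref{T:sharpen} and~\ref{T:novel}, but for Proposition~\ref{P:cutset} itself time-sharing suffices). One small wording slip: what makes $(X_Q,Y_Q,Z_Q)\sim p(x)p(y,z|x)$ is the memoryless, stationary structure of the channel, not ``channel symmetry'' --- the proposition is stated for the general (not necessarily symmetric) primitive relay channel.
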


Inequalities \dref{E: cut1} and \dref{E: cut2} are generally known as the broadcast bound and multiple-access bound, since they correspond to the broadcast channel $X$-$YZ$ and
multiple-access channel $XZ$-$Y$, respectively.

Note that although the cut-set bound in \dref{E: cut1}--\dref{E: cut2} is tight for most of the cases where the capacity is determined  \cite{covelg79}, \cite{orthogonal}--\cite{deterministic}, it is known to be not tight in general.  The first counterexample was given by Zhang in \cite{Zhang}, where he considered a class of stochastically degraded primitive relay channels. Using the blowing-up lemma \cite{blowingup}, he showed that the capacity of the channel can be strictly smaller than the cut-set bound.



\begin{proposition}[Zhang \cite{Zhang}]
\label{P:Zhang}
For a primitive relay channel, if $Y$ and $Z$ are conditionally independent given $X$, and $Y$ is a stochastically degraded version of $Z$ (i.e. there exists some $q(y|z)$ such that $p(y|x)=\sum_{z}p(z|x)q(y|z)$),  then the capacity $C(R_0)$ of the channel satisfies
\begin{align}
C(R_0)< C_{XY}+R_0 \label{E:zhang}
\end{align}
when
\begin{align}
R_0>\max_{p(x):I(X;Y)=C_{XY}} I(X;Z)-C_{XY}. \label{E:R_0}
\end{align}
\end{proposition}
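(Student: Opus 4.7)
The plan is to argue by contradiction: assume $C(R_0)\geq C_{XY}+R_0$, so that there is a sequence of codes with $R_n\to C_{XY}+R_0$ and $P_e^{(n)}\to 0$. Fano's inequality combined with $H(f_n(Z^n))\leq nR_0$ yields $nR_n\leq I(X^n;Y^n)+nR_0+n\epsilon_n$, so $\frac{1}{n}I(X^n;Y^n)\to C_{XY}$. Since the $X$-$Y$ channel is memoryless and $I(P;Y)$ is concave in $P$, $\frac{1}{n}I(X^n;Y^n)\leq I(\bar P_X;Y)$, where $\bar P_X:=\frac{1}{n}\sum_{i=1}^n P_{X_i}$; passing to a subsequence, $\bar P_X\to P^\star$ for some $P^\star$ achieving $C_{XY}$ on the $X$-$Y$ channel.

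The key step is to show, via the blowing-up lemma and the stochastic degradation assumption, that $M$ can be decoded reliably from $Z^n$ alone. Since $p(y|x)=\sum_z p(z|x)q(y|z)$ for some $q$, the decoding regions $\DD_{m,k}\subseteq\Omega_Y^n$ of the original decoder can be pulled back through $q^n$ to subsets of $\Omega_Z^n$, for instance $\EE_{m,k}:=\{z^n:f_n(z^n)=k,\ q^n(\DD_{m,k}\mid z^n)\geq\tfrac{1}{2}\}$. Markov's inequality applied to $q^n(\DD_{m,k}\mid z^n)$ together with the original reliability ensures that the $\EE_{m,k}$'s collectively capture a positive constant fraction of probability under $p(z^n\mid x^n(m))$. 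The blowing-up lemma then amplifies this to $1-o(1)$ by inflating each $\EE_{m,k}$ by a sublinear Hamming radius $\ell_n$, while the log-cardinality grows by only $o(n)$. A standard Fano-type argument on the inflated sets then yields $H(M\mid Z^n)=o(n)$, i.e., reliable decoding of $M$ from $Z^n$ alone.

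Fano applied to this $Z^n$-only decoder gives $nR_n\leq I(X^n;Z^n)+n\epsilon_n'$; single-letterization and continuity as before yield $\limsup_n\frac{1}{n}I(X^n;Z^n)\leq I(P^\star;Z)\leq \max_{p(x):\,I(X;Y)=C_{XY}} I(X;Z)$. Combined with $R_n\to C_{XY}+R_0$, this gives $C_{XY}+R_0\leq \max_{p(x):\,I(X;Y)=C_{XY}} I(X;Z)$, contradicting \dref{E:R_0}. The main obstacle is the middle step: because the degradation is only \emph{stochastic} (so $Y^n$ is not a deterministic function of $Z^n$), a naive data-processing argument recovers only the broadcast-cut bound $R\leq I(X;Y,Z)$; the blowing-up lemma is exactly the tool needed to amplify the probability of the pulled-back sets from a positive constant up to $1-o(1)$ while keeping their cardinality inflation sub-exponential, and it is this step where all the work lies.
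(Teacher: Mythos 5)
The paper does not prove this proposition; it states it as prior work and cites Zhang's 1988 paper, so there is no proof in the paper to compare against. Assessed on its own merits, your proposal contains a genuine gap at exactly the step you identify as ``where all the work lies.''

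You want to show that $M$ is reliably decodable from $Z^n$ alone, by pulling the decoding regions $\DD_{m,k}$ back through $q^n$ to $\EE_{m,k}\subseteq\Omega_Z^n$ and arguing via Markov's inequality that the $\EE_{m,k}$'s have nonvanishing probability under $p(\cdot\mid x^n(m))$. The Markov step does not deliver this. The degradation relation $p(y|x)=\sum_z p(z|x)q(y|z)$ lets you realize $Y^n$ as $q^n$ applied to a \emph{virtual} $\tilde Z^n\sim p(\cdot\mid x^n)$, but since $Y^n\perp Z^n\mid X^n$ in the actual channel, this $\tilde Z^n$ is conditionally \emph{independent} of the real relay observation $Z^n$. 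The original reliability therefore unfolds to
\begin{equation*}
\E_{Z^n,\tilde Z^n\mid x^n(m)}\bigl[\,q^n\bigl(\DD_{m,f_n(Z^n)}\,\big|\,\tilde Z^n\bigr)\,\bigr]\;\geq\;1-\epsilon,
\end{equation*}
an expectation over \emph{two independent copies}. Markov's inequality then controls the off-diagonal event $q^n(\DD_{m,f_n(Z^n)}\mid\tilde Z^n)\geq\tfrac12$, whereas your $\EE_{m,k}$ is defined by the diagonal event $q^n(\DD_{m,f_n(Z^n)}\mid Z^n)\geq\tfrac12$. There is no general way to transfer mass from the off-diagonal to the diagonal of a product distribution, so the asserted lower bound on $\sum_k p(\EE_{m,k}\mid x^n(m))$ does not follow, and the subsequent blowing-up amplification has nothing to amplify.

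Moreover, the intermediate claim $H(M\mid Z^n)=o(n)$ is not merely unproved --- it is actually incompatible with your own Step 2 in the regime where \dref{E:R_0} holds. Under the contradiction hypothesis $R_n\to C_{XY}+R_0$, Step 2 gives $\bar P_X\to P^\star$ with $I(P^\star;Y)=C_{XY}$, and hence $\tfrac1n I(X^n;Z^n)\leq I(\bar P_X;Z)\to I(P^\star;Z)\leq\max_{p(x):\,I(X;Y)=C_{XY}}I(X;Z)$, which under \dref{E:R_0} is strictly less than $C_{XY}+R_0$. Therefore $H(M\mid Z^n)=nR_n-I(M;Z^n)=\Theta(n)$, the opposite of what you want to prove. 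Any correct derivation from the same assumption would have to yield $H(M\mid Z^n)=\Theta(n)$, so the strategy of proving ``$M$ decodable from $Z^n$'' cannot be repaired; the contradiction has to be extracted along a different path.

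The productive use of the blowing-up lemma here --- in Zhang's original argument, in Xue's bound (Proposition \ref{P:Xue}), and in this paper's Theorem \ref{T:sharpen} via Lemma \ref{L:keylemmafortheorem1} --- is to bound $H(I_n\mid Y^n)$ rather than to decode $M$ from $Z^n$. One writes $nR\leq I(X^n;Y^n)+H(I_n\mid Y^n)-H(I_n\mid X^n)+o(n)$, sets $H(I_n\mid X^n)=na_n$, and shows that for typical $x^n$ the preimage $f_n^{-1}(i_n)$ has conditional probability $\gtrsim 2^{-na_n}$; blowing it up by a Hamming radius proportional to $n\sqrt{a_n}$ captures almost all of $p(\cdot\mid x^n)$, hence almost all of the virtual $\tilde Z^n$, so that $Y^n$ (a degraded observation of $\tilde Z^n$) localizes $I_n$ and $H(I_n\mid Y^n)$ is bounded in terms of $a_n$. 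This places the argument on the relay transmission $I_n$, not the message $M$, and avoids the diagonalization obstruction.
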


In the regime where $R_0$ satisfies both \dref{E:R_0} and the condition
\begin{align*}
C_{XY}+ R_0 < \max_{p(x):I(X;Y)=C_{XY}} I(X;Y,Z),
\end{align*}
the cut-set bound becomes $C_{XY}+R_0$, however the strictness of the inequality in \dref{E:zhang} implies that the cut-set bound  is loose with some positive gap. Roughly speaking this corresponds to the regime where the cut-set bound is limited by the multiple-access bound but the source-relay channel is not strong enough to enable the relay to trivially decode the transmitted message. However, note that Zhang's result does not provide any information about how large the gap can be.

Recently, a new upper bound demonstrating an explicit gap to the cut-set bound  was developed by Xue \cite{Xue}.  This new bound was first established for the symmetric case, and then extended to the general asymmetric case employing channel simulation theory \cite{sim1}--\cite{sim2}. The proof uses a generalized version of the blowing-up lemma \cite{Xue} to characterize the successful decoding probability based only on $Y$ and then compares it with the reliability function for the single-user channel $X$-$Y$.  Xue's bound specialized for the symmetric case is given as follows.
\begin{proposition}[Xue's Bound]
\label{P:Xue}
For the symmetric primitive relay channel, if a rate $R$ is achievable, then there exists some $a \in [0, R_0]$ such that
\begin{numcases}{}
 R\leq C_{XY} + R_0  - a  \label{E:Xue1} \\
 E(R)     \leq  H(\sqrt a)+ \sqrt a \log |\Omega| \label{E:Xue2}
\end{numcases}
where $H(r)$ is the binary entropy function, and $E(R)$ is the reliability function for the $X$-$Y$ link defined as
\begin{align}
E(R):=\max_{\rho \in [-1,0)} (-\rho R+\min_{p(x)}E_0(\rho, p(x)) ) 
\end{align}
with
\begin{align*}
E_0 (\rho, p(x)):= -\log \left[\sum_{y} \left( \sum_x p(x)p(y|x)^{\frac{1}{1+\rho}}  \right)^{1+\rho}   \right].
\end{align*}
\end{proposition}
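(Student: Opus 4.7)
The plan is to extract from a reliable rate-$R$ relay code a decoder that uses only $Y^n$, and to compare its correct-decoding probability against Arimoto's strong converse, whose exponent is exactly the $E(R)$ that appears in the proposition. The blowing-up lemma provides the bridge between the two.

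By the cut-set bound of Proposition~\ref{P:cutset}, any achievable $R$ satisfies $R\leq C_{XY}+R_0$, so the slack parameter $a:=C_{XY}+R_0-R$ lies in $[0,R_0]$ (capped at $R_0$ if $R<C_{XY}$) and satisfies \dref{E:Xue1} by construction. What remains is \dref{E:Xue2}, which in light of the Arimoto interpretation asserts that $E(R)$—a universal upper bound on the correct-decoding exponent of any rate-$R$ single-user code over the $X$-$Y$ channel—is at most the normalized log-volume of the Hamming ball of radius $\sqrt{na}$ in $\Omega^n$.

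From a code $(\mathcal{C}_{(n,R)},f_n,g_n)$ with $P_e^{(n)}\to 0$ I induce a $Y^n$-only decoder as follows. For each message $m$, pigeonhole over the $2^{nR_0}$ possible values of $f_n(Z^n)$ to pick a "typical" relay index $j^*(m)$, and set $D_m:=\{y^n:g_n(y^n,j^*(m))=m\}$; by the pigeonhole choice, $D_m$ carries at least a $2^{-nR_0}$-fraction of the conditional mass under input $x^n(m)$. Blowing each $D_m$ up in Hamming distance by $\sqrt{na}$ yields sets $\tilde D_m$ whose conditional mass $\Pr(Y^n\in\tilde D_m\mid X^n=x^n(m))\to 1$, while the blown-up list decoder reporting $\{m:y^n\in\tilde D_m\}$ has pointwise list size at most the Hamming-ball cardinality $2^{n(H(\sqrt a)+\sqrt a\log|\Omega|)}$, once the $(R_0-a)$-bit cut-set slack in \dref{E:Xue1} is used to absorb the raw $2^{nR_0}$ relay overhead.

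Finally, randomizing within the list turns this into a rate-$R$ unique decoder for the $X$-$Y$ channel whose correct-decoding probability is at least $(1-o(1))\,2^{-n(H(\sqrt a)+\sqrt a\log|\Omega|)}$. Arimoto's strong converse, whose exponent is precisely the $E(R)$ of the proposition, upper bounds this same probability by $2^{-nE(R)+o(n)}$. Comparing the two and letting $n\to\infty$ gives \dref{E:Xue2}. The main obstacle is the accounting in the middle step: one has to combine the blowing-up lemma with the cut-set slack \dref{E:Xue1} so that the effective per-$y^n$ list size becomes a Hamming ball of radius $\sqrt{na}$ rather than the naive $2^{nR_0}$ relay list—this is precisely the role of the "generalized blowing-up lemma" cited in the excerpt, and it is the technical heart of the proof.
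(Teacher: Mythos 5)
The paper does not prove Proposition~\ref{P:Xue}; it states it and cites Xue \cite{Xue}, giving only a one-sentence description of the method (blowing-up applied to the successful-decoding probability of a $Y$-only decoder, compared against the reliability function). Your high-level plan---build a $Y$-only decoder, use blowing-up, compare with Arimoto---matches that description, but the implementation has a gap that would make the argument prove a strictly weaker statement.

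The core problem is the choice of $a$. You fix $a:=C_{XY}+R_0-R$ so that \dref{E:Xue1} holds by construction, and you then need \dref{E:Xue2} for \emph{this} $a$. But the pigeonhole set $D_m$ you construct has conditional mass only $\geq 2^{-nR_0}$, and the generalized blowing-up lemma (Lemma~\ref{L:blowingup}) requires a radius on the order of $n\sqrt{R_0\ln 2/2}$---not $n\sqrt{a}$---before the blown-up set captures mass $\to 1$. When $a<R_0$ (the interesting regime, $R>C_{XY}$), a radius $\propto n\sqrt{a}$ is simply too small, and the lemma gives no useful conclusion. Your remark about using ``the $(R_0-a)$-bit cut-set slack to absorb the raw $2^{nR_0}$ overhead'' acknowledges this tension but is not an argument; the two quantities live in different places (an exponent in a probability vs.\ a rate difference) and there is no mechanism by which one cancels the other. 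A secondary gap: because $j^*(m)$ varies with $m$, the regions $D_m=\{y^n:g_n(y^n,j^*(m))=m\}$ do not partition $\Omega_Y^n$, so the size of the list $\{m:y^n\in\tilde D_m\}$ is not controlled by a single Hamming-ball volume.

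The fix is to let the code itself dictate $a$. Setting $a=\frac{1}{n}H(I_n|X^n)$, the Fano chain $nR\leq I(X^n;Y^n)+H(I_n|Y^n)-H(I_n|X^n)+o(n)\leq n(C_{XY}+R_0-a)+o(n)$ delivers \dref{E:Xue1}, and---crucially---the same $a$ gives the correct mass lower bound for blowing-up: for a jointly typical $(\mathbf x,\mathbf i)$ one has $\Pr(\mathbf Z\in f^{-1}(\mathbf i)\mid\mathbf x)\geq 2^{-nB(a+o(1))}$, so blowing up $f^{-1}(\mathbf i)$ by radius $\approx nB\sqrt{a\ln 2/2}$ captures $\mathbf Y$ with probability $\to 1$ (this is exactly Lemma~\ref{L:blowupsimilar}). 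Since the sets $f^{-1}(\mathbf i)$ \emph{do} partition $\Omega^{nB}$, the number of indices $\mathbf i$ whose preimage meets $\text{Ball}(\mathbf Y,nB\sqrt{a\ln2/2})$ is at most the ball volume $\approx 2^{nB(H(\sqrt a)+\sqrt a\log|\Omega|)}$; a $Y$-only decoder that guesses $\hat{\mathbf i}$ uniformly from this set and applies $g_n(\cdot,\hat{\mathbf i})$ then decodes correctly with probability at least $(1-o(1))\cdot 2^{-nB(H(\sqrt a)+\sqrt a\log|\Omega|)}$, and the comparison with Arimoto's strong converse yields \dref{E:Xue2}. Note this is precisely the point the paper improves on in Theorem~\ref{T:sharpen}: replacing the decoder-probability comparison with a direct bound on $H(I_n|Y^n)$ tightens the right-hand side from $H(\sqrt a)+\sqrt a\log|\Omega|$ to $H(\sqrt{a\ln2/2})+\sqrt{a\ln2/2}\log(|\Omega|-1)-a$.
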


It can be seen that Xue's bound modifies the original multiple-access bound \dref{E: cut2} by introducing an additional term ``$-a$'' in  \dref{E:Xue1},
where ``$a$'' is a non-negative auxiliary variable subject to the constraint \dref{E:Xue2}. As noted in \cite{Xue}, this implies that the capacity of the symmetric primitive relay channel is  strictly less than $C_{XY}+R_0$ for any $R_0>0$. To see this, consider any rate $R>C_{XY}$. Then it follows from \cite{Arimoto} that $E(R)>0$, which forces $a$ to be strictly positive in light of \dref{E:Xue2}, and thus  $R<C_{XY}+R_0$ by \dref{E:Xue1}. Since $a$ here is numerically computable, Xue's bound in fact improves over Zhang's result in the sense that it provides a lower bound to the gap of the cut-set bound. 


Nevertheless, Xue's bound also has two obvious drawbacks: i) compared to the cut-set bound, it lacks a constraint on the broadcast cut and therefore decouples the information flow over the broadcast and multiple-access cuts of the channel (note that his result can be always amended by including the bound $R\leq C_{XYZ}$, however with such an amendment this bound would have no coupling with those in \eqref{E:Xue1} and \eqref{E:Xue2}, which can be potentially coupled through $p(x)$ as done in the cut-set bound);
ii) there is no coupling between \eqref{E:Xue1} and \eqref{E:Xue2} which can also  benefit from a coupling through the input distribution $p(x)$. Our bounds presented in the next section overcome these drawbacks  and further improve on  Xue's bound. They are also structurally different from Xue's bound as they  involve only basic information measures and do not involve the reliability function.


\section{New Upper Bounds for Symmetric Primitive Relay Channels}\label{S:Newbounds}
This section presents two new upper bounds on the capacity of symmetric primitive relay channels that are generally tighter than the cut-set bound. Before stating our main theorems, in the following subsection we first explain the relation of our new bounds to the cut-set bound.

\subsection{Improving on the Cut-Set Bound}\label{SS:improvingcutset}
Let the relay's transmission be denoted by $I_n=f_n(Z^n)$. Let us recall the derivation of the cut-set bound. The first step in deriving \dref{E: cut1}--\dref{E: cut2} is to use Fano's inequality to conclude that
$$
nR\leq I(X^n;Y^n,I_n) +n\epsilon.
$$
We can then either proceed as
\begin{align*}
nR&\leq I(X^n;Y^n,I_n) +n\epsilon\\
&\leq I(X^n;Y^n,Z^n) +n\epsilon\\
&\leq nI(X;Y,Z) +n\epsilon
\end{align*}
to obtain the broadcast bound \dref{E: cut1}, where the second inequality follows from the data processing inequality and the single letterization in the third line can be either done with a time-sharing or fixed composition code argument\footnote{Note that the time-sharing or the fixed composition code argument for single letterization is needed to preserve the coupling to the second inequality in \eqref{eq:int2} via $X$.}; or we can proceed  as
\begin{align}
nR&\leq I(X^n;Y^n,I_n) +n\epsilon\nonumber\\
&\leq I(X^n;Y^n)+H(I_n|Y^n)-H(I_n|X^n) +n\epsilon\label{eq:int}\\
&\leq nI(X;Y)+nR_0 +n\epsilon\label{eq:int2}
\end{align}
to obtain the multiple-access bound \dref{E: cut2}, where to obtain the last inequality we upper bound $H(I_n|Y^n)$ by $nR_0$ and use the fact that $H(I_n|X^n)$ is non-negative.

Instead of simply lower bounding $H(I_n|X^n)$ by $0$ in the last step, our bounds presented in the next two subsections are based on letting $H(I_n|X^n)=na_n$ and proving a third inequality that forces $a_n$ to be strictly non-zero. This new inequality is based on capturing the tension between how large $H(I_n|Y^n)$ and how small $H(I_n|X^n)$ can be. Intuitively, it is easy to see this tension. Specifically, suppose $H(I_n|X^n)\approx 0$, then roughly speaking, this implies that given the transmitted codeword $X^n$, there is no ambiguity about $I_n$, or equivalently, all the $Z^n$ sequences jointly typical with $X^n$ are mapped to the same $I_n$. Since $Y^n$ and $Z^n$ are statistically equivalent given $X^n$ (they share the same typical set given $X^n$) this would further imply that $I_n$ can be determined based on $Y^n$, and therefore $H(I_n|Y^n)\approx 0$. This would force the rate to be even smaller than $I(X;Y)$.

Equivalently, rewriting \eqref{eq:int} and \eqref{eq:int2} as
\begin{equation}\label{eq:difI}
R \leq nI(X;Y)+I(I_n;X^n)-I(I_n;Y^n) +n\epsilon,
\end{equation}
our approach can be thought of as fixing the first $n$-letter mutual information to be $I(I_n;X^n)\leq n(R_0-a_n)$ and controlling the second $n$-letter mutual information. In doing so, we only build on the Markov chain structure $I_n \leftrightarrow Z^n \leftrightarrow X^n \leftrightarrow Y^n$  and the fact that $Z^n$ and $Y^n$ are conditionally i.i.d. given $X^n$. In particular, we do not employ the fact that these random variables are associated with a reliable code. Note that this approach of directly studying the relation between the $n$-letter information measures involved significantly deviates from the standard approach in network information theory for developing converses, where one usually seeks to single letterize such $n$-letter expressions.

More precisely, we proceed as follows. We fix $H(I_n|X^n)=na_n$ and leave this term as it is in \eqref{eq:int}, yielding
$$
R\leq I(X;Y)+R_0 -a_n+\epsilon.
$$
We then prove the following two upper bounds on $I(I_n;X^n)-I(I_n;Y^n)$ in terms of $a_n$:
\begin{equation}\label{eq:diffI1}
I(I_n;X^n)-I(I_n;Y^n)\leq n \left[H\left(\sqrt{\frac{a_n \ln 2}{2}}\right) + \sqrt{\frac{a_n \ln 2}{2}}\log (|\Omega|-1)-a_n\right], 
\end{equation}
where $H(r)$ is the binary entropy function; and 
\begin{equation}\label{eq:diffI2}
I(I_n;X^n)-I(I_n;Y^n)\leq n  \Delta\left(p(x),\sqrt{\frac{a_n \ln 2}{2}}\right)  ,
\end{equation}
where $\Delta\left(p(x),\sqrt{\frac{a_n \ln 2}{2}}\right)$ is a quantity that depends on the input distribution $p(x)$ and $a_n$, which we will formally define in Section~\ref{SS:Newbound2}. These two bounds are obtained via bounding $H(I_n|Y^n)$ and $H(Y^n|I_n)$ in terms of $a_n$ respectively, and combined with \eqref{eq:difI} they immediately yield new constraints on $R$.


The heart of our argument is therefore to prove the two bounds in \eqref{eq:diffI1} and \eqref{eq:diffI2}. To accomplish this, we suggest a new set of proof techniques. In particular, we look at the $B$-letter i.i.d. extensions of the random variables $X^n, Y^n, Z^n$ and $I_n$ and study the geometric relations between their typical sets by using the generalized blowing-up lemma. While we use this same general approach for obtaining \eqref{eq:diffI1} and \eqref{eq:diffI2}, we build on different arguments in each case, which eventually leads to two different bounds on the capacity of the relay channel that do not include each other in general.


\subsection{Via Bounding $H(I_n|Y^n)$}\label{SS:Newbounds}
Our first bound builds on bounding $H(I_n|Y^n)$ and it is given by the following theorem.

\begin{theorem}\label{T:sharpen}
For the symmetric primitive relay channel, if a rate $R$ is achievable, then there exists some $p(x)$  and
\begin{align}\label{E:newconstraint_a}
a \in \left[0, \min\left\{R_0, H(Z|X), \frac{2}{\ln 2 } \left( \frac{|\Omega|-1}{|\Omega|} \right)^2\right\} \right]
\end{align}
such that
\begin{numcases}{}
R \leq I(X;Y,Z)   \label{E:sharpened1}\\
 R \leq I(X;Y) + R_0- a \label{E:sharpened2} \\
R  \leq I(X;Y) + H\left(\sqrt{\frac{a \ln 2}{2}}\right) + \sqrt{\frac{a \ln 2}{2}}\log (|\Omega|-1) -a.
\label{E:sharpened3}
\end{numcases}
\end{theorem}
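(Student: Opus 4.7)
The plan is to work from the framework developed in Section~\ref{SS:improvingcutset}. Starting from Fano's inequality $nR \le I(X^n;Y^n,I_n) + n\epsilon_n$, I introduce the parameter $a_n := H(I_n|X^n)/n$. The inequality \eqref{E:sharpened1} is immediate after applying the data processing inequality to replace $I_n$ by $Z^n$ and single-letterizing via a fixed-composition code argument, which fixes a common input type $p(x)$ that will serve all three inequalities. The inequality \eqref{E:sharpened2} follows after expanding $I(X^n;Y^n,I_n) = I(X^n;Y^n) + H(I_n|Y^n) - H(I_n|X^n)$, upper bounding $H(I_n|Y^n) \le nR_0$, and retaining $H(I_n|X^n) = na_n$, and then single-letterizing over the same fixed composition.

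The heart of the proof is the third inequality, which requires the sharper bound
\[
H(I_n|Y^n) \;\le\; n\bigl[H(\sqrt{a_n \ln 2/2}) + \sqrt{a_n \ln 2/2}\,\log(|\Omega|-1)\bigr].
\]
The intuition is that if $H(I_n|X^n) = na_n$ is small, then given $X^n$ the sequence $Z^n$ lies with high probability in the union of a small list of level sets $f_n^{-1}(i)$; since $Y^n$ and $Z^n$ are conditionally i.i.d.\ given $X^n$, the sequence $Y^n$ must be close in Hamming distance to these level sets, so only a short list of $I_n$ values is consistent with a typical $Y^n$.

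To make this precise I would pass to the $B$-fold i.i.d.\ extension, producing block random variables $\bar X = (X^n_{(1)}, \dots, X^n_{(B)})$ and analogously $\bar Y, \bar Z, \bar I$. For a typical $\bar x$, each level set $A_{\bar i}(\bar x) := \{\bar z \in \mathcal{T}_{\bar Z|\bar x} : f(\bar z) = \bar i\}$ has, averaged over $\bar i$, conditional measure roughly $2^{-Bna_n}$. The generalized blowing-up lemma then forces the Hamming $r_B$-neighborhood of $A_{\bar i}(\bar x)$ to cover almost all of $\mathcal{T}_{\bar Z|\bar x}$ once $r_B/(Bn) \to \sqrt{a_n \ln 2 / 2}$, which is exactly the scaling at which a Pinsker-type inequality calibrates the blowing-up radius to the small-set log-measure $Bna_n$. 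Because the $X$-$Y$ and $X$-$Z$ channels are identical, $\mathcal{T}_{\bar Y|\bar x} = \mathcal{T}_{\bar Z|\bar x}$, so for a typical pair $(\bar x, \bar y)$ every plausible $\bar i$ corresponds to some $\bar z$ in the Hamming ball of radius $r_B$ around $\bar y$. The number of such $\bar i$ is therefore at most the Hamming ball size $\binom{Bn}{r_B}(|\Omega|-1)^{r_B}$, giving $H(\bar I|\bar Y) \le Bn\bigl[H(r_B/(Bn)) + (r_B/(Bn))\log(|\Omega|-1)\bigr] + o(Bn)$. Dividing by $B$ and using the additivity $H(\bar I|\bar Y) = BH(I_n|Y^n)$ yields the required single-letter bound; combining with \eqref{eq:int2} and the common fixed composition delivers \eqref{E:sharpened3}.

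The main obstacle I anticipate is the short-list step that converts a geometric blowing-up statement into a bound on $H(\bar I|\bar Y)$. The generalized blowing-up lemma requires a lower bound on the conditional probability of $A_{\bar i}(\bar x)$ that holds uniformly for a $1-o(1)$ fraction of pairs $(\bar x, \bar i)$, but the atypical tail must be handled separately, e.g.\ by truncating $a_n$ away from the regime where $\sqrt{a_n \ln 2/2}$ exceeds $(|\Omega|-1)/|\Omega|$; this is precisely the origin of the upper endpoint in \eqref{E:newconstraint_a}, since beyond that threshold the Hamming ball already fills most of $\Omega^{Bn}$ and the trivial bound $H(I_n|Y^n) \le nR_0$ takes over. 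A secondary subtlety is transferring the geometry from $\bar Z$ to $\bar Y$: these are only conditionally i.i.d.\ given $\bar X$, so the transfer must be carried out at the level of conditional typical sets rather than pointwise, and one must verify that the $B\to\infty$ limit of the short-list bound indeed collapses to a clean single-letter expression.
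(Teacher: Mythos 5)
Your proposal tracks the paper's own proof essentially step by step: a fixed-composition reduction to obtain a common $p(x)$ across the three constraints, introduction of $a_n = H(I_n|X^n)/n$, passage to the $B$-fold i.i.d.\ extension, application of the generalized blowing-up lemma to the preimages $f^{-1}(\mathbf i)$ at radius $\sim nB\sqrt{a_n\ln 2/2}$, transfer from $\mathbf Z$ to $\mathbf Y$ via conditional i.i.d.\ given $\mathbf X$, and a short-list count of consistent $\mathbf i$'s by the Hamming-ball volume exponent $H(r)+r\log(|\Omega|-1)$. The two points you flag as "obstacles" are exactly what the paper handles via an indicator variable $E$ on the blown-up event and via the equivalence argument showing that truncating $a$ at $\frac{2}{\ln 2}\left(\frac{|\Omega|-1}{|\Omega|}\right)^2$ loses nothing because $V(\cdot)$ saturates at $\log|\Omega|$ beyond that point; so your plan, once these are made precise in the manner you describe, coincides with the paper's proof.
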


Clearly our bound in Theorem \ref{T:sharpen}  implies the cut-set bound in Proposition \ref{P:cutset}. In fact, it can be checked that our bound is \emph{strictly} tighter than the cut-set bound for any $R_0 >0$. For this, note that \dref{E:sharpened2} will reduce to \dref{E: cut2} only if $a=0$; however, if $a=0$ then \dref{E:sharpened3} will constrain $R$ by the rate $I(X;Y)$ which is lower than the cut-set bound.

Our bound is also generally tighter than Xue's bound, and since Xue's bound implies Zhang's result \cite{Zhang}, so does our bound.
In particular, our bound overcomes the drawbacks of Xue's bound that are observed in Section \ref{S:existing}, and furthermore tightens the constraint \dref{E:Xue2} on $a$  to \dref{E:sharpened3}. By contrasting \dref{E:sharpened1}--\dref{E:sharpened3} to \dref{E:Xue1}--\dref{E:Xue2}, we note the following improvements:
\begin{enumerate}
  \item Our bound introduces the missing constraint on the broadcast cut \dref{E:sharpened1} and couples it with \dref{E:sharpened2}--\dref{E:sharpened3} through the input distribution $p(x)$.
  \item The term $C_{XY}$ in  \dref{E:Xue1} is replaced by $I(X;Y)$ in \dref{E:sharpened2}. Since the distribution $p(x)$ in Theorem \ref{T:sharpen} has to be chosen to satisfy all the constraints \dref{E:sharpened1}--\dref{E:sharpened3},  it may not necessarily maximize $I(X;Y)$, and thus \dref{E:sharpened2} is in general stricter than \dref{E:Xue1}.
  \item The constraint \dref{E:Xue2} on $a$ is replaced by \dref{E:sharpened3}. To show that the latter is stricter, rewrite it as
  \begin{align}\label{E:rewrite}
  R -I(X;Y) \leq H\left(\sqrt{\frac{a \ln 2}{2}}\right) + \sqrt{\frac{a \ln 2}{2}}\log (|\Omega|-1) -a.
  \end{align}
Note that  \dref{E:Xue2}  is active only if $R>C_{XY}$. In Appendix \ref{A:reliability} we show that in this case the L.H.S. (left-hand-side) of \dref{E:rewrite} is generally greater than that of \dref{E:Xue2}, while the the R.H.S. (right-hand-side) of \dref{E:rewrite} is obviously less than that of \dref{E:Xue2} for any $a>0$. Therefore, the constraint \dref{E:sharpened3} is also stricter than \dref{E:Xue2}.
\end{enumerate}

A simple example demonstrating the above improvements is given in Appendix \ref{A:example}. The improvements 1) and 2) come from fixed composition code analysis \cite{Gallager} (or alternatively a time-sharing argument), while the key to improvement 3), which accounts for the structural change from \dref{E:Xue2} to \dref{E:sharpened3}, is a new argument for bounding $H(I_n|Y^n)$ instead of analyzing the successful decoding probability based only on $Y$ as done in \cite{Xue}. 

\subsection{Via Bounding $H(Y^n|I_n)$}\label{SS:Newbound2}
 
Before presenting our second upper bound, we first define a parameter that will be used in stating the theorem.  

\begin{definition}\label{D:Delta}
Given a fixed channel transition probability $p(\omega|x)$, for any $p(x)$ and $d\geq 0$, $\Delta(p(x),d)$ is defined as
\begin{align}
\Delta(p(x),d):=&\max_{\tilde p(\omega|x)}  H(\tilde p(\omega|x) |p(x))+D( \tilde p(\omega|x) || p(\omega|x) |p(x))- H(p(\omega|x) |p(x))\label{E:objective_def}  \\
\text{s.t.~~~~} & \frac{1}{2}\sum_{(x,\omega)}|p(x)\tilde p(\omega|x)-p(x)  p(\omega|x)|\leq  d.\label{E:constraint_def}
\end{align}
\end{definition}

In the above, we adopt the notation in \cite{Csiszarbook}. Specifically, $D( \tilde p(\omega|x) || p(\omega|x) |p(x))$ is the conditional relative entropy defined as
\begin{align}
D( \tilde p(\omega|x)||p(\omega|x)|p(x)):=  \sum_{(x,\omega) } p(x)\tilde p(\omega|x) \log \frac{\tilde p(\omega|x) }{p(\omega|x)} ,\label{E:conditionalrelativeentropy}
\end{align}
$H(\tilde p(\omega|x) |p(x))$ is the conditional entropy defined with respect to the joint distribution $p(x)\tilde p(\omega|x)$, i.e.,
\begin{align}
H(\tilde p(\omega|x) |p(x)):=  -\sum_{(x,\omega) } p(x)\tilde p(\omega|x) \log  \tilde p(\omega|x)  ,\label{E:conditionalentropy}
\end{align}
and $H(p(\omega|x) |p(x))$ is the conditional entropy similarly defined with respect to $p(x)p(\omega|x)$.


$\Delta(p(x),d)$ can be interpreted as follows: given  a random variable $X\sim p(x)$, assume we want to describe a related random variable $Y$. We use a code designed for the conditional distribution $p(w|x)$ for $Y$ given $X$, while $Y$ actually comes from a distribution $\tilde{p}(w|x)$. The distribution $\tilde{p}(w|x)$ cannot be too different than the assumed distribution in the sense  that the total variation distance between the two joint distributions $p(x)p(w|x)$ and $p(x)\tilde{p}(w|x)$ is bounded by $d$. $\Delta(p(x),d)$ captures the maximal inefficiency we would incur for having $Y$ come from a different distribution than the one assumed, i.e. it is  the maximal number of extra bits we would use when compared to the case where $Y$ comes from the assumed distribution.
 
It can be easily seen that $\Delta(p(x),d)\geq 0$ for all $p(x)$ and $d\geq 0$, and  $\Delta(p(x),d)= 0$ when $d=0$. Moreover, for any fixed $p(x)$ and $d>0$, $\Delta(p(x),d)=\infty$ if and only if there exists some $x$ with $p(x)>0$, and some $\omega$ such that $p(\omega|x)=0$. Thus, a sufficient condition for $\Delta(p(x),d)<\infty$ for all $p(x)$ and $d> 0$ is that the channel
transition matrix is \emph{fully connected}, i.e., $p(\omega|x)>0,\forall (x,\omega)\in \Omega_{X}\times \Omega$. In this case, $\Delta(p(x),d)\to 0$ as $d\to 0$ for any $p(x)$.
\begin{example}
Suppose $p(\omega|x)$ corresponds  to a binary symmetric channel with crossover probability $p<1/2$. We derive $\Delta\left(p(x),d \right)$ according to Definition \ref{D:Delta} in Appendix \ref{A:derivebscdelta} and obtain  
\begin{align}\Delta\left(p(x),d \right)= \min\left\{ d,1-p   \right\} \log \frac{1-p}{p}. \end{align}
Interestingly, in this case $\Delta(p(x),d)$ has a simple expression that is independent of $p(x)$.
\end{example}

We are now ready to state our second new upper bound, which is proved by bounding $H(Y^n|I_n)$.

\begin{theorem}\label{T:novel}
For the symmetric primitive relay channel, if a rate $R$ is achievable, then there exists some $p(x)$ and
$a \in \left[0, \min\left\{R_0, H(Z|X)\right\} \right]$ such that
\begin{numcases}{}
R \leq I(X;Y,Z)   \label{E:ournewbound1}\\
 R  \leq I(X;Y)+R_0 - a \label{E:ournewbound2} \\
R \leq I(X;Y) + \Delta\left(p(x),\sqrt{\frac{a \ln 2}{2}}\right).  \label{E:ournewbound3}
\end{numcases}
\end{theorem}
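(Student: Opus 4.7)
The plan is to begin from Fano's inequality $nR \leq I(X^n;Y^n,I_n)+n\epsilon_n$ with $I_n:=f_n(Z^n)$, and decompose it via the chain rule and the Markov chain $I_n\to X^n\to Y^n$ (so that $H(I_n|X^n,Y^n)=H(I_n|X^n)$), yielding
\[
nR \leq I(X^n;Y^n) + H(I_n|Y^n) - H(I_n|X^n) + n\epsilon_n.
\]
WLOG (via time-sharing across types) the code is fixed-composition of type $p(x)$, so $I(X^n;Y^n)\leq nI(X;Y)$ while the coupling to $p(x)$ is preserved; set $a_n:=H(I_n|X^n)/n$. The broadcast bound \eqref{E:ournewbound1} then follows from $I(X^n;Y^n,I_n)\leq I(X^n;Y^n,Z^n)\leq nI(X;Y,Z)$, the bound \eqref{E:ournewbound2} from $H(I_n|Y^n)\leq H(I_n)\leq nR_0$, and the admissibility range for $a$ from $H(I_n|X^n)\leq\min\{H(I_n),H(Z^n|X^n)\}\leq n\min\{R_0,H(Z|X)\}$.

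The substance of the theorem is the third inequality \eqref{E:ournewbound3}, equivalently the $n$-letter estimate \eqref{eq:diffI2}: $I(I_n;X^n)-I(I_n;Y^n)\leq n\Delta(p(x),\sqrt{a_n\ln 2/2})+o(n)$. Using the identity
\[
I(X^n;I_n|Y^n) = \bigl[I(X^n;I_n)-I(X^n;Y^n)\bigr] + I(X^n;Y^n|I_n)
\]
and the fact that for the symmetric channel $I(X^n;I_n)\leq I(X^n;Z^n)=I(X^n;Y^n)$ (by data processing and the conditional equidistribution of $Y^n$ and $Z^n$ given $X^n$), the bracketed term is non-positive, so it suffices to prove $I(X^n;Y^n|I_n)=H(Y^n|I_n)-nH(Y|X)\leq n\Delta+o(n)$, i.e.\
\[
H(Y^n|I_n) \leq nH(Y|X) + n\Delta\!\left(p(x),\sqrt{a_n\ln 2/2}\right) + o(n).
\]
In contrast to Theorem~\ref{T:sharpen}, where $H(I_n|Y^n)$ is controlled by a Hamming-ball count, here I need a channel-aware bound on $H(Y^n|I_n)$ that exploits the transition structure $p(\omega|x)$ through the defining optimization of $\Delta$.

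To establish the bound on $H(Y^n|I_n)$ I pass to the $B$-letter i.i.d.\ extension $(\mathbf{X},\mathbf{Y},\mathbf{Z},\mathbf{I})=(X^n_{(b)},Y^n_{(b)},Z^n_{(b)},I_{n,(b)})_{b=1}^B$ and analyze joint typical sets in the $nB$-symbol space, whose empirical joint distribution concentrates on $p(x)p(y|x)p(z|x)$. Conditioning on a high-probability value $\mathbf{i}$ of $\mathbf{I}$, the generalized blowing-up lemma applied in the $\mathbf{Y}$-direction shows that, after a small Hamming blowup, the conditional support of $\mathbf{Y}|\mathbf{I}=\mathbf{i}$ is contained in the set of $\mathbf{y}$ whose joint empirical type with $\mathbf{X}$ has the form $p(x)\tilde p(y|x)$ for some $\tilde p$ satisfying $\frac{1}{2}\sum_{x,\omega}|p(x)\tilde p(\omega|x)-p(x)p(\omega|x)|\leq d:=\sqrt{a_n\ln 2/2}$. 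This total-variation bound on the perturbation emerges from Pinsker's inequality applied to the per-symbol conditional entropy $H(\mathbf{I}|\mathbf{X})/(nB)=a_n$, together with the conditional equidistribution of $Y^n$ and $Z^n$ given $X^n$. The number of $\mathbf{y}$-sequences consistent with any such $\tilde p$ is at most $2^{nB\,H(\tilde p(y|x)|p(x))+o(nB)}$, and maximizing over admissible $\tilde p$---using the definition of $\Delta(p(x),d)$ in \eqref{E:objective_def}--\eqref{E:constraint_def} together with $D(\tilde p\|p\,|\,p(x))\geq 0$---bounds this by $2^{nB[H(Y|X)+\Delta(p(x),d)]+o(nB)}$. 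Dividing by $B$ (since $H(\mathbf{Y}|\mathbf{I})=B\,H(Y^n|I_n)$) and letting $n,B\to\infty$ delivers the claim. The main obstacle is precisely this geometric step: rigorously translating the Hamming blowup of the $\mathbf{y}$-typical set into a total-variation perturbation at the joint-type level, and verifying that the $B$-letter extension amplifies the relevant conditional probabilities enough for the generalized blowing-up lemma to apply uniformly over typical $\mathbf{i}$.
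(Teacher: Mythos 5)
Your high-level plan (pass to the $B$-fold extension, apply the generalized blowing-up lemma, and turn a Hamming perturbation into a total-variation bound feeding $\Delta$) is the right one and matches the paper's strategy for bounding $H(Y^n|I_n)$. However, there is a genuine gap in your reduction. You use the data-processing inequality $I(X^n;I_n)\leq I(X^n;Z^n)$ first, and then claim it remains to show
\[
I(X^n;Y^n|I_n)=H(Y^n|I_n)-nH(Y|X)\;\leq\; n\Delta\!\left(p(x),\sqrt{a_n\ln 2/2}\right)+o(n).
\]
This intermediate claim is false in general. Take $R_0=0$, so $I_n$ is constant and $a_n=0$, hence $\Delta(p(x),0)=0$; your inequality would then read $H(Y^n)\leq nH(Y|X)+o(n)$, which fails for any non-useless channel since $H(Y^n)\geq nH(Y)>nH(Y|X)$. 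The correct estimate (this is the content of the paper's Lemma~\ref{L:conditionalentropy2}) is
\[
H(Y^n|I_n)\;\leq\; H(X^n|I_n)-H(X^n|Z^n)+nH(Y|X)+n\Delta\!\left(p(x),\sqrt{a_n\ln 2/2}\right)+o(n),
\]
i.e.\ $I(X^n;Y^n|I_n)\leq I(X^n;Z^n|I_n)+n\Delta+o(n)$. The extra term $H(X^n|I_n)-H(X^n|Z^n)=I(X^n;Z^n|I_n)\geq 0$ is not slack you can discard: it arises unavoidably in the counting argument, where you lower-bound $p(\mathbf y|\mathbf i)\geq\sum_{\mathbf x\in\mathcal T(X^n|\mathbf z,\mathbf i)}p(\mathbf y|\mathbf x)p(\mathbf x|\mathbf i)$ and the set of useful $\mathbf x$ has size $\doteq 2^{BH(X^n|Z^n)}$ while each contributes $p(\mathbf x|\mathbf i)\doteq 2^{-BH(X^n|I_n)}$. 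The paper does \emph{not} apply DPI to $I(X^n;I_n)$; instead it writes $I(X^n;Y^n,I_n)=H(X^n)-H(X^n|I_n)+H(Y^n|I_n)-H(Y^n|X^n)$ and lets the $H(X^n|I_n)$ term cancel exactly with the one in the lemma, leaving $I(X^n;Z^n)+n\Delta$, which is then bounded by $nI(X;Y)+n\Delta$ using the symmetry $I(X^n;Z^n)=I(X^n;Y^n)$ and the fixed-composition bound. By spending DPI up front you remove the piece needed to absorb the $H(X^n|I_n)-H(X^n|Z^n)$ term, and the residual claim becomes unprovable. Separately, your description of the blowup step (``applied in the $\mathbf Y$-direction conditioning on $\mathbf i$'') is imprecise: the lemma is applied to $\mathbf Z$ conditioned on a typical $\mathbf x$ (for which $\Pr(\mathbf Z\in f^{-1}(\mathbf i)\,|\,\mathbf x)\geq 2^{-nB(a_n+\epsilon)}$), and the conclusion is transferred to $\mathbf Y$ via the conditional equidistribution of $Y^n$ and $Z^n$ given $X^n$; the $\sqrt{a\ln 2/2}$ radius comes from the blowing-up lemma itself rather than Pinsker's inequality. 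Finally, the intended bound is a pointwise lower bound $p(\mathbf y|\mathbf i)\geq 2^{-B(\cdots)}$ on a high-probability set, not a support-size count, which is what ultimately gives the entropy inequality.
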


Theorem \ref{T:novel} also implies the cut-set bound in Propositions \ref{P:cutset}. In particular, when the channels $X$-$Y$ and $X$-$Z$ have a fully connected
transition matrix, our new bound is \emph{strictly} tighter than the cut-set bound since $\Delta\left(p(x),d \right) \to 0$ as $d \to 0$ for any $p(x)$ in this case.

It should be pointed out that the bounds in Theorems \ref{T:sharpen} and \ref{T:novel} are proved based on essentially different arguments, and they do not include each other in general.
For instance, in the case when $X$-$Y$ and $X$-$Z$  are binary erasure channels (i.e. $\mbox{Pr}(Y=x|x)=1-p$, and $\mbox{Pr}(Y=\mbox{erasure}|x)=p$, $\forall x\in \{0,1\}$),
 $\Delta\left(p(x),d\right)= \infty$ for all $p(x)$ and $d> 0$, and thus our second bound reduces to the cut-set bound, but the first bound is still strictly tighter than the cut-set bound;
 whereas in the case when $X$-$Y$ and $X$-$Z$  are  binary symmetric channels,  our second bound is significantly  tighter  than both the cut-set bound and the first bound as we will show in the next section.

\section{Binary Symmetric Channel}\label{S:BSC}
As an application of the upper bounds stated in Sections \ref{S:existing}  and \ref{S:Newbounds}, we consider the case where the channel is binary symmetric, i.e., both the $X$-$Y$ and $X$-$Z$ links are binary symmetric channels with crossover probability $p$. The various upper bounds can be specialized to this case as follows (see Appendix \ref{A:BSCBounds} for derivations).
\begin{itemize}
  \item Cut-set bound (Prop. \ref{P:cutset}):
 \begin{align*}
  C(R_0) \leq  \min \left\{1+H(p*p)-2H(p), 1-H(p)+R_0   \right\},
  \end{align*}
where $p_1*p_2:=p_1(1-p_2)+p_2(1-p_1)$.
  \item Xue's bound (Prop. \ref{P:Xue}):
\begin{align*}
  C(R_0) \leq \max_{a\in [0,R_0] } \min \left\{1-H(p)+R_0-a, E^{-1}(H(\sqrt{a})+  \sqrt{a} )   \right\},
  \end{align*}
where $E^{-1}(\cdot)$ is the inverse function of $E(R)$.

  \item Our first bound (Thm. \ref{T:sharpen}):
  \begin{align*}
  C(R_0) \leq \max_{a\in [0, \min \{R_0, H(p),  \frac{1}{2\ln 2}\}] }  \min \Bigg \{1+H(p*p)-2H(p), \  &   1-H(p)+R_0-a ,    \\
&   1-H(p)+  H\left(\sqrt{\frac{a \ln 2}{2}}\right)-a         \Bigg\}.
  \end{align*}

  \item Our second bound (Thm. \ref{T:novel}):
\begin{align*}
  C(R_0) \leq \max_{a\in [0, \min \{R_0, H(p),\frac{2}{\ln 2}(1-p)^2\}] }  \min \Bigg \{1+H(p*p)-2H(p), \ &     1-H(p)+R_0-a ,    \\
&   1-H(p)+\sqrt{\frac{a \ln 2}{2}} \log \frac{1-p}{p} \Bigg\}.
  \end{align*}
\end{itemize}
Fig. \ref{F:upperbound} plots the above bounds for $p=0.2$ and $R_0\in [0.15,0.21]$. As can be seen, both of our bounds perform strictly better than the cut-set bound and Xue's bound, where the latter two are quite close to each other. Particularly, our second bound provides  considerable gain over the other three bounds.

\begin{figure}[hbt]
\centering
\includegraphics[width=0.7\textwidth]{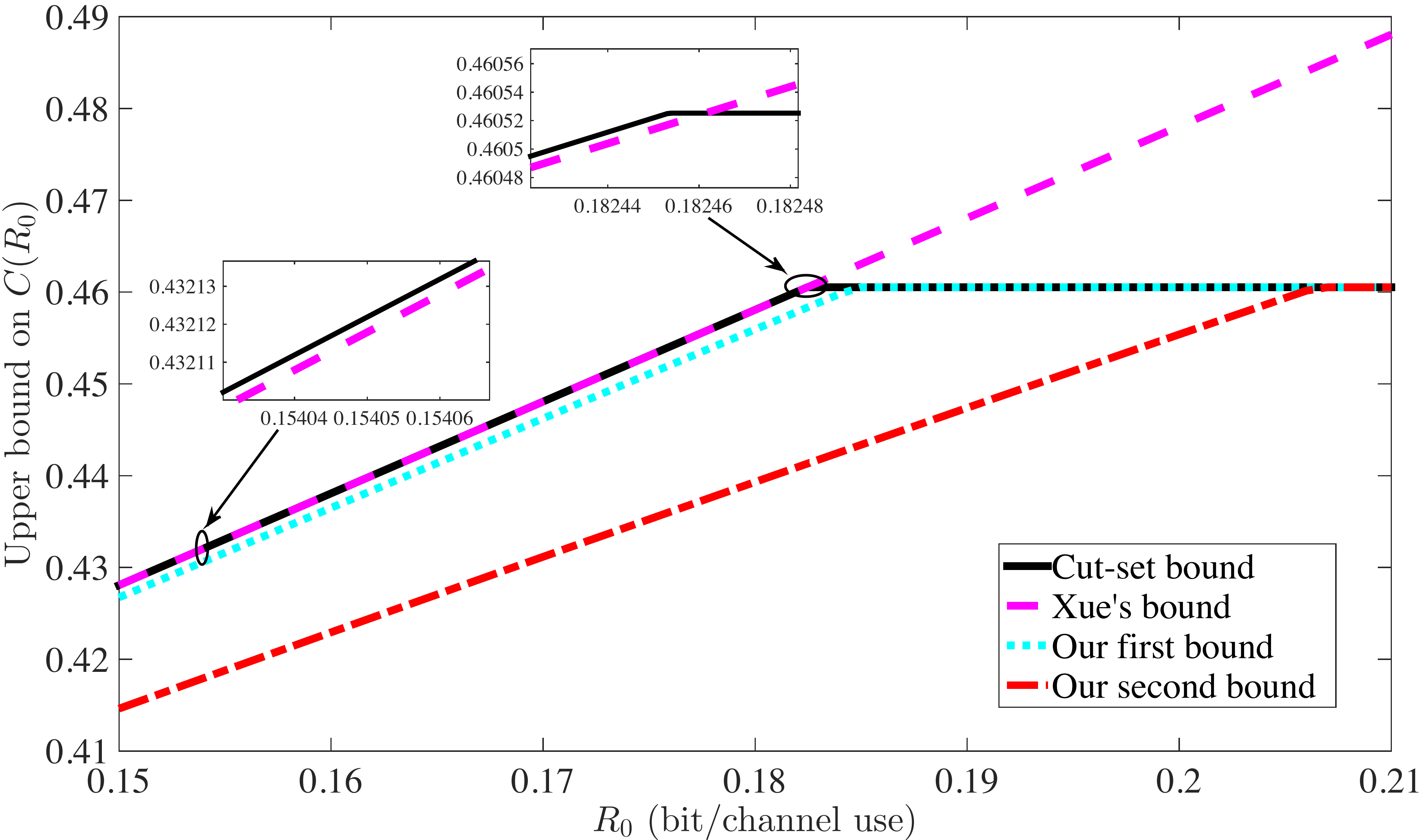}
\caption{Upper bounds on $C(R_0)$ for binary symmetric case with $p=0.2$.}
\label{F:upperbound}
\end{figure}


\subsection{Cover's Open Problem on the Critical $R_0$}

Now suppose we want to achieve the rate $C_{XYZ}$ for the relay channel. What is the minimum rate needed for the relay--destination link? This question was posed by Cover \cite{Cover'sopenproblem} and has been open for decades. Formally, we are interested in the critical value
$$R_0^*=\inf \{R_0: C(R_0)=C_{XYZ}\}.$$
The upper bounds on the capacity of the primitive relay channel presented in the previous sections  can be immediately used to develop lower bounds on $R_0^*$. Note that since Xue's bound is always dominated by our first bound, in the following we only compare the lower bounds on  $R_0^*$ implied by our two bounds with that implied by the cut-set bound.
\begin{itemize}
  \item Cut-set bound (Prop. \ref{P:cutset}):
  \begin{align*}
  R_0^* \geq H(p*p)-H(p).
  \end{align*}

  \item Our first bound (Thm. \ref{T:sharpen}):
  \begin{align*}
  R^*_0 \geq  \min_{H\left(\sqrt{\frac{a\ln 2}{2}}\right)-a \geq H(p*p)-H(p)} H(p*p)-H(p)+a.
  \end{align*}

  \item Our second bound (Thm. \ref{T:novel}):
  \begin{align*}
  R^*_0 \geq  H(p*p)-H(p)+ \frac{2}{\ln 2}\left(\frac{H(p*p)-H(p)}{\log \frac{1-p}{p}}\right)^2.
  \end{align*}
\end{itemize}
Fig. \ref{F:lowerbound} plots these lower bounds on $R_0^*$ versus the crossover probability $p$. We see again  that our second bound provides more gain over the cut-set bound than our first bound does.

\begin{figure}[hbt]
\centering
\includegraphics[width=0.7\textwidth]{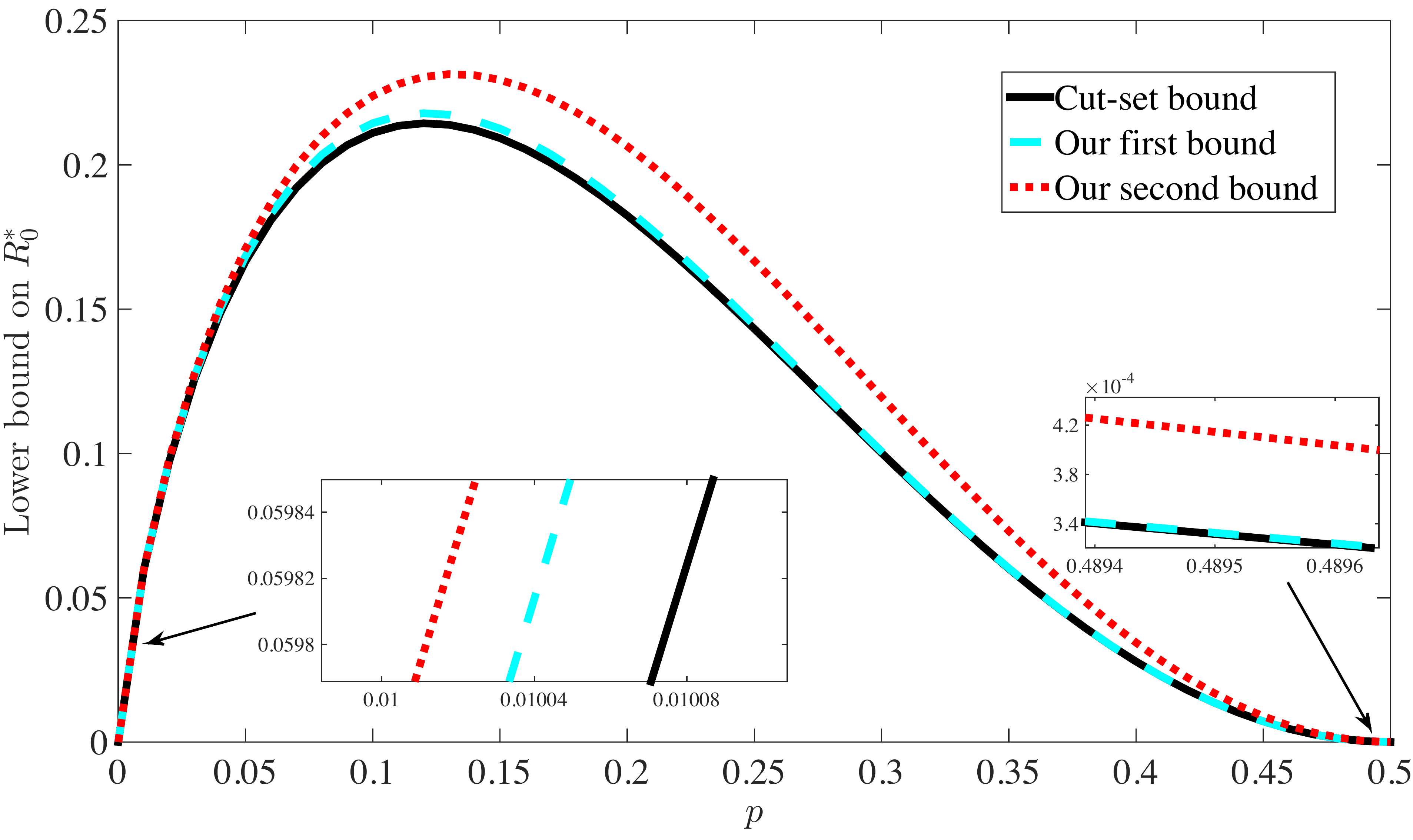}
\caption{Lower bounds on $R_0^*$ for binary symmetric case.}
\label{F:lowerbound}
\end{figure}

From Fig. \ref{F:lowerbound} we observe that  all these lower bounds on $R^*_0$ converge to $0$ as $p\to 0$ or $p\to 1/2$. On the other hand, to achieve $C_{XYZ}$, a natural way is to use a simple C-F scheme with only Slepian-Wolf binning, a.k.a. Hash-and-Forward (H-F) \cite{KimAllerton}, to faithfully transfer the relay's observation $Z^n$ to the destination so that the joint decoding based on $Z^n$ and $Y^n$ can be performed. This leads to an upper bound on $R^*_0$, namely $R_0^* \leq H(p*p)$, where $H(p*p)$ is the conditional entropy $H(Z|Y)$ induced by the uniform input distribution. Interestingly, this H-F upper bound also converges to 0 as $p\to 0$; but as $p\to 1/2$, it converges to 1 even though $C_{XYZ}$ is diminishing in this regime, which is in sharp contrast to the above lower bounds on $R^*_0$ that all converge to 0.

This leads to an interesting dichotomy: as $p\to 1/2$ while achievability requires a full bit of $R_0$ to support the diminishing $C_{XYZ}$ rate, the converse results allow for a diminishing $R_0$. Building on our second upper bound on the capacity of the primitive relay channel, in Section \ref{S:furtherimprovement} we prove the following improved lower bound on $R^*_0$, which deviates from 0 as $p\to 1/2$, thus suggesting that a positive $R_0$ is needed to achieve $C_{XYZ}$ even when $C_{XYZ}\to 0$. The proof of this result follows the argument for proving our second bound, however it also critically incorporates the fact that the rate of the codebook is approximately $C_{XYZ}$ in this case as well as the fact that the channel is binary symmetric, which allows us to do a combinatorial geometric analysis of the typical sets in Hamming space.

\begin{theorem}\label{T:further}
For the binary symmetric channel case,
$$R_0^*\geq   H(p*p)-H(p)+ \frac{2}{\ln 2}\left(\frac{H(p*p)-H(p)}{(1-2p)\log \frac{1-p}{p}}\right)^2.$$
\end{theorem}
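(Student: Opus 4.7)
The plan is to retrace the proof of Theorem~\ref{T:novel} specialized to the binary symmetric case, and at the step where the generic $\Delta(p(x),d)$ (which for BSC with uniform input evaluates to $d\log\tfrac{1-p}{p}$) is bounded, replace it with a sharper, code-specific estimate that extracts an additional factor $(1-2p)$ via a direct combinatorial analysis in Hamming space. First, since the BSC-symmetric broadcast cut is uniquely maximized by $p(x)=1/2$, any code approaching $C_{XYZ}$ must have an essentially uniform input distribution; fixing $p(x)=1/2$ gives $I(X;Y)=1-H(p)$ and $C_{XYZ}=1+H(p*p)-2H(p)$, so \eqref{E:ournewbound2} reads $R_0\ge a_n+H(p*p)-H(p)$ with $a_n=\tfrac{1}{n}H(I_n|X^n)$. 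Thus Theorem~\ref{T:further} reduces to establishing the refined constraint
\[
R \;\le\; I(X;Y) + \sqrt{\tfrac{a_n\ln 2}{2}}\,(1-2p)\log\tfrac{1-p}{p},
\]
which combined with $R = C_{XYZ}$ forces $a_n\ge \tfrac{2}{\ln 2}\bigl(\tfrac{H(p*p)-H(p)}{(1-2p)\log\tfrac{1-p}{p}}\bigr)^{2}$.

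To establish this refined bound I would keep the full machinery of Section~\ref{S:novelproof} intact: the $B$-length i.i.d. extension of $(X^n,Y^n,Z^n,I_n)$, the Markov structure $I_n\!-\!X^n\!-\!Y^n$, and the generalized blowing-up lemma applied to the joint typical set of $(X^{nB},Z^{nB})$. This produces, for a typical bin index, a Hamming-distance-$O(nBd)$ perturbation of the noise carrying almost all of the conditional probability mass, where $d=\sqrt{a_n\ln 2/2}$. In the proof of Theorem~\ref{T:novel} this perturbation is translated into a total-variation constraint and absorbed into $\Delta(p(x),d)$; the extremal $\tilde p$ realizing $\Delta=d\log\tfrac{1-p}{p}$ corresponds to flipping \emph{every} perturbed coordinate from a ``correct'' value to an ``incorrect'' one. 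The new observation is that these coordinate flips act on noise vectors of typical Hamming weight $\approx np$, and since the capacity-achieving codebook is essentially uniform over $\{0,1\}^n$, the blow-up is spread symmetrically over the $n$ coordinates: a flipped coordinate is ``correct'' (density $1-p$) with probability $1-p$, contributing $+\log\tfrac{1-p}{p}$ to $-\log P$, and ``incorrect'' (density $p$) with probability $p$, contributing $-\log\tfrac{1-p}{p}$. The mean contribution per flip is therefore $(1-2p)\log\tfrac{1-p}{p}$, and a concentration argument shows that $H(Y^n|I_n)-H(Z^n|I_n)$ is controlled by this \emph{expected} drift rather than the worst-case slope, upgrading the estimate to $I(X^n;I_n)-I(Y^n;I_n) \le n\cdot d\cdot(1-2p)\log\tfrac{1-p}{p}$, as required.

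The main obstacle is that the bins $f_n^{-1}(I_n)$ are arbitrary subsets of $\{0,1\}^n$, and the blowing-up lemma alone gives no a priori information about how the flipped coordinates split between ``correct'' and ``incorrect'' positions: an adversarial bin could in principle concentrate its blow-up on the correct sub-coordinates and recover the worst-case slope $\log\tfrac{1-p}{p}$. Ruling this out is where the hypotheses $R\approx C_{XYZ}$ and binary symmetry become essential: together they constrain the Hamming-ball/bin intersections sufficiently that the drift-based averaging holds for typical bins. Making this counting step precise -- via a direct combinatorial argument in Hamming space that exploits (i) the near-uniformity of the capacity-achieving codebook, (ii) the binary-symmetric structure of the channel, and (iii) the typical noise weight being exactly $np$ -- is the main technical work of the proof.
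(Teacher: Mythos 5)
Your high-level reduction is correct and matches the paper: for BSC one can take $p(x)$ uniform, specialize $\Delta$ to $d\log\tfrac{1-p}{p}$, and aim to sharpen the slope to $(1-2p)\log\tfrac{1-p}{p}$, which combined with \eqref{E:ournewbound2} forces $a_n\geq \tfrac{2}{\ln 2}\bigl(\tfrac{H(p*p)-H(p)}{(1-2p)\log\tfrac{1-p}{p}}\bigr)^2$. The per-flip ``drift'' computation giving the $(1-2p)$ factor is also exactly the right intuition. However, you explicitly defer the crux (``Making this counting step precise ... is the main technical work of the proof''), and as it stands the argument has a genuine gap: nothing in the proposal rules out the adversarial bin you yourself describe, in which the blown-up set concentrates on flips that move $\mathbf y$ away from $\mathbf x$. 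Appealing to a generic ``concentration argument'' does not close this: whether a flipped coordinate is ``correct'' is a function of $\mathbf x$, and the conditional law $p(\mathbf x\,|\,\mathbf z_0)$ is determined by the arbitrary codebook, so no i.i.d./LLN-type concentration applies directly.

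The mechanism the paper actually uses to close this gap is a volume argument pinned against an entropy lower bound, and it is worth stating because it is not the same as what you sketch. First, Fano's inequality applied to a code of rate $R=C_{XYZ}-\tau$ yields $\tfrac{1}{n}H(Y^n|Z^n)\geq H(p*p)-\tau-O(\epsilon)$; this is precisely where the hypothesis ``$R\approx C_{XYZ}$'' enters, and it says the conditional law of $\mathbf Y$ given $\mathbf z_0$ is nearly flat on a set of size $\approx 2^{nBH(p*p)}$. Second, a combinatorial bound on the intersection $\mbox{Sphere}(\mathbf y_0,nBr)\cap\mbox{Sphere}(\mathbf z_0,nB\rho)$ shows that for $\rho\approx p*p$ the exponent $f(r)=d_0H\bigl(\tfrac{r+d_0-p*p}{2d_0}\bigr)+(1-d_0)H\bigl(\tfrac{r+p*p-d_0}{2(1-d_0)}\bigr)$ is uniquely maximized (with value $H(p*p)$) at $r=d_0*p*p$; hence the set of $\mathbf y$ at typical distance from $\mathbf z_0$ but at distance $\geq nB(d_0*p*p+\epsilon')$ from $\mathbf y_0$ has exponent strictly below $H(p*p)$. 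Combining these two facts shows $\Pr(d(\mathbf Y,\mathbf y_0)\geq nB(d*p-\epsilon_0)\mid\mathbf z_0)$ is exponentially small, and since a large $d(\mathbf X,\mathbf y_0)$ would force $d(\mathbf Y,\mathbf y_0)$ to be large with high probability, one concludes that most of the mass of $\mathbf X$ given $\mathbf z_0$ sits within Hamming distance $\approx nB(d_0*p)$ of $\mathbf y_0$ (this is Lemma~\ref{L:geoanalysis}). The lower bound on $p(\mathbf y_0|\mathbf i_0)$ is then obtained by restricting the sum over $\mathbf x$ to this ball rather than taking a worst-case $\min_{\mathbf x}p(\mathbf y_0|\mathbf x)$ over the whole conditional typical set. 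Your proposal identifies the right ingredients but does not supply this sphere-intersection/Fano mechanism, which is what actually converts the heuristic averaging into a proof.
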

\begin{proof}
See Section \ref{S:furtherimprovement}.
\end{proof}

Fig. \ref{F:lowerbound_im} shows this further improved lower bound on $R_0^*$ as well as the H-F upper bound. Clearly, this improved lower bound is tighter than all other lower bounds, and in particular, it converges to a strictly positive value, $0.1803$, as $p \to 1/2$ while all the other lower bounds converge to 0. This also shows that $R_0^*$ is discontinuous since  when $p=1/2$, the capacity of the relay channel is $0$, and therefore trivially $R_0^*=0$. We indeed believe that  $R_0^*\to 1$ as $p \to 1/2$ but proving this currently remains out of reach.


\begin{figure}[hbt]
\centering
\includegraphics[width=0.7\textwidth]{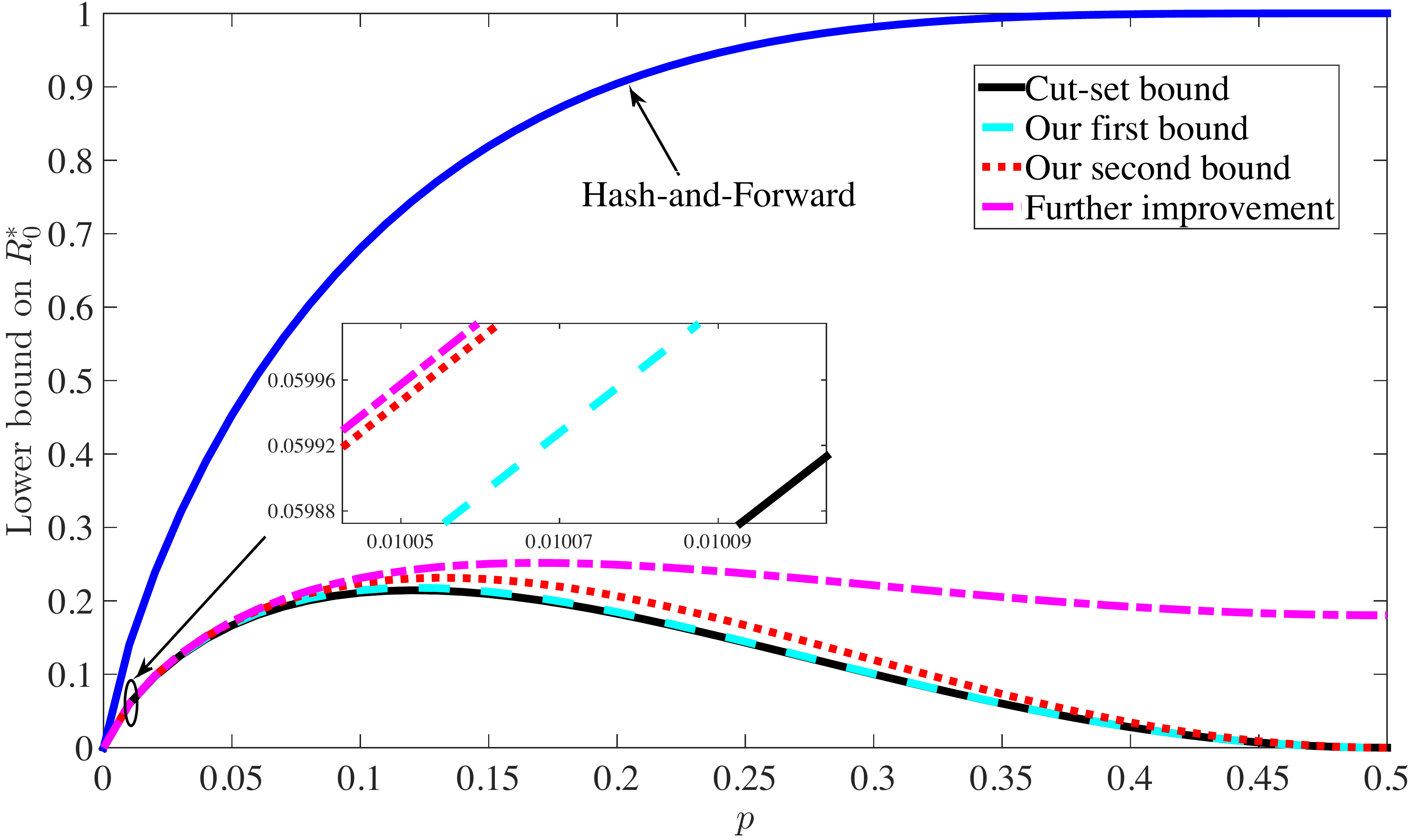}
\caption{Improved lower bound on $R_0^*$ for the binary symmetric case.}
\label{F:lowerbound_im}
\end{figure}


From Fig. \ref{F:lowerbound_im}, one can observe that in the other extreme, as $p \to 0$, upper and lower bounds on $R^*_0$ do indeed match and all approach $0$. One can indeed check that the speed at which they approach $0$ is also not too different. In particular, we can show that H-F is approximately optimal within a multiplicative factor of 2 in the regime where $p \to 0$. More precisely, letting $R_0^{\text{H-F}}=H(p*p)$ and $R_0^{\text{C-S}}=H(p*p)-H(p)$ denote the H-F bound and the cut-set bound on $R_0^*$ respectively, it can be shown (see Appendix \ref{A:optimality}) that
\begin{align}
\frac{R_0^{\text{H-F}}}{R_0^{\text{C-S}}}  \to 2 \text{ as } p \to 0.\label{E: HFopt}
\end{align}



\section{Proof of Theorem \ref{T:sharpen}}\label{S:sharpenproof}

In this section, we prove bounds \dref{E:sharpened1}--\dref{E:sharpened3}  sequentially with the focus on showing \dref{E:sharpened3}.

\subsection{Fixed Composition Code Argument}
We start with a fixed composition code argument \cite{Gallager}, which is useful for coupling bounds \dref{E:sharpened1}--\dref{E:sharpened3} together through the input distribution $p(x)$. For the purpose of showing Theorem \ref{T:sharpen}  such an argument can be replaced by using time sharing random variable \cite{Coverbook}, however the latter technique is not sufficient in deriving the bound in Theorem \ref{T:novel}; therefore for consistency,  this paper  employs the former argument for proving both Theorems \ref{T:sharpen} and \ref{T:novel}.

\begin{definition}
The composition $Q_{x^n}$ (or empirical probability distribution) of a sequence $x^n$ is the relative proportion of occurrences of each symbol of $\Omega_X$, i.e., $Q_{x^n}(a)=N(a|x^n)/n$ for all
$a \in \Omega_X$, where $N(a|x^n)$ is the number of times the symbol $a$ occurs in the sequence $x^n$.
\end{definition}



\begin{definition}
A code for the primitive relay channel is said to be of fixed composition $Q$, denoted by $(\mathcal{C}_{(n,R)}^{[Q]}, f_n, g_n)$,
if all the codewords in $\mathcal{C}_{(n,R)}$ have the same composition $Q$.
\end{definition}

The following lemma says that if a rate $R$ is achievable by some sequence of codes, then there exists a sequence of fixed composition codes that can achieve essentially the same rate.

\begin{lemma}\label{L:fixed}
Suppose a rate $R$ is achievable over the primitive relay channel. Then for any $\tau>0$, there exists a sequence of fixed composition codes with rate $R_\tau:=R-\tau$
 \begin{align}
 \{(\mathcal{C}_{(n,R_\tau)}^{[Q_n]}, f_n, g_n)\}_{n=1}^{\infty} \label{E:fixedcompositioncode}
 \end{align}
such that the average probability of error $P_e^{(n)} \to 0$ as $n \to \infty$.
\end{lemma}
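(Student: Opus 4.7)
The plan is to combine a Markov-type expurgation step with a pigeonhole argument over compositions, a standard maneuver from the method of types. The key enabling fact is that the number of distinct compositions of $n$-length sequences over $\Omega_X$ is at most $(n+1)^{|\Omega_X|}$, which is polynomial in $n$; thus restricting a codebook to a single composition class costs only a polynomial factor in the number of codewords, which translates to an $O(\log n / n)$ loss in rate.

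First, I would fix a sequence of codes $\{(\mathcal{C}_{(n,R)}, f_n, g_n)\}$ with $P_e^{(n)}\to 0$, denote by $\lambda_m^{(n)}$ the conditional error probability when message $m$ is sent, and observe that $P_e^{(n)} = \frac{1}{2^{nR}}\sum_m \lambda_m^{(n)}$. By Markov's inequality, at least half of the $2^{nR}$ messages satisfy $\lambda_m^{(n)} \leq 2P_e^{(n)}$. I would retain only these messages, obtaining a subcode of size at least $2^{nR-1}$ in which \emph{every} codeword has individual error at most $2P_e^{(n)}$. The expurgation step is essential: without it, one cannot bound the average error of a subcode in terms of the average error of the full code, since restricting to a composition class could in principle isolate the worst codewords.

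Next, I would apply the pigeonhole principle to the compositions of the expurgated codewords. Since there are at most $(n+1)^{|\Omega_X|}$ compositions, some composition $Q_n$ is shared by at least $2^{nR-1}/(n+1)^{|\Omega_X|}$ of them. I would declare the resulting subcode to be $\mathcal{C}_{(n,R_n)}^{[Q_n]}$, keeping $f_n$ and $g_n$ unchanged (which is legitimate since neither the relay's encoder nor the destination's decoder depends on which codewords the source is using). The rate of this subcode satisfies
\begin{equation*}
R_n \;\geq\; R \;-\; \frac{1 + |\Omega_X|\log(n+1)}{n},
\end{equation*}
so $R_n \geq R - \tau$ for all sufficiently large $n$, while the average error over $\mathcal{C}_{(n,R_n)}^{[Q_n]}$ is upper bounded by the maximum individual error of its codewords, namely $2P_e^{(n)} \to 0$. (For small $n$ where $R_n < R-\tau$, the sequence can be padded arbitrarily since we only need the asymptotic behavior.)

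I do not anticipate a serious obstacle: the argument is entirely combinatorial and well-trodden. The only subtlety worth flagging is the need to expurgate \emph{before} pigeonholing on compositions, rather than after, because a fixed composition subcode of the original codebook could a priori have a worse average error than the original codebook; expurgating first ensures a uniform per-codeword error bound that survives any subsequent selection.
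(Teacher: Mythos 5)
Your proof is correct and follows the same fixed-composition route the paper takes; the paper simply delegates the combinatorial core to Lemma~\ref{L:fixedproof}, citing Gallager's notes, rather than spelling it out. Your Markov expurgation followed by a pigeonhole over composition classes is exactly the standard argument, and your observation that $f_n$ and $g_n$ can be left untouched (since neither depends on which source codewords survive) is precisely the point that lets the single-user argument carry over to the relay setting. One small remark on constants: the paper's Lemma~\ref{L:fixedproof} asserts a subcode with error $\tilde P_e^{(n)} \leq P_e^{(n)}$ and rate loss $(|\Omega_X|-1)\frac{\log(n+1)}{n}$, whereas your expurgation yields error $\leq 2P_e^{(n)}$ and an extra $1/n$ in the rate loss. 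Both are immaterial for the asymptotic conclusion, but the $\leq P_e^{(n)}$ claim cannot in general be had simultaneously with the pure pigeonhole rate bound (a dominant composition class can concentrate the above-average errors), which is exactly the pitfall you flag; expurgating first, as you do, is the clean way to sidestep it.
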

\begin{IEEEproof}
The proof relies on the property that there are only a polynomial
number of compositions and can be found in Appendix \ref{A:fixedproof}.
\end{IEEEproof}

\subsection{Proof of \dref{E:sharpened1}--\dref{E:sharpened2}}
To prove Theorem \ref{T:sharpen}, in the sequel we will use the reliable fixed composition codes in \dref{E:fixedcompositioncode}. A benefit of this is that now the various $n$-letter information quantities have single letter characterizations or bounds, as demonstrated in the following.
\begin{lemma}\label{L:entropy}
For the $n$-channel use code with fixed composition $Q_n$, we have
\begin{align*}
H(Y^n|X^n)=H(Z^n|X^n)&=nH(Y|X)=nH(Z|X)\\
H(Y^n,Z^n|X^n)&=nH(Y,Z|X),
\end{align*}
and
\begin{align*}
I(X^n;Y^n)=I(X^n;Z^n)&\leq nI(X;Y)=nI(X;Z)\\
I(X^n;Y^n,Z^n)&\leq nI(X;Y,Z)
\end{align*}
where $H(Y|X),H(Z|X)$ and $I(X;Y),I(X;Z)$ are calculated based on $Q_n(x)p(\omega|x)$, and $H(Y,Z|X)$ and $I(X;Y,Z)$ are calculated based on $Q_n(x)p(y|x)p(z|x)$.
\end{lemma}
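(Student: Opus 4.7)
The plan is to reduce every $n$-letter quantity to a sum of single-letter quantities, exploit memorylessness to split products, and then use the fact that the composition of every codeword is exactly $Q_n$.

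First I would treat the equalities. Because the channel is memoryless, for any particular sequence $x^n$ we have $p(y^n\mid x^n) = \prod_{i=1}^n p(y_i\mid x_i)$, so
\begin{equation*}
H(Y^n\mid X^n=x^n) = \sum_{i=1}^n H(Y_i\mid X_i=x_i) = \sum_{a\in\Omega_X} N(a\mid x^n)\, H(Y\mid X=a).
\end{equation*}
For a fixed composition code $N(a\mid x^n)=nQ_n(a)$ for every codeword $x^n$, hence $H(Y^n\mid X^n=x^n) = n\,H(Y\mid X)$ with $H(Y\mid X)$ evaluated under $Q_n(x)p(\omega\mid x)$. Averaging over the (uniform) message $M$ does not change this since the value is the same for every codeword, which gives $H(Y^n\mid X^n)=nH(Y\mid X)$. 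The same argument with $p(y,z\mid x)=p(y\mid x)p(z\mid x)$ gives $H(Y^n,Z^n\mid X^n)=nH(Y,Z\mid X)$. The two symmetry equalities $H(Y|X)=H(Z|X)$ and $H(Y^n|X^n)=H(Z^n|X^n)$ then follow because the symmetry assumption guarantees that $(X^n,Y^n)$ and $(X^n,Z^n)$ have identical joint distributions.

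Next I would handle the mutual-information bounds by writing $I(X^n;Y^n)=H(Y^n)-H(Y^n\mid X^n)$ and bounding only the marginal term. The standard chain-rule inequality gives $H(Y^n)\le\sum_{i=1}^n H(Y_i)$. The key observation is that, averaging over the uniform message $M$, the positional marginal satisfies
\begin{equation*}
\frac{1}{n}\sum_{i=1}^n \Pr(X_i=a) \;=\; \frac{1}{|\mathcal{C}|}\sum_{m}\frac{N(a\mid x^n(m))}{n} \;=\; Q_n(a),
\end{equation*}
so the averaged marginal of $Y_i$ is exactly $\sum_x Q_n(x)p(y\mid x)$. Concavity of entropy then yields $\frac{1}{n}\sum_i H(Y_i)\le H(Y)$ with $Y$ distributed according to $Q_n(x)p(y\mid x)$. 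Combining these with the equality for $H(Y^n\mid X^n)$ already established gives $I(X^n;Y^n)\le nI(X;Y)$. The same chain of steps, replacing $Y_i$ by the pair $(Y_i,Z_i)$ (the channel is still memoryless in the joint output), delivers $I(X^n;Y^n,Z^n)\le nI(X;Y,Z)$. The symmetry-based equality $I(X^n;Y^n)=I(X^n;Z^n)$ and $I(X;Y)=I(X;Z)$ are again immediate from the identical joint distributions.

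The conceptual steps are routine, and there is no real obstacle once the fixed-composition structure is used; the only mild subtlety worth being explicit about is that $Q_n$ equals the \emph{averaged} positional marginal of $X_i$ rather than the marginal at any single $i$, so the argument for $H(Y^n)$ must invoke concavity of entropy on averages rather than claim a per-position marginal equal to $Q_n$. Once that averaging step is written carefully, all four statements drop out in a few lines.
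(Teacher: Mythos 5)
Your proof is correct, and the first half (the conditional entropy equalities) is essentially identical to the paper's. The second half, however, takes a genuinely different route. The paper bounds $H(Y^n)$ by a typicality argument: it conditions on the event $Y^n\in\mathcal T^{(n)}_{\epsilon_1}(Y)$ (the typical set for the single-letter output distribution $\sum_x Q_n(x)p(\omega|x)$), obtains $H(Y^n)\le n(H(Y)+\epsilon)$, and then removes the residual $\epsilon$ by a separate lifting step — it considers the $B$-fold i.i.d. extension $(\mathbf X,\mathbf Y)$, applies the same typicality bound to get $BI(X^n;Y^n)=I(\mathbf X;\mathbf Y)\le nB(I(X;Y)+\epsilon)$ with $\epsilon\to0$ as $B\to\infty$, and lets $B\to\infty$. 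Your argument replaces all of this with the subadditivity bound $H(Y^n)\le\sum_i H(Y_i)$ followed by concavity of entropy applied to the averaged positional marginal $\frac1n\sum_i p_{Y_i}=\sum_x Q_n(x)p(\cdot|x)$, which gives $\sum_i H(Y_i)\le nH(Y)$ outright, with no $\epsilon$ and no lifting. Your version is shorter and more elementary; the paper's version, while heavier, fits the general pattern of the rest of its development, which repeatedly uses the $B$-fold i.i.d. extension and typicality machinery, so the authors likely preferred it for uniformity rather than necessity. Both establish the same inequality, and your subtlety remark — that $Q_n$ is the averaged positional marginal, not the per-coordinate marginal, so one must invoke concavity rather than claim $Y_i\sim\sum_x Q_n(x)p(\cdot|x)$ for each $i$ — is exactly the point one needs to be careful about, and you handle it correctly.
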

\begin{proof}
See Appendix \ref{A:entropy}.
\end{proof}

Let the relay's transmission be denoted by $I_n=f_n(Z^n)$. With the above lemma, we have
\begin{align}
n R_\tau &= n(R-\tau)=H(M)\nonumber \\
&=I(M;Y^n,I_n)+H(M|Y^n,I_n)\nonumber \\
&\leq I (X^n;Y^n,I_n)+n \epsilon \label{E:Fano1} \\
&\leq I (X^n;Y^n,Z^n)+n \epsilon \nonumber \\
&\leq n(I(X;Y,Z)+ \epsilon) \nonumber
\end{align}
i.e.,
\begin{align}
R\leq I(X;Y,Z)+ \tau+ \epsilon \label{E:valid1}
\end{align}
for any $\tau, \epsilon >0$ and sufficiently large $n$, where \dref{E:Fano1} follows from Fano's inequality.

Moreover, for any $\tau, \epsilon >0$ and sufficiently large $n$, continuing with \dref{E:Fano1} we have
\begin{align}
n(R-\tau) &\leq I (X^n;Y^n,I_n)+n\epsilon \nonumber \\
&=I(X^n;Y^n)+I(X^n;I_n|Y^n)+ n\epsilon \nonumber \\
&=I(X^n;Y^n)+H(I_n|Y^n)-H(I_n|X^n)+ n\epsilon \label{E:newcontinue} \\
&\leq n(I(X;Y)+R_0 -a_n+ \epsilon)\label{E:newcontinue1} \end{align}
i.e.,
\begin{align}
 R \leq  I(X;Y)+R_0 -a_n+\tau+ \epsilon  \label{E:valid2}
\end{align}
where $a_n:=\frac{1}{n}H(I_n|X^n)$ is subject to the following constraint
\begin{align}
0\leq a_n\leq \min \left\{R_0, \frac{1}{n}H(Z^n|X^n) \right\}=  \min \{R_0,  H(Z|X) \}. \label{E:valid3}
\end{align}

\subsection{Proof of \dref{E:sharpened3}}

To prove \dref{E:sharpened3}, we continue with \dref{E:newcontinue}, and instead of upper bounding $H(I_n|Y^n)$ by $nR_0$ as in \dref{E:newcontinue1}, we use the following upper bound on $H(I_n|Y^n)$, whose proof will be given in Section \ref{SS:proofoflemma1}.


\begin{lemma}\label{L:keylemmafortheorem1}
For any fixed $n$,
$$H(I_n|Y^n)\leq nV\left(  \sqrt{\frac{a_n\ln 2}{2}}   \right),$$
with
\begin{numcases}{V(r):=}
\log |\Omega| & if $r>  \frac{|\Omega|-1}{|\Omega|}  $  \label{E:vdef1} \\
H(r)+r\log(|\Omega|-1) &  if $r \leq \frac{|\Omega|-1}{|\Omega|}  $.\label{E:vdef2}
\end{numcases}
where $H(r)$ is the binary entropy function defined as $H(r)=-r\log r -(1-r)\log (1-r)$.
\end{lemma}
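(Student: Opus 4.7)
The plan is to apply Marton's blowing-up lemma to $B$-length i.i.d.\ extensions of the variables at play, and then translate the resulting geometric statement into the claimed entropy bound via a Fano-type argument. The guiding intuition is that $H(I_n|X^n)=na_n$ makes $I_n$ \emph{almost} a function of $X^n$, and since $Y^n$ and $Z^n$ are conditionally i.i.d.\ given $X^n$, whatever set of $z^n$-sequences is consistent with a given $I_n$-value can---after a mild Hamming-blowup---essentially exhaust the conditional measure of $Y^n$, so $I_n$ is in turn recoverable from $Y^n$ up to a small combinatorial ambiguity.

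Concretely, I would consider $B$ i.i.d.\ copies $(X^{nB},Y^{nB},Z^{nB},I_n^B)$ with $I_n^B=(f_n(Z_1^n),\dots,f_n(Z_B^n))$, and for each value $i^B$ define the preimage $A(i^B):=\{z^{nB}:f_n^B(z^{nB})=i^B\}$. Standard typicality arguments give, for typical $(x^{nB},i^B)$,
\begin{equation}
P_{Z^{nB}|X^{nB}=x^{nB}}(A(i^B))=P(I_n^B=i^B|X^{nB}=x^{nB})\geq 2^{-nB(a_n+\delta_B)},
\end{equation}
with $\delta_B\to 0$. Since the channel $X\!\to\!Z$ is memoryless, $P_{Z^{nB}|X^{nB}=x^{nB}}$ is a product distribution, so Marton's blowing-up lemma yields that the $nB\gamma_B$-Hamming-blowup $[A(i^B)]_{nB\gamma_B}$ has mass $\geq 1-\epsilon_B$ under it, with $\gamma_B=\sqrt{a_n\ln 2/2}+o(1)$ and $\epsilon_B\to 0$ as $B\to\infty$. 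The crucial symmetry step is that since $Y$ and $Z$ have the same conditional law given $X$, we have $P_{Y^{nB}|X^{nB}=x^{nB}}=P_{Z^{nB}|X^{nB}=x^{nB}}$, so the same bound transfers verbatim to the $Y$-side: with probability $\geq 1-\epsilon_B$, the realized $Y^{nB}$ lies within Hamming distance $nB\gamma_B$ of some $\tilde z^{nB}$ satisfying $f_n^B(\tilde z^{nB})=I_n^B$.

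Given $Y^{nB}$ and on this good event, $I_n^B$ must lie in $\{f_n^B(\tilde z^{nB}):d_H(\tilde z^{nB},Y^{nB})\leq nB\gamma_B\}$, whose cardinality is at most the Hamming-ball volume $\sum_{k\leq nB\gamma_B}\binom{nB}{k}(|\Omega|-1)^k\leq 2^{nBV(\gamma_B)}$ by the standard volume estimate; the two regimes in the definition of $V$ correspond precisely to whether the ball saturates $\Omega^{nB}$ or not. A Fano-type bound using the indicator of the bad event then gives
\begin{equation}
H(I_n^B|Y^{nB})\leq H(\epsilon_B)+nBV(\gamma_B)+\epsilon_B\,H(I_n^B).
\end{equation}
Using the i.i.d.\ decomposition $H(I_n^B|Y^{nB})=B\,H(I_n|Y^n)$ together with $H(I_n^B)\leq nBR_0$, dividing through by $B$, and letting $B\to\infty$ (so that $\gamma_B\to\sqrt{a_n\ln 2/2}$ and $\epsilon_B\to 0$, with $V$ continuous) yields $H(I_n|Y^n)\leq nV(\sqrt{a_n\ln 2/2})$.

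The main obstacle will be quantitatively managing the $B\to\infty$ limit cleanly: the blowing-up lemma controls $[A]_{nB\gamma_B}$ only when the mass threshold is met, so I need to average the above bound only over the \emph{typical} pairs $(x^{nB},i^B)$, while absorbing the non-typical contribution into $\epsilon_B$ via the crude bound $H(I_n^B|Y^{nB},\text{non-typical})\leq nBR_0$. Arranging that the slack $\delta_B$ in the mass bound, the slack in the blowup radius, and the overall probability $\epsilon_B$ all vanish in the right order---so that the final asymptotic constant under the square root is exactly $\sqrt{\ln 2/2}$ rather than something larger---is the delicate part. The two-regime split $r\lessgtr(|\Omega|-1)/|\Omega|$ in $V$ comes out automatically from the Hamming-ball volume estimate, which saturates at $\log|\Omega|$ past the threshold and in that regime is anyway dominated by the trivial bound $H(I_n|Y^n)\leq n\log|\Omega|$.
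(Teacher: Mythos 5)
Your proposal is correct and follows essentially the same approach as the paper: $B$-fold i.i.d.\ extension, a typicality lower bound on the conditional preimage measure $P_{Z^{nB}|X^{nB}}(A(\mathbf i))\geq 2^{-nB(a_n+o(1))}$, the generalized blowing-up lemma exploiting the product structure of the memoryless channel, the symmetry transfer from $Z$ to $Y$, a Hamming-ball volume count for $\mathbf I$ given $\mathbf Y$ on the good event, a Fano-type decomposition via the indicator of the bad event, and passage to the limit $B\to\infty$ using $H(\mathbf I|\mathbf Y)=B\,H(I_n|Y^n)$. The slack bookkeeping you flag as delicate is handled in the paper by pairing the typicality slack $\epsilon_1$ in the mass bound with a blowup-radius increment $2\sqrt{\epsilon_1}$ and driving both to zero as $B\to\infty$.
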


Plugging the bound on $H(I_n|Y^n)$ in Lemma \ref{L:keylemmafortheorem1} into \dref{E:newcontinue}, we have for any $\tau, \epsilon >0$ and sufficiently large $n$,
\begin{align}
R\leq I(X;Y)+ V\left(  \sqrt{\frac{a_n\ln 2}{2}}   \right)-a_n+ \tau+\epsilon \label{E:valid4}
\end{align}

Combining \dref{E:valid1}, \dref{E:valid2}, \dref{E:valid4} and \dref{E:valid3}, we have that if a rate $R$ is achievable, then for any $\delta>0$ and sufficiently large $n$,\begin{numcases}{}
R \leq I(X;Y,Z)+\delta   \nonumber \\
 R \leq I(X;Y) + R_0- a_n +\delta \nonumber \\
R\leq I(X;Y)+ V\left(  \sqrt{\frac{a_n\ln 2}{2}}   \right)-a_n+\delta  \nonumber
 \end{numcases}
where
$$a_n \in [0,\min\{R_0 , H(Z|X)\}].$$ Since $\delta$ can be arbitrarily small, we arrive at the following proposition.

\begin{proposition}\label{P:Equivalence}
If a rate $R$ is achievable, then there exists some $p(x)$  and $a \in \left[0, \min\left\{R_0, H(Z|X)\right\} \right]$ such that
\begin{numcases}{}
R \leq I(X;Y,Z)  \label{PE:sharpened1}\\
 R \leq I(X;Y) + R_0- a \label{PE:sharpened2}\\
R\leq I(X;Y)+ V\left(  \sqrt{\frac{a\ln 2}{2}}   \right)-a \label{PE:sharpened3}
\end{numcases}
where
\begin{numcases}{V\left(  \sqrt{\frac{a\ln 2}{2}}   \right)=}
\log |\Omega| & if $a > \frac{2}{\ln 2} \left( \frac{|\Omega|-1}{|\Omega|} \right)^2$  \nonumber \\
H\left(  \sqrt{\frac{a\ln 2}{2}}   \right)+\sqrt{\frac{a\ln 2}{2}} \log(|\Omega|-1) &  if $a \leq \frac{2}{\ln 2} \left( \frac{|\Omega|-1}{|\Omega|} \right)^2$ \nonumber.
\end{numcases}
\end{proposition}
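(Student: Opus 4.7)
The plan has two steps. First, I complete the per-code analysis of the excerpt by a compactness argument to produce the single-letter distribution $p(x)$ and real number $a$ demanded by the proposition. Second, I identify Lemma \ref{L:keylemmafortheorem1} as the main obstacle and sketch how I would prove it.

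For the compactness step, the compositions $Q_n$ lie in the simplex over the finite alphabet $\Omega_X$ and the scalars $a_n$ lie in the bounded interval $[0, R_0]$; both spaces are compact, so I extract a subsequence $(n_k)$ along which $Q_{n_k} \to p(x)$ and $a_{n_k} \to a$. The information quantities appearing in \dref{E:valid1}, \dref{E:valid2}, \dref{E:valid4}, namely $I(X;Y,Z)$, $I(X;Y)$, and $H(Z|X)$, are continuous in $(p(x), a)$, and the function $V(\sqrt{\cdot\, \ln 2 /2})$ from \dref{E:vdef1}--\dref{E:vdef2} is continuous in $a$ since the two pieces agree at the threshold $r = (|\Omega|-1)/|\Omega|$ via the identity $H(\tfrac{|\Omega|-1}{|\Omega|}) + \tfrac{|\Omega|-1}{|\Omega|}\log(|\Omega|-1) = \log|\Omega|$. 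Passing to the limit along $(n_k)$ and sending the slack $\tau + \epsilon \to 0$ yields \dref{PE:sharpened1}--\dref{PE:sharpened3} with the claimed $p(x)$ and $a$, and the range constraint $a \in [0,\min\{R_0,H(Z|X)\}]$ is inherited from \dref{E:valid3} by the same continuity. Finally, substituting $r = \sqrt{a\ln 2/2}$ into the two cases of $V$ transforms the threshold $r = (|\Omega|-1)/|\Omega|$ into the threshold $a = \tfrac{2}{\ln 2}\bigl(\tfrac{|\Omega|-1}{|\Omega|}\bigr)^2$ announced in the proposition.

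The genuinely nontrivial piece is Lemma \ref{L:keylemmafortheorem1}, and it is here that I expect essentially all of the difficulty. My plan for it follows the blueprint sketched in Section \ref{S:Newbounds}: form $B$-length i.i.d. extensions $(X^{nB}, Y^{nB}, Z^{nB}, I^B)$, where $I^B$ consists of $B$ independent copies of $I_n$, so that $\tfrac{1}{nB}H(I^B|X^{nB}) = a_n$. For $X^{nB}$ drawn from the marginal, the conditional distribution of $I^B$ is essentially concentrated on a set of $\approx 2^{nBa_n}$ values, and the $f_n^{\otimes B}$-preimage of a typical $I^B$ sits inside the conditional typical set of $Z^{nB}$ and has conditional probability $\approx 2^{-nBa_n}$. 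I then apply the blowing-up lemma in the $Z^{nB}$-space with a Pinsker-type Hamming radius on the order of $nB\sqrt{a_n \ln 2 /2}$ to enlarge this preimage into a set of near-full conditional probability; because $Y^{nB}$ and $Z^{nB}$ are statistically equivalent given $X^{nB}$, the same blown-up set has near-full probability in the $Y^{nB}$-space, so that from $Y^{nB}$ one recovers $I^B$ up to a list equal to the number of Hamming balls needed to cover the blow-up, and that number is bounded by $2^{nB[H(r) + r\log(|\Omega|-1)]}$ with $r = \sqrt{a_n \ln 2 /2}$. Dividing by $B$ and letting $B \to \infty$ washes out the blowing-up lemma's excess probability terms and the contribution from atypical $X^{nB}$, leaving $H(I_n|Y^n) \leq nV(\sqrt{a_n \ln 2 /2})$. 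The delicate engineering will be keeping the Pinsker constant $\ln 2/2$ honest, making the covering argument exact rather than heuristic, and isolating the atypical contributions — precisely where the $B$-length extension earns its keep by concentrating the empirical distributions of interest.
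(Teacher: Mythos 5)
Your proposal is correct and follows the paper's route: the paper passes from the $n$-indexed inequalities \dref{E:valid1}, \dref{E:valid2}, \dref{E:valid4}, \dref{E:valid3} to the proposition with the one-line remark ``since $\delta$ can be arbitrarily small,'' and the compactness/subsequence argument you spell out (on the simplex of compositions $Q_n$ and on $a_n\in[0,R_0]$, together with continuity of the information quantities and of $V$ across the threshold $r=(|\Omega|-1)/|\Omega|$) is exactly the missing bookkeeping. Your sketch of Lemma \ref{L:keylemmafortheorem1} also matches the paper's proof via Lemma \ref{L:blowupsimilar}, with one phrasing you should tighten: the list of candidate values of $\mathbf I$ given $\mathbf Y$ and $E=1$ is bounded by $|\mbox{Ball}(\mathbf Y, nBr)|$, the number of \emph{points} in the single Hamming ball around $\mathbf Y$ (each of which maps under $f$ to at most one $\mathbf i$), not by ``the number of Hamming balls needed to cover the blow-up''; the expression $2^{nB[H(r)+r\log(|\Omega|-1)]}$ you write is indeed this ball volume, so the substance is right but the covering language misdescribes the counting step.
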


Now we show that Proposition \ref{P:Equivalence} is in fact equivalent to Theorem \ref{T:sharpen}.

  Theorem \ref{T:sharpen}  $\rightarrow$ Proposition \ref{P:Equivalence}:
Suppose Theorem \ref{T:sharpen} is true. Then, for any $R$ achievable, there exists some $$a \in \left[0, \min\left\{R_0, H(Z|X),  \frac{2}{\ln 2}  \left( \frac{|\Omega|-1}{|\Omega|} \right)^2\right\} \right]$$ satisfying \dref{E:sharpened1}--\dref{E:sharpened3}. For such $a \leq  \frac{2}{\ln 2}  \left( \frac{|\Omega|-1}{|\Omega|} \right)^2$,  \dref{PE:sharpened3} reduces to \dref{E:sharpened3} and thus Proposition \ref{P:Equivalence} is also true.

  Proposition \ref{P:Equivalence} $\rightarrow$ Theorem \ref{T:sharpen}:
Suppose Proposition \ref{P:Equivalence} is true. Then, for any $R$ achievable, there exists some $a \in \left[0, \min\left\{R_0, H(Z|X)\right\} \right]$ satisfying \dref{PE:sharpened1}--\dref{PE:sharpened3}. If such $a\leq  \frac{2}{\ln 2}  \left( \frac{|\Omega|-1}{|\Omega|} \right)^2$, then \dref{E:sharpened1}--\dref{E:sharpened3} hold with this $a$; otherwise, \dref{E:sharpened1}--\dref{E:sharpened3} hold with the choice of $a'=  \frac{2}{\ln 2}  \left( \frac{|\Omega|-1}{|\Omega|}\right)^2$. In either case, Theorem \ref{T:sharpen} is also true.

This establishes the equivalence between Proposition \ref{P:Equivalence} and Theorem \ref{T:sharpen}, and thus completes the proof of Theorem \ref{T:sharpen}.

\subsection{Proof of Lemma \ref{L:keylemmafortheorem1}}\label{SS:proofoflemma1}
To prove the inequality in Lemma \ref{L:keylemmafortheorem1} for any fixed $n$, we go to a higher dimensional, say $nB$ dimensional space,  to invoke the concepts of typical sets, and resort to a result on measure concentration, namely, the generalized blowing-up lemma.

Specifically, consider the $B$-length i.i.d. extensions of the random variables $X^n, Y^n, Z^n$ and  $I_n$,  i.e.,
\begin{align}
\{   (X^n(b),Y^n(b),Z^n(b),I_n(b)   )    \}_{b=1}^{B}, \label{E:iidextension}
\end{align}
where for any $b\in [1:B]$, $(X^n(b),Y^n(b),Z^n(b),I_n(b))$ has the same distribution as $(X^n, Y^n, Z^n,I_n)$.  For notational convenience,  in the sequel we write the $B$-length vector $[X^n(1),X^n(2),\ldots,X^n(B)]$ as $\mathbf X$ and similarly define $\mathbf Y, \mathbf Z$ and $\mathbf I$; note here we have
$\mathbf I=[f_n(Z^n(1)), f_n(Z^n(2)),\ldots,f_n(Z^n(B)) ]=:f(\mathbf Z)$.

The following lemma is critical for establishing Lemma \ref{L:keylemmafortheorem1}. We prove this lemma after we finish the proof of Lemma \ref{L:keylemmafortheorem1}. The proof is based on typicality arguments  combined with  the generalized blowing-up lemma.

\begin{lemma}\label{L:blowupsimilar}
Let $f^{-1}(\mathbf i):=\{\underline{\mathbf{\omega}}\in \Omega^{nB}: f(\underline{\mathbf{\omega}})=\mathbf i \}$ and $\Gamma_{nB (\sqrt{\frac{a_n\ln 2}{2}}+\delta) }(f^{-1}(\mathbf i))$ be its blown-up set defined as
\begin{align*}
\Gamma_{nB (\sqrt{\frac{a_n\ln 2}{2}}+\delta) }(f^{-1}(\mathbf i)):= \left\{\underline{\mathbf{\omega}}\in \Omega^{nB}: \exists \  \underline{\mathbf{\omega}}'\in f^{-1}(\mathbf i)\text{~~s.t.~~} d(\underline{\mathbf{\omega}},\underline{\mathbf{\omega}}')\leq nB \left(\sqrt{\frac{a_n\ln 2}{2}}+\delta\right)\right\},
\end{align*}
where $d(\underline{\mathbf{\omega}},\underline{\mathbf{\omega}}')$ denotes the Hamming distance between the two sequences $\underline{\mathbf{\omega}}$ and $\underline{\mathbf{\omega}}'$. Then for any $\delta>0$ and $B$ sufficiently large,
$$\mbox{Pr}(\mathbf{Y}\in \Gamma_{nB (\sqrt{\frac{a_n\ln 2}{2}}+\delta) }(f^{-1}(\mathbf I)) ) \geq 1-\delta.$$
\end{lemma}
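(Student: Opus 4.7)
The plan is to invoke the generalized blowing-up lemma on the conditional product distribution of $\mathbf{Y}$ given $\mathbf{X}$, applied to the preimage sets $f^{-1}(\mathbf{i})$. For any realization $\mathbf{X}=\mathbf{x}$, the conditional law of $\mathbf{Y}=(Y_{1},\dots,Y_{nB})$ factorizes as the product measure $\prod_{k=1}^{nB}p(\cdot|x_{k})$ on $\Omega^{nB}$, so Marton's form of the blowing-up lemma applies: whenever $A\subseteq\Omega^{nB}$ satisfies $\mbox{Pr}(\mathbf{Y}\in A\,|\,\mathbf{X}=\mathbf{x})\geq q$,
$$\mbox{Pr}\bigl(\mathbf{Y}\in\Gamma_{nBt}(A)\,\bigl|\,\mathbf{X}=\mathbf{x}\bigr)\;\geq\;1-\exp\!\left(-2nB\bigl(t-\sqrt{\ln(1/q)/(2nB)}\bigr)^{2}\right)$$
for every $t\geq\sqrt{\ln(1/q)/(2nB)}$.

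The key step is to produce a lower bound $q$ on the conditional mass of $A=f^{-1}(\mathbf{i})$ that depends on $a_{n}$ in exactly the right way. Two ingredients do this. First, because $Y$ and $Z$ are conditionally i.i.d. given $X$ by the symmetry assumption, $\mathbf{Y}$ and $\mathbf{Z}$ share the same conditional law given $\mathbf{X}$, and hence
$$\mbox{Pr}(\mathbf{Y}\in f^{-1}(\mathbf{i})\,|\,\mathbf{X}=\mathbf{x})\;=\;\mbox{Pr}(\mathbf{Z}\in f^{-1}(\mathbf{i})\,|\,\mathbf{X}=\mathbf{x})\;=\;\mbox{Pr}(\mathbf{I}=\mathbf{i}\,|\,\mathbf{X}=\mathbf{x}).$$
Second, since $(\mathbf{X},\mathbf{I})=\{(X^{n}(b),I_{n}(b))\}_{b=1}^{B}$ is $B$ i.i.d.\ copies of $(X^{n},I_{n})$ and $H(I_{n}|X^{n})=na_{n}$ by definition of $a_{n}$, the conditional AEP applied at the block level furnishes, for any $\eta>0$, a conditionally typical set $T_{B}$ with $\mbox{Pr}((\mathbf{X},\mathbf{I})\in T_{B})\to 1$ as $B\to\infty$ and $\mbox{Pr}(\mathbf{I}=\mathbf{i}\,|\,\mathbf{X}=\mathbf{x})\geq 2^{-nB(a_{n}+\eta)}$ for every $(\mathbf{x},\mathbf{i})\in T_{B}$.

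Plugging $q=2^{-nB(a_{n}+\eta)}$ into Marton's inequality gives $\sqrt{\ln(1/q)/(2nB)}=\sqrt{(a_{n}+\eta)\ln 2/2}$, and choosing $\eta$ small enough that this is at most $\sqrt{a_{n}\ln 2/2}+\delta/2$, the choice $t=\sqrt{a_{n}\ln 2/2}+\delta$ yields
$$\mbox{Pr}\bigl(\mathbf{Y}\in\Gamma_{nB(\sqrt{a_{n}\ln 2/2}+\delta)}(f^{-1}(\mathbf{i}))\,\bigl|\,\mathbf{X}=\mathbf{x}\bigr)\;\geq\;1-\exp(-nB\delta^{2}/2)$$
uniformly over $(\mathbf{x},\mathbf{i})\in T_{B}$. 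Averaging against the joint distribution of $(\mathbf{X},\mathbf{I})$ and combining with $\mbox{Pr}(T_{B})\geq 1-\delta/2$ for $B$ large yields the claimed $\geq 1-\delta$. The main obstacle is really a bookkeeping one: ensuring that the exponent $a_{n}$ emerging from the conditional-entropy bound on $\mbox{Pr}(\mathbf{I}=\mathbf{i}\,|\,\mathbf{X}=\mathbf{x})$ passes cleanly through the square root in Marton's inequality to yield the exact radius $\sqrt{a_{n}\ln 2/2}$, and that the AEP slack $\eta$ together with the exponential slack in the blowing-up bound can both be absorbed into the single $\delta$ appearing in the statement. The conceptual step that makes the argument work is the symmetry identity $\mathbf{Y}\mid\mathbf{X}\stackrel{d}{=}\mathbf{Z}\mid\mathbf{X}$, which ports the trivially large mass of $f^{-1}(\mathbf{i})$ under the $\mathbf{Z}$-distribution onto the $\mathbf{Y}$-distribution that actually appears in the bound.
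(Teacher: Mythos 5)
Your proof is correct and follows essentially the same route as the paper: lower-bound the conditional mass $\mbox{Pr}(\mathbf{I}=\mathbf{i}\,|\,\mathbf{X}=\mathbf{x})\geq 2^{-nB(a_n+\eta)}$ on the jointly typical set via the conditional AEP, apply the generalized blowing-up lemma to the product conditional law given $\mathbf{X}=\mathbf{x}$, use the symmetry $\mathbf{Y}\,|\,\mathbf{X}\stackrel{d}{=}\mathbf{Z}\,|\,\mathbf{X}$ to port the bound to $\mathbf{Y}$, and average over typical $(\mathbf{x},\mathbf{i})$. The only cosmetic difference is the order of operations (the paper applies blowing-up to $\mathbf{Z}$ and then transfers the resulting probability statement to $\mathbf{Y}$; you transfer the mass bound to $\mathbf{Y}$ first and then apply blowing-up directly to $\mathbf{Y}$), and the paper makes the Markov chain $\mathbf{Y}\leftrightarrow\mathbf{X}\leftrightarrow\mathbf{Z}\leftrightarrow\mathbf{I}$ explicit when averaging, a step you leave implicit.
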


With the above lemma, we now upper bound $H(\mathbf I |\mathbf Y)$. Let
$$E=\mathbb I (\mathbf{Y}\in \Gamma_{nB (\sqrt{\frac{a_n\ln 2}{2}}+\delta) }(f^{-1}(\mathbf I))) $$
where $\mathbb{I}(\cdot)$ is the
indicator function defined as
\begin{numcases}{\mathbb{I}(A)=}
1  \text{~~~if $A$ holds} \nonumber \\
0  \text{~~~otherwise. } \nonumber
\end{numcases}
We have
\begin{align}
H(\mathbf I |\mathbf Y)& \leq H(\mathbf I , E|\mathbf Y)\nonumber \\
&= H( E|\mathbf Y)+H(\mathbf I |\mathbf Y,E)\nonumber \\
&\leq H(\mathbf I |\mathbf Y,E)+1\nonumber \\
&=\mbox{Pr}(E=1) H(\mathbf I |\mathbf Y,E=1)+\mbox{Pr}(E=0) H(\mathbf I |\mathbf Y,E=0)+ 1\nonumber \\
&\leq H(\mathbf I |\mathbf Y,E=1) +\delta nBR_0 +1  .\label{E:plug}
\end{align}
To bound $H(\mathbf I |\mathbf Y,E=1) $, consider a Hamming ball centered at $\mathbf Y$ of radius $nB \left(\sqrt{\frac{a_n\ln 2}{2}}+\delta\right)$, which we denote as\footnote{The Hamming ball here should be distinguished from the notion of Hamming sphere that will be used later in Section \ref{S:furtherimprovement}. Specifically, a Hamming ball centered at $\mathbf{c}$ of radius $r$, denoted by $\mbox{Ball}(\mathbf{c},r)$, is defined as the set of points that are within Hamming distance $r$ of $\mathbf{c}$, whereas a corresponding Hamming sphere, denoted by $\mbox{Sphere}(\mathbf{c},r)$, is the set of points that are at a Hamming distance equal to $r$ from $\mathbf{c}$.
}
\begin{align*}
\text{Ball}\left( \mathbf Y,  nB \left(\sqrt{\frac{a_n\ln 2}{2}}+\delta\right) \right):= \left\{\underline{\mathbf{\omega}}: d(\underline{\mathbf{\omega}}, \mathbf Y)\leq   nB \left(\sqrt{\frac{a_n\ln 2}{2}}+\delta\right)   \right\}.
\end{align*}
 The condition $E=1$, i.e., $\mathbf{Y}\in \Gamma_{nB (\sqrt{\frac{a_n\ln 2}{2}}+\delta) }(f^{-1}(\mathbf I))$, ensures that there is at least one point $\underline{\omega} \in f^{-1}(\mathbf I)$ belonging to this ball, and therefore, given $E=1$ and $\mathbf Y$ the number of different possibilities for $\mathbf I$  is bounded by  $\left|\text{Ball}\left( \mathbf Y,  nB \left(\sqrt{\frac{a_n\ln 2}{2}}+\delta\right) \right)\right|$, the number of sequences in this Hamming ball, leading to the following upper bound on $H(\mathbf I |\mathbf Y,E=1) $,
\begin{align}
H(\mathbf I |\mathbf Y,E=1)& \leq \log \left|\text{Ball}\left( \mathbf Y,  nB \left(\sqrt{\frac{a_n\ln 2}{2}}+\delta\right) \right)\right|\nonumber \\
& = nB V\left( \sqrt{\frac{a_n\ln 2}{2}}+\delta\right) \label{E:volumne}\\
&\leq nB \left[V\left( \sqrt{\frac{a_n\ln 2}{2}}\right)+\delta_1\right] \label{E:Vcont}
\end{align}
for some $\delta_1\to 0$ as $\delta \to 0$, where the function $V(\cdot)$ is defined as in \dref{E:vdef1}--\dref{E:vdef2}, \dref{E:volumne} follows from the characterization of the volume of a Hamming ball (see Appendix \ref{A:ball} for details), and \dref{E:Vcont} follows from the continuity of the function $V(\cdot)$.
Plugging \dref{E:Vcont} into \dref{E:plug}, we have
\begin{align*}
H(\mathbf I |\mathbf Y) \leq  nB \left[V\left( \sqrt{\frac{a_n\ln 2}{2}}\right)+\delta_1\right]+\delta nBR_0 +1.
\end{align*}
Dividing $B$ at both sides of the above inequality and noting that
$$H(\mathbf I |\mathbf Y)=\sum_{b=1}^B H(I_n(b)|Y^n(b))=BH(I_n|Y^n),$$
we have
\begin{align}
H(I_n|Y^n) \leq  n\left( V\left( \sqrt{\frac{a_n\ln 2}{2}}\right)+\delta_1+\delta R_0+\frac{1}{nB}\right).\label{E:arbitrarysmall}
\end{align}
Since $\delta, \delta_1$ and $\frac{1}{nB}$ in \dref{E:arbitrarysmall} can all be made arbitrarily small by choosing $B$ sufficiently large, we obtain
\begin{align}
H(I_n|Y^n) \leq  nV\left( \sqrt{\frac{a_n\ln 2}{2}}\right).
\end{align}
This finishes the proof of Lemma \ref{L:keylemmafortheorem1}.

We are now in a position to prove Lemma \ref{L:blowupsimilar}. For this, we will apply the following generalized blowing-up lemma \cite[Lemma 12]{Measure}.

\begin{lemma}[Generalized Blowing-Up Lemma]\label{L:blowingup}
Let $U_1,U_2,\ldots, U_n$ be $n$ independent random variables taking values in a finite set $\mathcal{U}$. Then, for any $A \subseteq \mathcal{U} ^n$ with $\mbox{Pr}(U^n\in A)\geq 2^{-n a_n}$,
\begin{align*}
\mbox{Pr}(U^n\in\Gamma_{n(\sqrt{ \frac{a_n \ln 2}{2}}+r)} (A))\geq 1- e ^{-2nr^2}, \forall r>0.
\end{align*}
\end{lemma}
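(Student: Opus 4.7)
The plan is to derive this blowing-up estimate directly from the concentration of Lipschitz functions of independent random variables, specifically McDiarmid's bounded-differences inequality applied to the distance-to-$A$ function. The key observation is that for a fixed set $A\subseteq\mathcal{U}^n$, the function $\varphi(u^n):=d(u^n,A)=\min_{v^n\in A}d(u^n,v^n)$ is $1$-Lipschitz coordinate-wise in the Hamming metric: perturbing a single symbol $u_i$ changes $d(u^n,v^n)$ by at most one for every $v^n\in A$, hence changes the minimum $\varphi(u^n)$ by at most one. Therefore $\varphi$ satisfies the bounded-differences condition with constants $c_1=\cdots=c_n=1$, and the relevant sensitivity parameter is $\sum_i c_i^2=n$.

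First I would invoke McDiarmid's inequality, which follows from applying Azuma--Hoeffding to the Doob martingale $M_k=\E[\varphi(U^n)\mid U_1,\ldots,U_k]$ under the filtration generated by the independent $U_i$'s: for every $t>0$,
$$\Pr(\varphi(U^n)\geq \E\varphi+t)\leq e^{-2t^2/n},\qquad \Pr(\varphi(U^n)\leq \E\varphi-t)\leq e^{-2t^2/n}.$$
Next I would use the hypothesis $\Pr(U^n\in A)\geq 2^{-na_n}$ to upper bound $\E\varphi$. Since $\{U^n\in A\}$ coincides with $\{\varphi(U^n)=0\}$ and $\varphi\geq 0$, choosing $t=\E\varphi$ in the lower-tail estimate gives
$$2^{-na_n}\leq \Pr(\varphi(U^n)=0)\leq \exp\!\left(-\frac{2(\E\varphi)^2}{n}\right),$$
from which $\E[\varphi(U^n)]\leq n\sqrt{a_n\ln 2/2}$ by taking logarithms and rearranging. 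Finally, setting $t=nr$ in the upper-tail estimate yields
$$\Pr\!\left(\varphi(U^n)>n\!\left(\sqrt{\tfrac{a_n\ln 2}{2}}+r\right)\right)\leq \Pr\!\left(\varphi(U^n)>\E\varphi+nr\right)\leq e^{-2nr^2},$$
which is exactly the complement of the stated blow-up bound because $\{\varphi(U^n)\leq n(\sqrt{a_n\ln 2/2}+r)\}=\{U^n\in\Gamma_{n(\sqrt{a_n\ln 2/2}+r)}(A)\}$.

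The main obstacle is really just the constant bookkeeping: converting the base-$2$ lower bound $2^{-na_n}$ into the natural-log exponent coming from McDiarmid is what produces the factor $\ln 2/2$ inside the square root, and the Lipschitz constant $c_i=1$ paired with the scaling $t=nr$ must be combined so that $\sum_i c_i^2=n$ yields exactly the exponent $-2nr^2$ in the final tail. If the reader prefers a self-contained derivation, I would unfold McDiarmid by writing out the Doob martingale explicitly and applying Azuma--Hoeffding with the bounded-increment property that follows from independence of the $U_i$'s (so that the conditional range of each martingale step is at most $c_i=1$). As an independent cross-check, the same constants can be obtained via Marton's transportation inequality $W_1(Q,P^{\otimes n})\leq\sqrt{(n/2)D(Q\Vert P^{\otimes n})}$, applied to the conditional laws $Q=P^{\otimes n}(\cdot\mid A)$ and $Q'=P^{\otimes n}(\cdot\mid \Gamma_s(A)^{c})$ together with the triangle inequality for $W_1$, giving a conceptually different but equivalent route to the same blow-up bound.
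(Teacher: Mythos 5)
Your argument is correct: the distance function $d(\cdot,A)$ has bounded differences with constants $c_i=1$, McDiarmid's lower tail at $t=\E\varphi$ converts the hypothesis $\Pr(U^n\in A)\ge 2^{-na_n}$ into $\E\varphi\le n\sqrt{a_n\ln 2/2}$, and the upper tail at $t=nr$ gives exactly $e^{-2nr^2}$. The paper does not prove this lemma at all -- it cites it as Lemma~12 of Raginsky--Sason -- and your derivation is precisely the standard proof of that cited result (the transportation-inequality route you mention as a cross-check is the other standard one and yields the same constants), so there is nothing to object to.
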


\begin{proof}[Proof of Lemma \ref{L:blowupsimilar}]
Consider any $(\mathbf{x},\mathbf{i} )\in \mathcal T_{\epsilon}^{(B)}(X^n, I_n)$, where $\mathcal T_{\epsilon}^{(B)}(X^n, I_n)$ denotes the $\epsilon$-jointly typical sets\footnote{This paper adopts the same definitions and notations for typical, jointly typical, and conditionally typical sets as those in \cite{ElGamalKim}.} with respect to $(X^n, I_n)$. From the property of jointly typical sequences (cf. \cite[Sec. 2.5]{ElGamalKim}), we have for some $\epsilon_1 \to 0$ as $\epsilon \to 0$,
\begin{align*}
p( \mathbf i  |\mathbf{x})& \geq 2^{-B( H(I_n|X^n)+\epsilon_1 )} \geq 2^{-nB( a_n+\epsilon_1)},
\end{align*}
i.e.,
$$\mbox{Pr}(\mathbf{Z} \in f^{-1}(\mathbf i) |\mathbf{x})\geq 2^{-nB( a_n+\epsilon_1 )}.$$
Note that due to the discrete memoryless property of the channel, given $\mathbf{x}$, $\mathbf Z$ is an $nB$-length sequence of independent random variables, and we then have, by applying Lemma \ref{L:blowingup}, that
\begin{align*}
\mbox{Pr}(\mathbf{Z}\in \Gamma_{nB (\sqrt{\frac{a_n\ln 2}{2}}+2\sqrt{\epsilon_1}) }(f^{-1}(\mathbf i) ) |\mathbf{x})
&=\mbox{Pr}(\mathbf{Z}\in \Gamma_{nB (\sqrt{\frac{(a_n+\epsilon_1)\ln 2}{2}} + [\sqrt{\frac{a_n\ln 2}{2}}+2\sqrt{\epsilon_1}-\sqrt{\frac{(a_n+\epsilon_1)\ln 2}{2}}])      }(f^{-1}(\mathbf i) ) |\mathbf{x})\\
&\geq \mbox{Pr}(\mathbf{Z}\in \Gamma_{nB (\sqrt{\frac{(a_n+\epsilon_1)\ln 2}{2}} + [\sqrt{\frac{a_n\ln 2}{2}}+2\sqrt{\epsilon_1}-\sqrt{\frac{a_n \ln 2}{2}}-\sqrt{\frac{\epsilon_1\ln 2}{2}}])      }(f^{-1}(\mathbf i) ) |\mathbf{x})\\
&\geq \mbox{Pr}(\mathbf{Z}\in \Gamma_{nB (\sqrt{\frac{(a_n+\epsilon_1)\ln 2}{2}} + \sqrt{\epsilon_1} )      }(f^{-1}(\mathbf i) ) |\mathbf{x})\\
&\geq 1- e^{-2nB\epsilon_1 }\\
&\geq 1-\sqrt{\epsilon_1}
\end{align*}
for sufficiently large $B$. Noting that $\mathbf{Y}$ and $\mathbf{Z}$ are identically distributed given $\mathbf{X}$, we obtain
$$\mbox{Pr}(\mathbf{Y}\in \Gamma_{nB (\sqrt{\frac{a_n\ln 2}{2}}+2\sqrt{\epsilon_1}) }(f^{-1}(\mathbf i) ) |\mathbf{x})\geq 1-\sqrt{\epsilon_1},  $$
and thus,
\begin{align}
\mbox{Pr}(\mathbf{Y}\in \Gamma_{nB (\sqrt{\frac{a_n\ln 2}{2}}+2\sqrt{\epsilon_1}) }(f^{-1}(\mathbf I) ))&= \sum_{(\mathbf{x},\mathbf{i})}\mbox{Pr}(\mathbf{Y}\in \Gamma_{nB (\sqrt{\frac{a_n\ln 2}{2}}+2\sqrt{\epsilon_1}) }(f^{-1}(\mathbf i) )| \mathbf{x},\mathbf{i}   )p(\mathbf{x},\mathbf{i})\nonumber \\
&=  \sum_{(\mathbf{x},\mathbf{i})}\mbox{Pr}(\mathbf{Y}\in \Gamma_{nB (\sqrt{\frac{a_n\ln 2}{2}}+2\sqrt{\epsilon_1}) }(f^{-1}(\mathbf i) )| \mathbf{x}   )p(\mathbf{x},\mathbf{i})\label{E:markov}\\
&\geq  \sum_{(\mathbf{x},\mathbf{i})\in \mathcal T_{\epsilon}^{(B)}(X^n, I_n)}\mbox{Pr}(\mathbf{Y}\in \Gamma_{nB (\sqrt{\frac{a_n\ln 2}{2}}+2\sqrt{\epsilon_1}) }(f^{-1}(\mathbf i) )| \mathbf{x}   )p(\mathbf{x},\mathbf{i})\nonumber\\
&\geq (1-\sqrt{\epsilon_1})  \sum_{(\mathbf{x},\mathbf{i})\in \mathcal T_{\epsilon}^{(B)}(X^n, I_n)}p(\mathbf{x},\mathbf{i}) \nonumber\\
&\geq (1-\sqrt{\epsilon_1}) ^2  \label{E:approach1}\\
&\geq 1-2\sqrt{\epsilon_1} \nonumber
\end{align}
for sufficiently large $B$, where \dref{E:markov} follows due to the Markov chain: $\mathbf Y  \leftrightarrow \mathbf X \leftrightarrow \mathbf Z \leftrightarrow\mathbf I$, and \dref{E:approach1} follows since $\text{Pr}(\mathcal T_{\epsilon}^{(B)}(X^n, I_n))\to 1$ as $B \to \infty$. Finally, choosing $\delta$ to be $2\sqrt{\epsilon_1}$ concludes the proof of Lemma \ref{L:blowupsimilar}.
\end{proof}

\section{Proof of Theorem \ref{T:novel}}\label{S:novelproof}

The bounds \dref{E:ournewbound1}--\dref{E:ournewbound2} are the same as \dref{E:sharpened1}--\dref{E:sharpened2}, which have been proved in Section \ref{S:sharpenproof}. To show \dref{E:ournewbound3}, still consider the reliable fixed composition codes in \dref{E:fixedcompositioncode}. Then we have the following lemma, which upper bounds the conditional entropy $H(Y^n|I_n)$ and whose proof is given in Section \ref{SS:conditionalentropy2}.

\begin{lemma}\label{L:conditionalentropy2}
For any $n$-channel use code with fixed composition $Q_n$,
\begin{align}
H(Y^n|I_n)\leq H(X^n|I_n) -H(X^n|Z^n) + n H(Y|X) +n \Delta\left(Q_n,\sqrt{\frac{a_n \ln 2}{2}}\right), \label{E:conditionalentropy2}
\end{align}
where $H(Y |X)$ is calculated based on $Q_n(x)p(\omega|x)$, $a_n=\frac{1}{n}H(I_n|X^n)$, and $\Delta(\cdot,\cdot)$ is as defined in \dref{E:objective_def}--\dref{E:constraint_def}.

\end{lemma}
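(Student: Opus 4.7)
My plan parallels the proof of Lemma \ref{L:keylemmafortheorem1}, sharpening its Hamming-ball volume estimate into a method-of-types bound that generates not just the cardinality term $H(\tilde{p}|Q_n)$ but also the log-likelihood correction $D(\tilde{p}\,\|\,p\,|Q_n)$ that appears in the definition of $\Delta$. The first step is a reformulation. Using $H(Y^n)=H(Z^n)$ (which holds under the symmetric channel and fixed composition $Q_n$, since $(X^n,Y^n)$ and $(X^n,Z^n)$ are identically distributed) together with $H(I_n|Z^n)=0$, a chain-rule computation yields
\begin{align*}
H(Y^n|I_n)-H(Z^n|I_n) &= H(I_n|Y^n),\\
H(Z^n|I_n) &= H(X^n|I_n)-H(X^n|Z^n)+nH(Z|X)-na_n.
\end{align*}
Substituting these and using $H(Y|X)=H(Z|X)$, the desired inequality \eqref{E:conditionalentropy2} is seen to be equivalent to
$$H(I_n|Y^n) \;\leq\; H(I_n|X^n)+n\Delta\bigl(Q_n,\sqrt{a_n\ln 2/2}\bigr),$$
a direct analog of Lemma \ref{L:keylemmafortheorem1} with $\Delta$ in place of $V$.

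Next I will pass to the $B$-letter i.i.d.\ extension $(\mathbf X,\mathbf Y,\mathbf Z,\mathbf I)$ and invoke Lemma \ref{L:blowupsimilar}: with probability at least $1-\delta$, the event $E=\{\mathbf Y\in \Gamma_r(f^{-1}(\mathbf I))\}$ occurs, where $r=nB(\sqrt{a_n\ln 2/2}+\delta)$. As in the proof of Lemma \ref{L:keylemmafortheorem1}, $H(\mathbf I|\mathbf Y)$ equals $H(\mathbf I|\mathbf Y,E=1)$ up to an $o(nB)$ slack. On $E=1$, every candidate $\mathbf i$ is witnessed by some $\mathbf z'\in f^{-1}(\mathbf i)\cap \mathrm{Ball}(\mathbf Y,r)$; I will partition such $\mathbf z'$ by the joint type of $(\mathbf x,\mathbf z')$ for a typical $\mathbf x$, which has the form $Q_n\times\tilde p(\omega|x)$. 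By the triangle inequality applied to the empirical distributions of $(\mathbf x,\mathbf Y)$ and $(\mathbf x,\mathbf z')$, the admissible $\tilde p$ satisfy the total-variation bound $\tfrac{1}{2}\sum_{x,\omega}|Q_n(x)\tilde p(\omega|x)-Q_n(x)p(\omega|x)|\leq \sqrt{a_n\ln 2/2}+o(1)$, which is precisely the constraint in Definition \ref{D:Delta}.

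The crux is then a cross-entropy / Gibbs inequality argument. For each admissible $\tilde p$, the conditional type class contains at most $2^{nBH(\tilde p|Q_n)}$ sequences, and every such $\mathbf y$ has channel likelihood $\prod_j p(y_j|x_j)=2^{-nB[H(\tilde p|Q_n)+D(\tilde p\|p|Q_n)]}$, depending only on the joint type. Constructing a surrogate $q(\mathbf i|\mathbf Y)$ that distributes mass across the type classes in proportion to these likelihoods and then applying $H(\mathbf I|\mathbf Y)\leq -\E[\log q(\mathbf I|\mathbf Y)]$ should yield, after simplification,
$$H(\mathbf I|\mathbf Y)-H(\mathbf I|\mathbf X)\;\leq\; nB\max_{\tilde p}\bigl[H(\tilde p|Q_n)+D(\tilde p\|p|Q_n)-H(p|Q_n)\bigr]+o(nB),$$
where the max is taken over $\tilde p$ meeting the total-variation constraint; by Definition \ref{D:Delta} the right-hand side is exactly $nB\,\Delta(Q_n,\sqrt{a_n\ln 2/2})$. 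Dividing by $B$ and sending $B\to\infty$, $\delta\to 0$ concludes the proof. The hardest step I anticipate is this final one: unlike Lemma \ref{L:keylemmafortheorem1}'s pure volume bound, I must weight the type classes by the channel likelihood in order to produce the divergence $D(\tilde p\|p|Q_n)$ alongside the cardinality $H(\tilde p|Q_n)$, and this weighting must interact correctly with the blowing-up event and the Gibbs inequality.
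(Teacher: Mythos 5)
Your reformulation step is correct: $H(Y^n|I_n)-H(Z^n|I_n)=H(I_n|Y^n)$ under $H(Y^n)=H(Z^n)$, and the second identity gives the stated chain-rule decomposition, so the lemma is indeed equivalent to $H(I_n|Y^n)\leq H(I_n|X^n)+n\Delta\bigl(Q_n,\sqrt{a_n\ln 2/2}\bigr)$, i.e. to \eqref{eq:diffI2}. This is also how Section~\ref{SS:improvingcutset} frames the result. However, the paper's actual proof does \emph{not} bound $H(I_n|Y^n)$; it directly lower-bounds $p(\mathbf y|\mathbf i)$ for $(\mathbf y,\mathbf i)$ in a high-probability set, which upper-bounds $H(\mathbf Y|\mathbf I)$, and the $H(X^n|I_n)$, $-H(X^n|Z^n)$, and $nH(Y|X)+n\Delta$ terms each emerge from a different factor in the decomposition $p(\mathbf y|\mathbf i)\geq\sum_{\mathbf x\in\mathcal T(X^n|\mathbf z,\mathbf i)}p(\mathbf y|\mathbf x)p(\mathbf x|\mathbf i)$ combined with the distance constraint $d(\mathbf y,\mathbf z)\lesssim nB\sqrt{a_n\ln 2/2}$.

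The gap in your proposal is in the ``crux'' step, which you yourself flag as the hardest. To bound $H(\mathbf I|\mathbf Y)$ via $-\E[\log q(\mathbf I|\mathbf Y)]$, the surrogate $q(\mathbf i|\mathbf y)$ must be a function of $\mathbf y$ alone. But the weighting you propose assigns mass to type classes ``in proportion to these likelihoods,'' where the likelihood is $\prod_j p(y_j|x_j)=2^{-nB[H(\tilde p|Q_n)+D(\tilde p\|p|Q_n)]}$ for $\tilde p$ the conditional type of $\mathbf y$ (or of $\mathbf z'$) given $\mathbf x$ --- a quantity that requires knowing $\mathbf x$, which a surrogate for $\mathbf I$ given $\mathbf Y$ cannot access. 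The conditional types you want to weight by ($(\mathbf x,\mathbf z')$ relative to $Q_n\times p$) are simply not observable from $\mathbf y$: what is observable is the type of $(\mathbf y,\mathbf z')$, which is a different object and does not directly produce $H(\tilde p|Q_n)+D(\tilde p\|p|Q_n)$. Relatedly, the right-hand side of your reformulated target contains the data-dependent term $H(\mathbf I|\mathbf X)=nBa_n$, and it is not clear how a $\mathbf y$-only surrogate would manufacture it; in the pure cardinality argument of Lemma~\ref{L:keylemmafortheorem1}, $a_n$ enters only through the blow-up radius, producing $V(\sqrt{a_n\ln 2/2})$, and this is a genuinely different (non-comparable) bound --- as the paper notes, Theorems~\ref{T:sharpen} and \ref{T:novel} do not include each other. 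The paper's route of bounding $p(\mathbf y|\mathbf i)$, rather than $p(\mathbf i|\mathbf y)$, sidesteps both difficulties because it introduces an explicit sum over $\mathbf x\in\mathcal T_\epsilon^{(B)}(X^n|\mathbf z,\mathbf i)$, so the dependence on $\mathbf x$ is handled by the conditional-typicality count $|\mathcal T_\epsilon^{(B)}(X^n|\mathbf z,\mathbf i)|\approx 2^{BH(X^n|Z^n)}$ together with $p(\mathbf x|\mathbf i)\gtrsim 2^{-BH(X^n|I_n)}$, while the $\Delta+H(Y|X)$ exponent comes from lower-bounding $p(\mathbf y|\mathbf x)$ over exactly the type-constrained set you identify. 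I would recommend abandoning the Gibbs surrogate for $H(\mathbf I|\mathbf Y)$ and instead working with $p(\mathbf y|\mathbf i)$ as the paper does.
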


With this lemma, we then have
\begin{align}
n(R-\tau)  &\leq I(X^n;Y^n,I_n) + n\epsilon   \nonumber\\
&= I(X^n;I_n) +I(X^n;Y^n|I_n) + n\epsilon   \nonumber\\
&= H(X^n)- H(X^n|I_n) + H(Y^n|I_n)- H(Y^n|X^n)  + n\epsilon   \nonumber\\
&\leq H(X^n)- H(X^n|I_n) + \left[H(X^n|I_n) -H(X^n|Z^n) + nH(Y|X) +n\Delta\left(Q_n,\sqrt{\frac{a_n \ln 2}{2}}\right)\right]\nonumber\\
&~~~~- H(Y^n|X^n)  + n\epsilon   \nonumber\\
      &= I(X^n;Z^n) +  n\Delta\left(Q_n,\sqrt{\frac{a_n \ln 2}{2}}\right) +n\epsilon  \label{E:cancel} \\
 &\leq n\left[ I(X;Y) +  \Delta\left(Q_n,\sqrt{\frac{a_n \ln 2}{2}}\right)+\epsilon\right] \label{E:symmetryyz}
\end{align}
for any $\tau, \epsilon>0$ and $n$ sufficiently large, where in \dref{E:cancel} we have used the fact that $H(Y^n|X^n)=n H(Y|X)$ (cf. Lemma \ref{L:entropy}), and \dref{E:symmetryyz} follows from the symmetry between $Y^n$ and $Z^n$ and Lemma \ref{L:entropy} again. This proves the bound \dref{E:ournewbound3} and hence Theorem \ref{T:novel}.

\subsection{Proof of Lemma \ref{L:conditionalentropy2}}\label{SS:conditionalentropy2}
The remaining step then is to show the entropy inequality \dref{E:conditionalentropy2} in Lemma \ref{L:conditionalentropy2}. For this, again we look at the $B$-length i.i.d. sequence of $(X^n, Y^n, Z^n, I_n)$, i.e.,
$$(\mathbf X, \mathbf Y, \mathbf Z, \mathbf I):=\{ (X^n(b), Y^n(b), Z^n(b), I_n(b)) \}_{b=1}^B.$$
The following lemma is crucial for proving inequality \dref{E:conditionalentropy2}, and its own proof will be given in the next subsection.
\begin{lemma}\label{L:keytoentropy2}
For any $\delta>0$ and $B$ sufficiently large, there exists a set $\mathcal S(Y^n, I_n)$ of $(\mathbf y, \mathbf i)$ pairs such that
\begin{align*}
\mbox{Pr}( (\mathbf Y, \mathbf I) \in  \mathcal S(Y^n, I_n)) \geq 1-\delta,
\end{align*}
and for any $(\mathbf y, \mathbf i) \in \mathcal S(Y^n, I_n)$,
\begin{align*}
 p(\mathbf{y}|\mathbf{i})\geq 2^{- B(H(X^n|I_n) -H(X^n|Z^n) + nH(Y|X) +n\Delta\left(Q_n,\sqrt{\frac{a_n \ln 2}{2}}\right)+ \delta)} .
\end{align*}
\end{lemma}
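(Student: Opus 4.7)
The plan is to derive the pointwise lower bound on $p(\mathbf{y}|\mathbf{i})$ by expanding
\[
p(\mathbf{y}|\mathbf{i})=\sum_{\mathbf{x}} p(\mathbf{x}|\mathbf{i})\,p(\mathbf{y}|\mathbf{x})
\]
--- which follows from the Markov chain $\mathbf{I}\leftrightarrow\mathbf{Z}\leftrightarrow\mathbf{X}\leftrightarrow\mathbf{Y}$ --- and restricting the sum to codewords jointly typical with a nearby preimage $\mathbf{z}^{*}\in f^{-1}(\mathbf{i})$ supplied by Lemma \ref{L:blowupsimilar}. Each surviving summand will be controlled via Definition \ref{D:Delta} and the number of such summands via the conditional typical set size.

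Concretely, I would take $\mathcal{S}(Y^n,I_n)$ to be the set of pairs $(\mathbf{y},\mathbf{i})$ satisfying (i) $(\mathbf{y},\mathbf{i})\in\mathcal{T}_{\epsilon}^{(B)}(Y^n,I_n)$ and (ii) $\mathbf{y}\in\Gamma_{r}(f^{-1}(\mathbf{i}))$, where $r:=nB(\sqrt{\frac{a_n\ln 2}{2}}+\delta_1)$. Condition (i) holds with probability at least $1-\delta/2$ by the law of large numbers in the $B$-letter extension, and condition (ii) with probability at least $1-\delta/2$ by Lemma \ref{L:blowupsimilar}; hence $\mbox{Pr}\{(\mathbf{Y},\mathbf{I})\in\mathcal{S}\}\geq 1-\delta$ for $B$ large.

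Now fix any $(\mathbf{y},\mathbf{i})\in\mathcal{S}$, pick $\mathbf{z}^*\in f^{-1}(\mathbf{i})$ with $d(\mathbf{y},\mathbf{z}^*)\leq r$, and restrict the sum to $\mathbf{x}\in\mathcal{T}_{\epsilon}^{(B)}(X^n|\mathbf{z}^*)$. Since $\mathbf{i}=f(\mathbf{z}^*)$, joint typicality of $(\mathbf{x},\mathbf{z}^*)$ transfers to joint typicality of $(\mathbf{x},\mathbf{i})$, so $p(\mathbf{x}|\mathbf{i})\geq 2^{-B(H(X^n|I_n)+\epsilon_1)}$. Moreover, because $d(\mathbf{y},\mathbf{z}^*)\leq r$, the empirical joint type of $(\mathbf{x},\mathbf{y})$ differs from $Q_n(x)p(\omega|x)$ in total variation by at most $\sqrt{\frac{a_n\ln 2}{2}}+\delta_1+\epsilon$, and writing this type as $Q_n(x)\tilde p(\omega|x)$, the identity $p(\mathbf{y}|\mathbf{x})=2^{-nB[H(\tilde p|Q_n)+D(\tilde p\|p|Q_n)]}$ combined with Definition \ref{D:Delta} yields
\[
p(\mathbf{y}|\mathbf{x})\geq 2^{-nB[H(Y|X)+\Delta(Q_n,\sqrt{\frac{a_n\ln 2}{2}}+\delta_1+\epsilon)]}.
\]
Multiplying these two bounds by the cardinality $|\mathcal{T}_{\epsilon}^{(B)}(X^n|\mathbf{z}^*)|\geq 2^{B(H(X^n|Z^n)-\epsilon_2)}$, and invoking continuity of $\Delta(Q_n,\cdot)$ in the fully connected regime, all small parameters can be absorbed into a single $\delta$ to deliver the claimed bound.

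The hard part will be justifying the count $|\mathcal{T}_{\epsilon}^{(B)}(X^n|\mathbf{z}^*)|\geq 2^{B(H(X^n|Z^n)-\epsilon_2)}$: the sequence $\mathbf{z}^{*}$ produced by Lemma \ref{L:blowupsimilar} is chosen for its geometric closeness to $\mathbf{y}$, not drawn from $p(\mathbf{z})$. I would handle this by using the symmetry that $Y$ and $Z$ have identical marginal distributions, so that $d(\mathbf{y},\mathbf{z}^*)\leq r$ together with marginal typicality of $\mathbf{y}$ forces $\mathbf{z}^*$ to have the correct marginal type as well, after which the standard conditional typical set count applies with only a mild inflation of the typicality parameter, absorbed into $\epsilon_2$.
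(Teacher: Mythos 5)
Your overall architecture is the right one and matches the paper: expand $p(\mathbf{y}|\mathbf{i})=\sum_{\mathbf x}p(\mathbf{x}|\mathbf{i})p(\mathbf{y}|\mathbf{x})$ using the Markov chain, restrict the sum to codeword blocks conditionally typical with a nearby $\mathbf z^*$, bound each $p(\mathbf{y}|\mathbf{x})$ via the type identity and Definition~\ref{D:Delta}, and count the surviving $\mathbf{x}$'s. However, you have correctly identified the real obstacle, and your proposed patch does not resolve it.

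The difficulty is exactly where you said it is: the $\mathbf{z}^*\in f^{-1}(\mathbf{i})$ produced by blowing up the \emph{full preimage} need not lie in $\mathcal T_{\epsilon}^{(B)}(Z^n)$, hence $\mathcal T_{\epsilon}^{(B)}(X^n|\mathbf{z}^*)$ may be empty. Your fix — inferring the ``correct marginal type'' of $\mathbf z^*$ from the Hamming proximity $d(\mathbf y,\mathbf z^*)\le r$ together with the typicality of $\mathbf y$ — fails for two reasons. First, the typicality needed for the cardinality count $|\mathcal T_{\epsilon}^{(B)}(X^n|\mathbf{z}^*)|\geq 2^{B(H(X^n|Z^n)-\epsilon_2)}$ is $B$-letter typicality, i.e., the empirical distribution of the $B$ blocks $z^n(1),\dots,z^n(B)$ over $\Omega^n$ must be close to $p(z^n)$. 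Hamming distance at most $nBd$ in $\Omega^{nB}$ gives no control over this block-level type: a budget of $nBd$ symbol flips can be spent one flip per block in $dB\cdot n$ blocks (or $nBd$ blocks if $d<1/n$), so $\mathbf z^*$ can be shifted onto a completely atypical $B$-letter type even when $\mathbf y$ is perfectly typical. Hamming proximity only bounds the \emph{single-letter} type gap (by $2d$, and $d=\sqrt{a_n\ln 2/2}$ is a constant, not vanishing), which is far too weak. Second, even if $\mathbf z^*$ were marginally $B$-letter typical, you would still need it to be \emph{conditionally} typical with $\mathbf i$ for the step where you transfer joint typicality of $(\mathbf x,\mathbf z^*)$ to joint typicality of $(\mathbf x,\mathbf i)$ and invoke $p(\mathbf x|\mathbf i)\geq 2^{-B(H(X^n|I_n)+\epsilon_1)}$; a point of $f^{-1}(\mathbf i)$ that happens to be close to $\mathbf y$ carries no such guarantee.

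The paper sidesteps this entirely by blowing up the \emph{conditional} typical set $\mathcal T_{\epsilon}^{(B)}(Z^n|\mathbf i)$ rather than $f^{-1}(\mathbf i)$, taking $\mathcal S(Y^n,I_n)$ to be the set of $(\mathbf y,\mathbf i)$ with $\mathbf y\in\Gamma_{nB(\sqrt{a_n\ln 2/2}+\epsilon)}(\mathcal T_{\epsilon}^{(B)}(Z^n|\mathbf i))$. This forces the witness $\mathbf z$ to belong to $\mathcal T_{\epsilon}^{(B)}(Z^n|\mathbf i)$, so it is automatically jointly typical with $\mathbf i$ and the cardinality and probability bounds for $\mathcal T_{\epsilon}^{(B)}(X^n|\mathbf z,\mathbf i)$ are legitimate. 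The price is that one must verify the blow-up still covers $\mathbf Y$ with high probability; this follows because, by the joint typicality lemma, $\Pr(\mathbf Z\in\mathcal T_{\tilde\epsilon}^{(B)}(Z^n|\mathbf i)\mid\mathbf x)\geq 2^{-B(I(Z^n;I_n|X^n)+\tilde\epsilon_1)}=2^{-nB(a_n+\tilde\epsilon_1)}$ for typical $(\mathbf x,\mathbf i)$, so Lemma~\ref{L:blowingup} applies to the conditional typical set with the same exponent as in Lemma~\ref{L:blowupsimilar}. That exponent computation — observing that the conditional typical set, not just the full preimage, already captures a $2^{-nBa_n}$ share of the conditional probability mass — is the missing idea in your proposal.
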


We will now use Lemma~\ref{L:keytoentropy2} to prove Lemma~\ref{L:conditionalentropy2}. Letting $E=\mathbb I ((\mathbf Y, \mathbf I) \in  \mathcal S(Y^n, I_n))$, we have for any $\delta>0$ and $B$ sufficiently large,
\begin{align}
H(\mathbf Y |\mathbf I)& \leq H(\mathbf Y , E|\mathbf I)\nonumber \\
&= H( E|\mathbf I)+H(\mathbf Y|\mathbf I,E)\nonumber \\
&\leq H(\mathbf Y |\mathbf I,E)+1\nonumber \\
&=\mbox{Pr}(E=1) H(\mathbf Y |\mathbf I,E=1)+\mbox{Pr}(E=0) H(\mathbf Y |\mathbf I,E=0)+ 1\nonumber \\
&\leq H(\mathbf Y |\mathbf I,E=1) +\delta nB\log |\Omega| +1 \nonumber  \\
&= -\sum_{(\mathbf y, \mathbf i) \in  \mathcal S(Y^n, I_n)} p(\mathbf y, \mathbf i |E=1) \log  p(\mathbf y| \mathbf i , E=1) +\delta nB\log |\Omega| +1\nonumber \\
& \leq  -\sum_{(\mathbf y, \mathbf i) \in  \mathcal S(Y^n, I_n)} p(\mathbf y, \mathbf i |E=1) \log  p(\mathbf y| \mathbf i ) +\delta nB\log |\Omega| +1\nonumber\\
&\leq B\left[H(X^n|I_n) -H(X^n|Z^n) + nH(Y|X) +n\Delta\left(Q_n,\sqrt{\frac{a_n \ln 2}{2}}\right)+ \delta\right]+\delta nB\log |\Omega| +1\label{E:div}.
\end{align}
Dividing $B$ at both sides of \dref{E:div} and noticing that $H(\mathbf Y |\mathbf I)=BH(Y^n|I_n)$, we have
\begin{align*}
H(Y^n|I_n)\leq H(X^n|I_n) -H(X^n|Z^n) + nH(Y|X) +n\Delta\left(Q_n,\sqrt{\frac{a_n \ln 2}{2}}\right)+ \delta +\delta n\log |\Omega| +\frac{1}{B}.
\end{align*}
Since both $\delta$ and $\frac{1}{B}$ in the above inequality can be made arbitrarily small by choosing $B$ sufficiently large, Lemma \ref{L:conditionalentropy2} is thus proved.

\subsection{Proof of Lemma \ref{L:keytoentropy2}}

Let $\mathcal S(Y^n, I_n)$ be defined as
\begin{align}
\mathcal S(Y^n, I_n):=\{ (\mathbf y, \mathbf i):  \mathbf{y}\in \Gamma_{nB (\sqrt{\frac{a_n\ln 2}{2}}+\epsilon) }( \mathcal T_{\epsilon}^{(B)}(Z^n|\mathbf i)  ) \}.\label{E:exactdef}
\end{align}
We first show that for any $\epsilon>0$ and $B$ sufficiently large,
\begin{align}
\mbox{Pr}( (\mathbf Y, \mathbf I) \in \mathcal S(Y^n, I_n) ) \geq 1-\epsilon.\label{E:chooseepsilon}
\end{align}
For this, consider any $(\mathbf x, \mathbf i)\in \mathcal T_{\tilde \epsilon}^{(B)}(X^n,I_n)$, $\tilde \epsilon>0$. By the joint typicality lemma (cf. \cite[Sec. 2.5.1]{ElGamalKim}), we have
\begin{align*}
\mbox{Pr}(\mathbf Z \in \mathcal T_{\tilde \epsilon}^{(B)}(Z^n| \mathbf x, \mathbf i)  |  \mathbf x) &  \geq 2^{-B(I( Z^n;I_n  |X^n)  +\tilde \epsilon_1)}   \\
&= 2^{-B(H(I_n  |X^n)  +\tilde \epsilon_1)}   \\
&\geq  2^{-nB(a_n  +\tilde \epsilon_1)},
\end{align*}
where $\tilde \epsilon_1\to 0$ as $\tilde \epsilon \to 0$ and $B\to \infty$. Since $T_{\tilde \epsilon}^{(B)}(Z^n| \mathbf x, \mathbf i)  \subseteq T_{\tilde \epsilon}^{(B)}(Z^n| \mathbf i)$, we further have
\begin{align*}
\mbox{Pr}(\mathbf Z \in \mathcal T_{\tilde \epsilon}^{(B)}(Z^n| \mathbf i)  |  \mathbf x) &   \geq  2^{-nB(a_n  +\tilde \epsilon_1)}.
\end{align*}
Then, by applying Lemma \ref{L:blowingup} along the same lines as the proof of Lemma \ref{L:blowupsimilar}, we can obtain
\begin{align}
\mbox{Pr}(\mathbf{Y}\in \Gamma_{nB (\sqrt{\frac{a_n\ln 2}{2}}+2\sqrt{\tilde\epsilon_1}) }(\mathcal T_{\tilde \epsilon}^{(B)}(Z^n| \mathbf i) ))\geq 1-2\sqrt{\tilde \epsilon_1} \nonumber
\end{align}
for sufficiently large $B$. Choosing $\epsilon$ to be $\max\{2\sqrt{\tilde \epsilon_1},\tilde \epsilon\}$ then proves \dref{E:chooseepsilon}.

Consider any $(\mathbf y, \mathbf i) \in \mathcal S(Y^n, I_n)$. By the definition of $\mathcal S(Y^n, I_n)$, we can find one
$\mathbf{z}\in \mathcal T_{ \epsilon}^{(B)}(Z^n| \mathbf i)$ such that
\begin{align}\label{E:hammingbound}
d (\mathbf{y},\mathbf{z})\leq nB \left(\sqrt{\frac{a_n\ln 2}{2}}+\epsilon\right).
\end{align}
Then,
\begin{align}
p(\mathbf{y}|\mathbf{i})&=\sum_{\mathbf x} p(\mathbf{y}|\mathbf{x})p(\mathbf{x}|\mathbf{i})\nonumber \\
&\geq \sum_{\mathbf x \in \mathcal T_{\epsilon}^{(B)}(X^n| \mathbf z, \mathbf i)} p(\mathbf{y}|\mathbf{x})p(\mathbf{x}|\mathbf{i})\label{E:throughout} \\
&\geq 2^{-B(H(X^n|I_n)+\epsilon_1)} \sum_{\mathbf x \in \mathcal T_{\epsilon}^{(B)}(X^n| \mathbf z, \mathbf i)} p(\mathbf{y}|\mathbf{x})\label{E:proseq} \\
&\geq 2^{-B(H(X^n|I_n)+\epsilon_1)} \big| \mathcal T_{\epsilon}^{(B)}(X^n| \mathbf z, \mathbf i) \big| \min_{\mathbf x \in \mathcal T_{\epsilon}^{(B)}(X^n| \mathbf z, \mathbf i)} p(\mathbf{y}|\mathbf{x}) \label{E:throughout1} \\
&\geq 2^{-B(H(X^n|I_n)+\epsilon_1)} 2^{B(H(X^n|Z^n)-\epsilon_2)} \min_{\mathbf x \in \mathcal T_{\epsilon}^{(B)}(X^n| \mathbf z, \mathbf i)} p(\mathbf{y}|\mathbf{x}), \label{E:probtobecont}
\end{align}
for some $\epsilon_1,\epsilon_2 \to 0$ as $\epsilon \to 0$ and $B \to \infty$, where the $\mathbf z$ throughout
\dref{E:throughout}--\dref{E:probtobecont} is the one belonging to $\mathcal  T_{ \epsilon}^{(B)}(Z^n| \mathbf i)$ and satisfying \dref{E:hammingbound},  and \dref{E:proseq} and \dref{E:probtobecont} follow from the properties of jointly typical sequences (cf. \cite[Sec. 2.5]{ElGamalKim}).

We  now lower bound $p(\mathbf{y}|\mathbf{x})$ for any $\mathbf x \in \mathcal T_{\epsilon}^{(B)}(X^n| \mathbf z, \mathbf i)$. Since $\mathbf x \in \mathcal T_{\epsilon}^{(B)}(X^n| \mathbf z, \mathbf i)$, we have $(\mathbf x,\mathbf z)  \in \mathcal T_{\epsilon}^{(B)}(X^n, Z^n)$, i.e., $(\mathbf x,\mathbf z) $ are jointly typical with respect to the $n$-letter random variables $(X^n, Z^n)$. Due to the fixed composition code assumption and the discrete memoryless property of the channel, this can be shown (see Appendix \ref{A:jointlytypical}) to further imply that $(\mathbf x,\mathbf z) $ are also jointly typical with respect to the single-letter random variables $(X, Z)$,   i.e.,
\begin{align}
|P_{(\mathbf{x},\mathbf{z})}(x ,\omega ) -Q_n(x)  p(\omega|x)| \leq \epsilon_3 Q_n(x)  p(\omega|x), \label{E:l1bound1}
\end{align}
for some $\epsilon_3 \to 0$ as $\epsilon \to 0$,
where $P_{(\mathbf{x},\mathbf{z})}(x ,\omega ) $ denotes the joint empirical distribution of $(\mathbf{x},\mathbf{z})$ with respect to  $(X, Z)$, defined as
\begin{align*}
P_{(\mathbf{x},\mathbf{z})}(x,\omega)&= \frac{1}{nB}N(x,\omega|\mathbf x, \mathbf z)
\end{align*}
where $N(x,\omega|\mathbf x, \mathbf z)$ denotes the number of times the  symbols $(x,\omega)$ occur in the sequences $(\mathbf x, \mathbf z)$.
On the other hand, we show in Appendix \ref{A:totalvariance} that the bound \dref{E:hammingbound} on the Hamming distance between $\mathbf{y}$ and $\mathbf{z}$ can translate to a bound on the total variation distance between the two empirical distributions $P_{(\mathbf{x},\mathbf{y})}(x ,\omega )$ and $P_{(\mathbf{x},\mathbf{z})}(x ,\omega )$ for any $\mathbf x$, namely,
\begin{align}
\sum_{(x,\omega)}|P_{(\mathbf{x},\mathbf{y})}(x ,\omega )-P_{(\mathbf{x},\mathbf{z})}(x ,\omega ) |\leq \frac{2}{nB} d (\mathbf{y},\mathbf{z})\leq  2 \left(\sqrt{\frac{a_n\ln 2}{2}}+\epsilon\right).\label{E:l1bound2}
\end{align}
Combining \dref{E:l1bound1} and \dref{E:l1bound2}, we have for some $\epsilon_4 \to 0$ as $\epsilon \to 0$,
\begin{align}
\sum_{(x,\omega)}|P_{(\mathbf{x},\mathbf{y})}(x ,\omega )-Q_n(x)  p(\omega|x) |\leq  2 \sqrt{\frac{a_n\ln 2}{2}}+\epsilon_4,
\end{align}
or equivalently expressed as
\begin{align}
\frac{1}{2}\sum_{(x,\omega)}|Q_n(x)P_{\mathbf{y}|\mathbf{x}}(\omega|x)-Q_n(x)  p(\omega|x)|\leq    \sqrt{\frac{a_n\ln 2}{2}}+\frac{\epsilon_4}{2},\label{E:l1bound3}
\end{align}
where we have used the fact that the empirical distribution $P_{\mathbf x}(x)=Q_n(x)$ due to the fixed composition code assumption, and $P_{\mathbf{y}|\mathbf{x}}(\omega|x)$ is the conditional empirical distribution satisfying
\begin{align*}
 N(x,\omega|\mathbf x, \mathbf y)= N(x |\mathbf x ) P_{\mathbf{y}|\mathbf{x}}(\omega|x) .
\end{align*}

To bound $p(\mathbf{y}|\mathbf{x})$, we have
\begin{align}
-\frac{1}{nB}\log  p(\mathbf{y}|\mathbf{x})  &=-\frac{1}{nB} \sum_{i=1}^{nB} \log p(y_i|x_i)\nonumber \\
&= -\sum_{(x,\omega) } P_{(\mathbf{x},\mathbf{y})}(x,w) \log p(\omega|x)\nonumber \\
&= \sum_{(x,\omega) }[-P_{(\mathbf{x},\mathbf{y})}(x,w) \log p(\omega|x)+P_{(\mathbf{x},\mathbf{y})}(x,w) \log P_{\mathbf{y}|\mathbf{x}}(w|x)-P_{(\mathbf{x},\mathbf{y})}(x,w) \log P_{\mathbf{y}|\mathbf{x}}(w|x)      ]\nonumber \\
&= -\sum_{(x,\omega) } P_{(\mathbf{x},\mathbf{y})}(x,w) \log P_{\mathbf{y}|\mathbf{x}}(w|x)
+ \sum_{(x,\omega) }  P_{(\mathbf{x},\mathbf{y})}(x,w) \log \frac{P_{\mathbf{y}|\mathbf{x}}(w|x)}{p(\omega|x)}  \nonumber \\
&= H(P_{\mathbf{y}|\mathbf{x}}(\omega|x) |P_{\mathbf x}(x))+D( P_{\mathbf{y}|\mathbf{x}}(\omega|x)  || p(\omega|x) |P_{\mathbf x}(x))\nonumber    \\
&= H(P_{\mathbf{y}|\mathbf{x}}(\omega|x) |Q_n(x))+D( P_{\mathbf{y}|\mathbf{x}}(\omega|x)  || p(\omega|x) |Q_n(x)),  \label{E:deltacom1}
\end{align}
where $P_{\mathbf{y}|\mathbf{x}}(\omega|x)$ satisfies the constraint \dref{E:l1bound3}. For any  $p(x)$ and $d\geq 0$, define $\Delta\left(p(x),d \right)$  as follows:
\begin{align}
\Delta\left(p(x),d\right):=&\max_{\tilde p(\omega|x)}  H(\tilde p(\omega|x) |p(x))+D( \tilde p(\omega|x) || p(\omega|x) |p(x))- H(p(\omega|x) |p(x))\label{E:deltacom2}\\
\text{s.t.~~~~~~~~~~} & \frac{1}{2}\sum_{(x,\omega)}|p(x)\tilde p(\omega|x)-p(x)  p(\omega|x)|\leq   d\label{E:deltacom3}
\end{align}

Comparing \dref{E:deltacom1} and \dref{E:l1bound3} to \dref{E:deltacom2} and \dref{E:deltacom3}, we have
\begin{align}
-\frac{1}{nB}\log  p(\mathbf{y}|\mathbf{x})&\leq \Delta\left(Q_n,\sqrt{\frac{a_n \ln 2}{2}}+\frac{\epsilon_4}{2}\right)+ H(Y |X)\nonumber\\
&\leq \Delta\left(Q_n,\sqrt{\frac{a_n \ln 2}{2}}\right)+ H(Y |X)+\epsilon_5\label{eqpygivenx}
\end{align}
for some $\epsilon_5\to 0$ as $\epsilon\to 0$, where $H(Y |X)$ is calculated based on $Q_n(x)p(\omega|x)$, and the last inequality follows since  $\Delta(p(x),d)$ is continuous in $d$ for $d>0$.
This combined with  \dref{E:probtobecont} yields that for any $(\mathbf y, \mathbf i) \in \mathcal S(Y^n, I_n)$,
\begin{align*}
p(\mathbf y|\mathbf i)&\geq  2^{-B(H(X^n|I_n)+\epsilon_1)} 2^{B(H(X^n|Z^n)-\epsilon_2)}2^{-nB(\Delta\left(Q_n,\sqrt{\frac{a_n \ln 2}{2}}\right)+H(Y|X)+\epsilon_5)}\\
&\geq 2^{- B(H(X^n|I_n) -H(X^n|Z^n) + nH(Y|X) +n\Delta\left(Q_n,\sqrt{\frac{a_n \ln 2}{2}}\right)+ \epsilon_6)}
\end{align*}
for some $\epsilon_6 \to 0$ as $\epsilon \to 0$ and $B\to \infty$. Finally, choosing $\delta=\max\{\epsilon, \epsilon_6\}$, we have
$$\mbox{Pr}(\mathbf Y, \mathbf I) \in \mathcal S(Y^n, I_n) \geq 1-\delta,$$
and for any $(\mathbf y, \mathbf i) \in \mathcal S(Y^n, I_n)$,
\begin{align*}
p(\mathbf y|\mathbf i)\geq  2^{- B(H(X^n|I_n) -H(X^n|Z^n) + nH(Y|X) +n\Delta\left(Q_n,\sqrt{\frac{a_n \ln 2}{2}}\right)+ \delta)},
\end{align*}
which concludes the proof of Lemma \ref{L:keytoentropy2}.

\section{Proof of Theorem \ref{T:further}}\label{S:furtherimprovement}

The main idea for proving Theorem \ref{T:further} follows that for Theorem \ref{T:novel}. In order to highlight the difference, we first look at the the parameter $\Delta(p(x),d)$ that plays an important role in the bound in Theorem \ref{T:novel} more closely. In Section~\ref{SS:Newbound2}, we have indicated that $\Delta(p(x),d)$ can be interpreted as the maximal number of extra bits we would need to compress $Y$ given $X$, when $Y$ comes from a conditional distribution $\tilde{p}(w|x)$ instead of the assumed distribution $p(w|x)$ and the total variation distance between the two joint distributions is bounded by $d$. An alternative role that emerges for this quantity in the context of  the proof of Theorem~\ref{T:novel} is the following.

Consider a pair $(\mathbf x , \mathbf z)$ of $nB$-length sequences that are jointly typical with respect to $p(x)p(\omega|x)$. We have
\begin{align}
p( \mathbf z| \mathbf x)\doteq 2^{-nBH(p(\omega|x)|p(x))}.\label{E:discussioncomp1}
\end{align}
Let $\mathbf y$ be a sequence taking values in the same alphabet as $\mathbf z$ and bounded in its Hamming distance to $\mathbf z$ by $nBd$. Theorem~\ref{T:novel} is based on obtaining a lower bound on the conditional probability of the sequence $\mathbf y$ given $\mathbf x$ under $p(x)p(\omega|x)$. In particular, in \eqref{eqpygivenx}, we show that
\begin{align}
p(\mathbf y|\mathbf x) \stackrel {.}{\geq}   2^{-nB[ H(p(\omega|x)|p(x))+ \Delta(p(x),d) ]}. \label{E:discussioncomp2}
\end{align}
Comparing \dref{E:discussioncomp1} and \dref{E:discussioncomp2}, we can see that $\Delta(p(x),d)$ characterizes the maximum possible exponential decrease from $p( \mathbf z| \mathbf x)$ to $p( \mathbf y| \mathbf x)$ where $(\mathbf x , \mathbf z)$ is jointly typical with respect to $p(x)p(\omega|x)$ and the Hamming distance between $\mathbf y$ and $\mathbf z$ is bounded by $nBd$.

For the binary symmetric channel, i.e. when the conditional distribution $p(\omega|x)$ corresponds  to a binary symmetric channel with crossover probability $p<1/2$, we show in Appendix \ref{A:derivebscdelta} that we have the following explicit expression
\begin{align}
\Delta(p(x),d) &=    \min\left\{H(p)+ d \log \frac{1-p}{p} , -\log p \right \}  -H(p)\nonumber\\
&= \min\left \{ d ,   1-p  \right \} \log \frac{1-p}{p}.\label{eq:binary}
\end{align}
We next provide an alternative way to obtain this expression by resorting to the above interpretation of  $\Delta(p(x),d)$. Note that when $p(\omega|x)$ corresponds to a binary symmetric channel with crossover portability $p<1/2$, for a $(\mathbf x , \mathbf z)$ pair that is jointly typical with respect to $p(x)p(\omega|x)$, we have $d(\mathbf x , \mathbf z)\leq nB(p+\epsilon)$ and
\begin{align}p( \mathbf z| \mathbf x)\doteq 2^{-nBH(p)}.\label{E:comparebsc1}\end{align}
If $\mathbf y$ satisfies $d (\mathbf{y},\mathbf{z})\leq nB d$, then by the triangle inequality we have
\begin{align*}
d (\mathbf{x},\mathbf{y})&\leq d (\mathbf{x},\mathbf{z})+d (\mathbf{y},\mathbf{z})\\
&\leq nB (p+ d+2\epsilon)
\end{align*}
and therefore,
\begin{align*}
p( \mathbf y| \mathbf x)  &\stackrel {.}{\geq}  p^{nB (p+ d) }(1-p)^{ nB-nB (p+ d) } \\
&= 2^{-nB[H(p)+d \log \frac{1-p}{p}    ]  } .
\end{align*}
Since we also trivially have $p( \mathbf y| \mathbf x) \geq p^{nB}=2^{nB\log p}$, it follows that
\begin{align}
p( \mathbf y| \mathbf x)  &\stackrel {.}{\geq}   2^{-nB \min\left \{H(p)+ d\log \frac{1-p}{p}  , -\log p  \right\} } \label{E:comparebsc2}.
\end{align}
Comparing \dref{E:comparebsc1} with \dref{E:comparebsc2}, we have the maximum possible exponential decrease from $p( \mathbf z| \mathbf x)$ to $p( \mathbf y| \mathbf x)$ given by \eqref{eq:binary}.

The above discussion reveals that the proof of Theorem \ref{T:novel}  inherently uses the triangle inequality to obtain a worst case bound, equal to $nB(d+p)$, on the distance between $\mathbf x$ and $\mathbf y$. The new ingredient in the proof of Theorem \ref{T:further} is a more precise analysis on the distance between $\mathbf y$  and $\mathbf x$ by building on the fact that the capacity of the primitive relay channel is equal to the broadcast bound in the context of Cover's open problem. Specifically, we show that most of the typical $\mathbf x$'s are within a distance $nB(d * p)$ from $\mathbf y$ in this case. The detailed proof of Theorem \ref{T:further} is as follows, where we only emphasize the difference from that of Theorem \ref{T:novel}.

We start by observing that Theorem \ref{T:further} follows from the following proposition as a corollary.
\begin{proposition}\label{P:BSCequi}
In the binary symmetric channel case, for any $\tau>0$, if a rate $R=C_{XYZ}-\tau$ is achievable, then there exists some $a\geq 0$ such that
\begin{numcases}{}
C_{XYZ}-\tau \leq C_{XY}+R_0-a  \label{E:bscp1}  \\
C_{XYZ}-\tau  \leq C_{XY} +\Delta'\left(\sqrt{\frac{a\ln2}{2}} \right)+\mu   \label{E:bscp2}
\end{numcases}
where
\begin{align}
\Delta'\left(d \right):=d(1-2p)\log \frac{1-p}{p}\label{E:newdelta}
\end{align}
and $\mu \to 0$ as $\tau \to 0$.
\end{proposition}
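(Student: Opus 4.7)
The plan is to parallel the proof of Theorem~\ref{T:novel}, replacing the generic $\Delta(p(x),d)$ bound with a sharper one that exploits both the binary symmetric structure and the near-$C_{XYZ}$ regime. The bound \eqref{E:bscp1} is immediate: apply Theorem~\ref{T:novel} to a fixed composition code of rate $C_{XYZ}-\tau$ and use the trivial $I(X;Y)\le C_{XY}$ in \eqref{E:ournewbound2}. So the real work is in establishing \eqref{E:bscp2}.

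First I would observe that when $R=C_{XYZ}-\tau$, any supporting fixed composition $Q_n$ must be close to uniform, since for BSC the uniform input is the unique maximizer of $I(X;Y,Z)$. This has two consequences: (i) $I(X;Y)$ evaluated at $Q_n$ is within an $o(1)$ term of $C_{XY}$ as $\tau\to 0$, and (ii) for typical $(\mathbf x,\mathbf z)$ with respect to $Q_n(x)p(\omega|x)$, the composition of $\mathbf x$ is near uniform and $d(\mathbf x,\mathbf z)\approx nBp$. All such $o(1)$ slacks, together with the usual typicality errors, will be consolidated into the single $\mu$ of the statement.

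The heart of the argument is to sharpen Lemma~\ref{L:keytoentropy2}. Its proof lower bounds $p(\mathbf y|\mathbf i)\geq \sum_{\mathbf x \in \mathcal T_\epsilon^{(B)}(X^n|\mathbf z,\mathbf i)} p(\mathbf y|\mathbf x)p(\mathbf x|\mathbf i)$ using a typical $\mathbf z\in \mathcal T_\epsilon^{(B)}(Z^n|\mathbf i)$ with $w:=d(\mathbf y,\mathbf z)\le nB(\sqrt{a_n\ln 2/2}+\epsilon)=:nBd$. Previously, the triangle inequality gave $d(\mathbf x,\mathbf y)\le nB(p+d)$ uniformly over the typical set. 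I would replace this worst-case estimate by an average-case combinatorial bound: write $\mathbf x=\mathbf z\oplus \mathbf e$ for $\mathbf e$ a uniformly random vector of weight $nBp$, and let $b$ count those positions of $\mathbf e$ lying in the support of $\mathbf z\oplus \mathbf y$. Then $b$ is hypergeometric with mean $pw$, and
\begin{align*}
d(\mathbf x,\mathbf y)=nBp+w-2b,\qquad \mathbb{E}\,d(\mathbf x,\mathbf y)\le nB(p+d-2pd)=nB(d*p).
\end{align*}
Standard hypergeometric concentration then implies that a $(1-o(1))$ fraction of the $\mathbf x$'s on the Hamming shell around $\mathbf z$ satisfies $d(\mathbf x,\mathbf y)\le nB(d*p)+o(nB)$. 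Because $Q_n$ is near uniform, $\mathcal T_\epsilon^{(B)}(X^n|\mathbf z,\mathbf i)$ is essentially a subset of this shell (intersected with the deterministic fibre $f^{-1}(\mathbf i)$), so the same conclusion transfers.

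Restricting the sum to these good $\mathbf x$'s and using $p(\mathbf y|\mathbf x)\ge p^{nB(d*p)}(1-p)^{nB(1-d*p)}=2^{-nB[H(p)+\Delta'(d)]}$ with $\Delta'(d)=(d*p-p)\log\frac{1-p}{p}=d(1-2p)\log\frac{1-p}{p}$ yields the refined counterpart of Lemma~\ref{L:keytoentropy2}, namely $p(\mathbf y|\mathbf i)\ge 2^{-B[H(X^n|I_n)-H(X^n|Z^n)+nH(p)+n\Delta'(d)+o(n)]}$. Substituting this into the entropy chain \eqref{E:cancel}--\eqref{E:symmetryyz} of Section~\ref{S:novelproof} then delivers \eqref{E:bscp2}. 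The main obstacle I expect is making the concentration rigorous for $\mathcal T_\epsilon^{(B)}(X^n|\mathbf z,\mathbf i)$ rather than for the clean Hamming shell: one must argue that the further conditioning on $\mathbf i=f(\mathbf z)$ and the near-uniform composition do not skew the distribution of $\mathbf x$ enough to shift $d(\mathbf x,\mathbf y)$ above $nB(d*p)$, and then to consolidate every vanishing deviation (from uniform $Q_n$, from typicality, from concentration) into the single error term $\mu$.
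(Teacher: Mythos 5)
Your decomposition is right and you have correctly identified both the target inequality and its intended role (replacing the triangle-inequality bound $d(\mathbf x,\mathbf y)\le nB(p+d)$ by the sharper $nB(d*p)$ inside the proof of Lemma~\ref{L:keytoentropy2}). But there is a genuine gap at exactly the place you flag as ``the main obstacle,'' and you do not supply the idea that closes it. You model $\mathbf x$ as $\mathbf z\oplus\mathbf e$ with $\mathbf e$ \emph{uniform} on the weight-$nBp$ shell, and invoke hypergeometric concentration. This is unjustified: conditionally on $\mathbf z$, the typical codewords $\mathbf x$ form a set of size $\approx 2^{BH(X^n|Z^n)}\approx 2^{nB(H(p*p)-H(p))}$, which is an \emph{exponentially small} fraction of the weight-$nBp$ shell (of size $\approx 2^{nBH(p)}$), and nothing in the hypotheses forces this tiny subset to be evenly spread over the shell. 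Saying it is ``essentially a subset of this shell'' does not help; the question is whether it behaves like a random subset, and that is exactly what must be proved, not assumed.

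The paper's Lemma~\ref{L:geoanalysis} resolves this by refusing to argue about $\mathbf X$ given $\mathbf z_0$ directly. Instead it reasons about the \emph{channel output} $\mathbf Y$ given $\mathbf z_0$: passing $\mathbf X$ through the fresh BSC$(p)$ noise to get $\mathbf Y$ is what actually spreads things out, and the near-$C_{XYZ}$ rate enters through Fano's inequality in the form $\frac{1}{n}H(Y^n|Z^n)\ge H(p*p)-\tau-o(1)$ (see \dref{E:lastfano}), which says $\mathbf Y$ given $\mathbf z_0$ is nearly \emph{maximum-entropy} on the radius-$nB(p*p)$ shell around $\mathbf z_0$ — i.e.\ nearly uniform on that shell, a statement with no analogue for $\mathbf X$. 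A Hamming-sphere intersection count (the function $f(r)$ in Appendix~\ref{A:f(r)}, maximized at $r=d_0*p*p$) then shows that the sub-region of that shell lying at distance $\ge nB(d_0*p*p+\epsilon')$ from $\mathbf y_0$ is exponentially small relative to $2^{nBH(p*p)}$. Combined with the elementary direction ``$d(\mathbf X,\mathbf y_0)\ge nBd \Rightarrow d(\mathbf Y,\mathbf y_0)\ge nB(d*p)-o(nB)$ w.h.p.,'' this forces $q_d=\Pr(d(\mathbf X,\mathbf y_0)\ge nBd\mid\mathbf z_0)$ to be small for $d>d_0*p$. Without this indirect route (or some substitute that exploits $H(Y^n|Z^n)\approx nH(p*p)$ or an equivalent spreading property), your concentration step does not go through, so the proposal as written is incomplete.

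One smaller remark: the paper explicitly avoids the fixed-composition machinery here, replacing $I(X;Y)$ by the trivial bound $C_{XY}$ in \dref{E:memoryless}, whereas you invoke a fixed-composition code and a near-uniform $Q_n$. That detour is harmless but unnecessary; the near-$C_{XYZ}$ rate is used by the paper not to pin down $Q_n$, but to lower-bound $H(Y^n|Z^n)$, which is the load-bearing use.
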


Specifically, we have by \dref{E:bscp1},
 \begin{align}
 R_0&\geq C_{XYZ}-C_{XY} +a-\tau \nonumber \\
 &= 1+H(p*p)-2H(p)-(1-H(p))+a-\tau \nonumber \\
 & = H(p*p)-H(p) +a-\tau,\label{E:R_01}
 \end{align}
 where we use the fact that $C_{XYZ}=1+H(p*p)-2H(p)$ and $C_{XY}=1-H(p)$,
 and by \dref{E:bscp2},
  \begin{align*}
\Delta'\left(\sqrt{\frac{a\ln2}{2}} \right)&=\sqrt{\frac{a\ln 2}{2}}(1-2p)\log \frac{1-p}{p}\\
 &\geq   C_{XYZ}-C_{XY} -\tau-\mu \\
&= H(p*p)-H(p) -\tau-\mu
 \end{align*}
so that
   \begin{align}
a\geq \frac{2}{\ln 2}\left(\frac{H(p*p)-H(p)}{(1-2p)\log \frac{1-p}{p}}\right)^2-\mu_1\label{E:R_02}
 \end{align}
for some $\mu_1 \to 0$ as $\tau\to 0$. Combining \dref{E:R_01} and \dref{E:R_02}, we have
\begin{align*}
R_0\geq  H(p*p)-H(p) +\frac{2}{\ln 2}\left(\frac{H(p*p)-H(p)}{(1-2p)\log \frac{1-p}{p}}\right)^2 -\tau -\mu_1,
\end{align*}
and by the definition of $R_0^*$,
\begin{align*}
R^*_0 &\geq \lim_{\tau\to 0} H(p*p)-H(p) + \frac{2}{\ln 2}\left(\frac{H(p*p)-H(p)}{(1-2p)\log \frac{1-p}{p}}\right)^2 -\tau -\mu_1\\
&= H(p*p)-H(p) + \frac{2}{\ln 2}\left(\frac{H(p*p)-H(p)}{(1-2p)\log \frac{1-p}{p}}\right)^2
\end{align*}
which is Theorem \ref{T:further}.

We now show Proposition \ref{P:BSCequi}, whose proof builds on the technique developed to prove Theorem \ref{T:novel} but doesn't require fixed composition code analysis. To show \dref{E:bscp1}, along the lines of the proof of  \dref{E:ournewbound2}, we have for any achievable rate $R=C_{XYZ}-\tau$, $\tau >0$,
 \begin{align}
n(C_{XYZ}-\tau) &\leq I(X^n;Y^n)+nR_0-na_n +n\epsilon \nonumber \\
 & \leq n (C_{XY}+R_0-a_n +\epsilon ) ,\label{E:memoryless}
 \end{align}
 i.e.,
 \begin{align}
C_{XYZ}-\tau  \leq C_{XY}+R_0-a_n+\epsilon,  \label{E:bscequi1}
 \end{align}
 where \dref{E:memoryless} follows from the memoryless property of the channel and $\epsilon\to 0$ as  $n\to \infty$.
To show \dref{E:bscp2}, we need the following lemma, whose proof is given in Section \ref{SS:bsckeylemma}.
\begin{lemma}\label{L:conditionalentropybsc}
In the binary symmetric channel case, for any $n$-channel use code with rate $R=C_{XYZ}-\tau$ and $P_{e}^{(n)}\to 0$,
\begin{align}
H(Y^n|I_n)\leq H(X^n|I_n) -H(X^n|Z^n) + n H(Y|X) +n\Delta'\left(\sqrt{\frac{a_n\ln2}{2}} \right)+ n \mu, \label{E:conditionalentropybsc}
\end{align}
where $\mu$ can be made arbitrarily small by choosing $n$ sufficiently large and $\tau$ sufficiently small, $H(Y|X)=H(p)$, $a_n=\frac{1}{n}H(I_n|X^n)$, and $\Delta'(\cdot)$ is as defined in \dref{E:newdelta}.
\end{lemma}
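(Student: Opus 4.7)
The plan is to follow the template of the proof of Lemma \ref{L:conditionalentropy2}, passing to the $B$-length i.i.d.\ extension $(\mathbf X,\mathbf Y,\mathbf Z,\mathbf I)$ and constructing the same high-probability event $\mathcal S(Y^n,I_n):=\{(\mathbf y,\mathbf i):\mathbf y\in\Gamma_{nB(d_n+\epsilon)}(\mathcal T^{(B)}_\epsilon(Z^n|\mathbf i))\}$ with $d_n:=\sqrt{a_n\ln 2/2}$, and invoking the generalized blowing-up lemma exactly as in \eqref{E:chooseepsilon} to establish $\Pr((\mathbf Y,\mathbf I)\in\mathcal S)\ge 1-\epsilon$. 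The rest of the proof lower bounds $p(\mathbf y|\mathbf i)$ on this event by splitting as in \eqref{E:throughout}--\eqref{E:probtobecont}; the sole modification is to sharpen the Hamming-distance estimate that feeds into the per-term bound on $p(\mathbf y|\mathbf x)$ for $\mathbf x\in\mathcal T^{(B)}_\epsilon(X^n|\mathbf z,\mathbf i)$.

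In Lemma \ref{L:conditionalentropy2}, the triangle inequality gives the pessimistic bound $d(\mathbf x,\mathbf y)\le nB(p+d_n+o(1))$, which produces the exponent $H(p)+d_n\log((1-p)/p)$. The improvement rests on the identity $d(\mathbf x,\mathbf y)=d(\mathbf x,\mathbf z)+d(\mathbf y,\mathbf z)-2|\{i:x_i\ne z_i\}\cap\{i:y_i\ne z_i\}|$. Jointly typical $(\mathbf x,\mathbf z)$ under the BSC satisfies $d(\mathbf x,\mathbf z)\approx nBp$; the overlap above is then hypergeometric-like and concentrates at its mean $nBp\,d_n$, yielding $d(\mathbf x,\mathbf y)\le nB(p*d_n+\epsilon)$ for most $\mathbf x\in\mathcal T^{(B)}_\epsilon(X^n|\mathbf z,\mathbf i)$. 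Substituting into $p(\mathbf y|\mathbf x)=p^{d(\mathbf x,\mathbf y)}(1-p)^{nB-d(\mathbf x,\mathbf y)}$ gives $p(\mathbf y|\mathbf x)\ge 2^{-nB[H(p)+d_n(1-2p)\log((1-p)/p)+o(1)]}=2^{-nB[H(p)+\Delta'(d_n)+o(1)]}$, the BSC analogue of \eqref{eqpygivenx} with $\Delta(Q_n,d_n)$ replaced by $\Delta'(d_n)$. Propagating this per-term bound through the analogues of \eqref{E:probtobecont} and \eqref{E:div} delivers \eqref{E:conditionalentropybsc}.

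The delicate step is justifying that the hypergeometric concentration transfers from the uniform measure on the Hamming shell around $\mathbf z$ to the code-induced conditional typical set $\mathcal T^{(B)}_\epsilon(X^n|\mathbf z,\mathbf i)$, so that only a vanishing fraction of its elements violates $d(\mathbf x,\mathbf y)\le nB(p*d_n+\epsilon)$. This is where the hypothesis $R=C_{XYZ}-\tau$ enters crucially: Fano's inequality yields $H(X^n|Y^n,I_n)\le n\epsilon$, which combined with $H(X^n)\approx nC_{XYZ}$ forces the empirical input distribution to be essentially uniform and the codewords consistent with a given $\mathbf i$ to be scattered sufficiently uniformly within their Hamming shell for the sphere-intersection concentration to carry over, modulo an additive $\mu\to 0$ as $\tau\to 0$. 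I expect this transference to be the main obstacle in carrying out the proof; the other two ingredients, namely the blowing-up step and the subsequent arithmetic leading to \eqref{E:conditionalentropybsc}, are direct adaptations of arguments already developed in Sections~\ref{S:sharpenproof} and~\ref{S:novelproof}.
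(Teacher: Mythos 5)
Your high-level plan is right: pass to the $B$-length extension, reuse $\mathcal S(Y^n,I_n)$ and the blowing-up step, and sharpen the triangle-inequality bound $d(\mathbf x,\mathbf y)\lesssim nB(p+d_n)$ to $d(\mathbf x,\mathbf y)\lesssim nB(p*d_n)$. You also correctly sense that the rate hypothesis $R=C_{XYZ}-\tau$ is what makes the sharper bound possible. But the mechanism you propose for the sharpening --- hypergeometric concentration of the overlap $|\{i:x_i\neq z_i\}\cap\{i:y_i\neq z_i\}|$ transferred from a uniform measure to the code-induced set $\mathcal T_{\epsilon}^{(B)}(X^n|\mathbf z,\mathbf i)$ --- is not how the paper proceeds, and you correctly flag the transference as the step you cannot complete. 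Your conjecture that Fano and the rate hypothesis force the codewords consistent with a given $\mathbf i$ to be ``scattered sufficiently uniformly within their Hamming shell'' is intuitive, but you do not know how to prove it for an arbitrary code, and the paper does not prove it either; it is a genuine gap.

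The paper's Lemma~\ref{L:geoanalysis} sidesteps the transference problem by never trying to argue about the geometry of the code directly. Instead of showing codewords are well-spread, it bounds $q_d := \mbox{Pr}(d(\mathbf X,\mathbf y_0)\geq nBd\,|\,\mathbf z_0)$ by passing to $\mathbf Y$: any $\mathbf x$ with $d(\mathbf x,\mathbf y_0)\geq nBd$ produces, through the code-independent BSC noise, a $\mathbf Y$ with $d(\mathbf Y,\mathbf y_0)\geq nB(d*p-\epsilon_0)$ with high probability (law of large numbers), so $q_d(1-\delta)\leq\mbox{Pr}(d(\mathbf Y,\mathbf y_0)\geq nB(d*p-\epsilon_0)\,|\,\mathbf z_0)$. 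Now the distribution of $\mathbf Y$ given $\mathbf z_0$ is controllable: typically $p(\mathbf y|\mathbf z_0)\leq 2^{-B(H(Y^n|Z^n)-\epsilon_0)}$ and $d(\mathbf Y,\mathbf z_0)\approx nBp*p$, so the right side is bounded by $2^{-B(H(Y^n|Z^n)-\epsilon_0)}$ times the cardinality of a union of Hamming sphere intersections $\mbox{Inter}(r,\rho)$ with $r\geq d*p-\epsilon_0$ and $\rho\approx p*p$. The combinatorial exponent $f(r)$ of $|\mbox{Inter}(r,\rho)|$ falls strictly below $H(p*p)$ as soon as $r$ exceeds $d_0*p*p$ (Appendix~\ref{A:f(r)}), and Fano's inequality with $R=C_{XYZ}-\tau$ supplies the matching lower bound $\frac{1}{n}H(Y^n|Z^n)\geq H(p*p)-\tau-2\epsilon_0$, which makes $q_d$ vanish. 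So the ``spreading'' you hope for \emph{is} present, but it is established for $\mathbf Y$ (via a conditional-entropy lower bound and a sphere-intersection volume computation) rather than for the codewords themselves --- and this is precisely the idea your proposal is missing at its acknowledged sticking point.
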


With the above lemma, following the lines that lead to \dref{E:cancel} from \dref{E:conditionalentropy2}, we can show that for any achievable rate $R=C_{XYZ}-\tau$,  $\tau >0$,
 \begin{align*}
n(C_{XYZ}-\tau) &\leq I(X^n;Y^n)+n\Delta'\left(\sqrt{\frac{a_n\ln2}{2}} \right)+n   \mu +n\epsilon\\
 & \leq n \left[C_{XY}+\Delta'\left(\sqrt{\frac{a_n\ln2}{2}} \right)+  \mu + \epsilon \right],
 \end{align*}
 i.e.,
 \begin{align}
C_{XYZ}-\tau   \leq C_{XY}+\Delta'\left(\sqrt{\frac{a_n\ln2}{2}} \right)+ \mu+\epsilon  , \label{E:bscequi2}
 \end{align}
 where  $\mu\to0$ as $n\to \infty$ and $\tau \to 0$, and $\epsilon  \to 0$ as $n \to \infty$.  Combining \dref{E:bscequi1}  and \dref{E:bscequi2}  proves Proposition \ref{P:BSCequi}.

\subsection{Proof of Lemma \ref{L:conditionalentropybsc}}\label{SS:bsckeylemma}
To show the entropy inequality \dref{E:conditionalentropybsc} in Lemma \ref{L:conditionalentropybsc}, we again look at the $B$-length i.i.d. sequence of the $n$-letter random variables $(X^n, Y^n, Z^n, I_n)$ that are induced by the $n$-channel use reliable code of rate $C_{XYZ}-\tau$, denoted by
$$(\mathbf X, \mathbf Y, \mathbf Z, \mathbf I):=\{ (X^n(b), Y^n(b), Z^n(b), I_n(b)) \}_{b=1}^B.$$

\begin{lemma}\label{L:keytoentropybsc}
Given any $\delta>0$, for $\tau$ sufficiently small and $n,B$ sufficiently large, there exists a set $\mathcal S(Y^n, I_n)$ of $(\mathbf y, \mathbf i)$ pairs such that
\begin{align*}
\mbox{Pr}( (\mathbf Y, \mathbf I) \in  \mathcal S(Y^n, I_n)) \geq 1-\delta,
\end{align*}
and for any $(\mathbf y, \mathbf i) \in \mathcal S(Y^n, I_n)$,
\begin{align}
 p(\mathbf{y}|\mathbf{i})\geq 2^{- B(H(X^n|I_n) -H(X^n|Z^n) + nH(Y|X) +n\Delta'\left(\sqrt{\frac{a_n\ln2}{2}} \right)+ n\delta)} \label{E:BSCSHARPEN}.
\end{align}
\end{lemma}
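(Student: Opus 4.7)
The plan is to mirror the structure of the proof of Lemma \ref{L:keytoentropy2} in Section \ref{SS:conditionalentropy2}, with one critical refinement: the crude triangle-inequality bound $d(\mathbf x, \mathbf y)\leq d(\mathbf x, \mathbf z)+d(\mathbf z, \mathbf y)\leq nB(p+d)$ used there will be replaced by the sharper estimate $d(\mathbf x, \mathbf y)\leq nB(d*p+\epsilon')$, which exploits the BSC symmetry together with the near-capacity hypothesis $R=C_{XYZ}-\tau$. I would again pass to the $B$-length i.i.d.\ extension $(\mathbf X,\mathbf Y,\mathbf Z,\mathbf I)$ and define
\[
\mathcal S(Y^n, I_n):=\{(\mathbf y, \mathbf i):\mathbf y\in\Gamma_{nB(\sqrt{a_n\ln 2/2}+\epsilon)}(\mathcal T^{(B)}_\epsilon(Z^n|\mathbf i))\}
\]
exactly as in \dref{E:exactdef}. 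Applying the generalized blowing-up lemma (Lemma \ref{L:blowingup}) as in the proof of Lemma \ref{L:blowupsimilar} gives $\Pr((\mathbf Y,\mathbf I)\in\mathcal S)\geq 1-\delta$ for $B$ sufficiently large, and for each $(\mathbf y,\mathbf i)\in\mathcal S$ I pick a witness $\mathbf z\in\mathcal T^{(B)}_\epsilon(Z^n|\mathbf i)$ with $d(\mathbf y,\mathbf z)\leq nB(d+\epsilon)$, where $d:=\sqrt{a_n\ln 2/2}$.

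I would then lower bound $p(\mathbf y|\mathbf i)=\sum_{\mathbf x} p(\mathbf y|\mathbf x)p(\mathbf x|\mathbf i)$ by restricting the summation to the refined set
\[
\mathcal T':=\mathcal T^{(B)}_\epsilon(X^n|\mathbf z,\mathbf i)\cap\{\mathbf x:d(\mathbf x,\mathbf y)\leq nB(d*p+\epsilon')\},
\]
rather than the full $\mathcal T^{(B)}_\epsilon(X^n|\mathbf z,\mathbf i)$ used in \dref{E:throughout}. For any $\mathbf x\in\mathcal T'$, the BSC identity $p(\mathbf y|\mathbf x)=p^{d(\mathbf x,\mathbf y)}(1-p)^{nB-d(\mathbf x,\mathbf y)}$ yields
\[
-\tfrac{1}{nB}\log p(\mathbf y|\mathbf x)\leq H(p)+(d*p-p)\log\tfrac{1-p}{p}+\epsilon''=H(Y|X)+\Delta'(d)+\epsilon'',
\]
which is precisely the single-letter improvement that replaces $\Delta$ by $\Delta'$. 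Combined with the bound $p(\mathbf x|\mathbf i)\geq 2^{-B(H(X^n|I_n)+\epsilon_1)}$ from the joint typicality lemma, the target estimate \dref{E:BSCSHARPEN} follows once the cardinality bound $|\mathcal T'|\geq 2^{B(H(X^n|Z^n)-\epsilon_2)-nB\epsilon_3}$ is established.

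The main obstacle is precisely this cardinality bound, since the triangle inequality alone is too weak to deliver it. The combinatorial skeleton is the following: setting $A=\{i:y_i\neq z_i\}$ (with $|A|\leq nBd$) and $k=|\{i\in A:x_i\neq z_i\}|$, any $\mathbf x$ with $d(\mathbf x,\mathbf z)\approx nBp$ satisfies the exact identity $d(\mathbf x,\mathbf y)=nBp+|A|-2k$, so the condition $d(\mathbf x,\mathbf y)\leq nB(d*p+\epsilon')$ is equivalent to $k$ being not much smaller than $|A|\cdot p$; a hypergeometric concentration argument on the Hamming sphere $\{d(\mathbf x,\mathbf z)=nBp\}$ then shows that only an exponentially small fraction of the sphere violates this condition. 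The genuinely delicate step is transferring this uniform-measure combinatorial statement into a corresponding bound on the actual conditional typical set carried by the code. Here the near-capacity assumption is essential: the rate $R=C_{XYZ}-\tau$ and the reliability $P_e^{(n)}\to 0$, combined with Fano's inequality, imply $H(\mathbf X|\mathbf Y,\mathbf Z)\leq nB\mu$ with $\mu$ small for $\tau$ small and $n,B$ large, which together with $H(\mathbf X|\mathbf Z,\mathbf I)\approx nB(H(p*p)-H(p))$ forces the support of $p(\mathbf x|\mathbf z,\mathbf i)$ to spread over $\mathcal T^{(B)}_\epsilon(X^n|\mathbf z,\mathbf i)$ in a sufficiently balanced fashion that the combinatorial concentration survives the transfer. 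Once this cardinality bound is in hand, the remainder of the argument is a verbatim adaptation of \dref{E:proseq}--\dref{E:probtobecont} with $\Delta(Q_n,d)$ replaced by $\Delta'(d)$, the final $\delta$ absorbing the cumulative $\epsilon$-slacks.
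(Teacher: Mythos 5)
Your proposal correctly identifies the right refinement (replacing the triangle-inequality bound $d(\mathbf x,\mathbf y)\leq nB(p+d)$ with $d(\mathbf x,\mathbf y)\lesssim nB(d*p)$) and even correctly isolates the hard step: showing that the set $\mathcal T'$ of $\mathbf x$'s that are simultaneously conditionally typical given $(\mathbf z,\mathbf i)$ \emph{and} within distance $nB(d*p+\epsilon')$ of $\mathbf y_0$ carries enough mass. But your resolution of that hard step does not go through. The hypergeometric concentration argument you sketch holds for $\mathbf x$ drawn \emph{uniformly} from the Hamming sphere around $\mathbf z_0$; the conditional law $p(\mathbf x|\mathbf z_0)$ is something very different --- it is supported on $B$-tuples of codewords from a specific codebook --- and the claim that small $H(\mathbf X|\mathbf Y,\mathbf Z)$ together with $H(\mathbf X|\mathbf Z,\mathbf I)\approx nB(H(p*p)-H(p))$ ``forces the support to spread in a sufficiently balanced fashion'' is not a proof. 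Entropy constraints of this form bound the \emph{pointwise} values $p(\mathbf x|\mathbf z_0)$ on the typical set (via the AEP), but they do not control the \emph{location} of the support relative to $\mathbf y_0$; a priori, the codewords typical given $\mathbf z_0$ could still be adversarially concentrated in the far region of the sphere, and nothing you invoke rules this out.

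The paper's proof (Lemma \ref{L:geoanalysis}) sidesteps exactly this obstacle by never trying to characterize $p(\mathbf x|\mathbf z_0)$ directly. Instead it lets $q_d:=\Pr(d(\mathbf X,\mathbf y_0)\geq nBd\mid\mathbf z_0)$ and shows that whenever $\mathbf X$ is far from $\mathbf y_0$, the channel output $\mathbf Y$ (an independent BSC$(p)$ corruption of $\mathbf X$) is also far from $\mathbf y_0$ with high probability; this converts the event into one about $\mathbf Y$. On the $\mathbf Y$-side the AEP gives a uniform density bound $p(\mathbf y|\mathbf z_0)\leq 2^{-B(H(Y^n|Z^n)-\epsilon_0)}$ and a shell constraint $d(\mathbf y,\mathbf z_0)\approx nBp*p$, so bounding $q_d$ reduces to counting points in the intersection of two Hamming spheres (around $\mathbf y_0$ and around $\mathbf z_0$), which is a pure combinatorial calculation ($f(r)$ and its maximization). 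Fano's inequality then enters to certify $\frac1n H(Y^n|Z^n)\geq H(p*p)-\tau-2\epsilon_0$, which is exactly the lower bound needed to make the density-times-count product exponentially small. This ``push the question into $\mathbf Y$-space'' step is the key idea missing from your proposal, and it is precisely what allows the paper to avoid any unproven ``balanced support'' claim about the codeword distribution.
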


With Lemma \ref{L:keytoentropybsc}, along the same lines as in the proof of Lemma \ref{L:conditionalentropy2}, we can show that
\begin{align*}
H(Y^n|I_n)\leq H(X^n|I_n) -H(X^n|Z^n) + nH(Y|X) +n\Delta'\left(\sqrt{\frac{a_n\ln2}{2}} \right)+ n\delta +\delta n\log |\Omega| +\frac{1}{B},
\end{align*}
where $\delta$ can be made arbitrarily small by choosing $\tau$ sufficiently small and $n,B$ sufficiently large. This proves the entropy inequality \dref{E:conditionalentropybsc}.

We are now in a position to show Lemma \ref{L:keytoentropybsc}.

\begin{proof}[Proof of Lemma \ref{L:keytoentropybsc}]
The only difference of Lemma \ref{L:keytoentropybsc} from Lemma \ref{L:keytoentropy2} is that here the lower bound on $p(\mathbf{y}|\mathbf{i}), \forall (\mathbf y, \mathbf i)\in  \mathcal S(Y^n, I_n)$ is sharpened to that of \dref{E:BSCSHARPEN}.
In particular, assume  $\mathcal S(Y^n, I_n)$ is defined exactly as in \dref{E:exactdef}. Then, for any specific $(\mathbf y_0, \mathbf i_0) \in \mathcal S(Y^n, I_n)$, we can find one
$\mathbf{z}_0\in \mathcal T_{ \epsilon}^{(B)}(Z^n| \mathbf i)$ such that
\begin{align}
d (\mathbf{y}_0,\mathbf{z}_0)\leq nB (\sqrt{\frac{a_n\ln 2}{2}}+\epsilon).
\end{align}
The key to the aforementioned sharpening is a tighter upper bound on the distance between $\mathbf{y}_0$ and the $\mathbf x$'s typical with $\mathbf z_0$, as stated in the following lemma. The proof of this lemma is based on a combinatorial geometric argument, and is deferred until we finish the proof of Lemma \ref{L:keytoentropybsc}.

\begin{lemma}\label{L:geoanalysis}
Consider any $\mathbf y_0$ such that $d(\mathbf y_0, \mathbf z_0)=nBd_0$ for some $\mathbf{z}_0\in \mathcal T_{ \epsilon}^{(B)}(Z^n)$. There exists some $\epsilon'\to 0$ as $\epsilon \to 0$ such that
\begin{align*}
\mbox{Pr}( d(\mathbf X, \mathbf y_0) \leq  nB ( d_0*p+\epsilon')   |\mathbf z_0)\geq 1- \upsilon
\end{align*}
where $\upsilon$ can be made arbitrarily small by choosing $n,B$ sufficiently large and $\tau$ sufficiently small.
\end{lemma}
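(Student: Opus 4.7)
The plan is to sharpen the triangle-inequality bound $d(\mathbf{X},\mathbf{y}_0) \leq d(\mathbf{X},\mathbf{z}_0) + d(\mathbf{z}_0,\mathbf{y}_0) \approx nB(p + d_0)$ used in the proof of Theorem \ref{T:novel} by means of the exact symmetric-difference identity
\begin{equation*}
d(\mathbf{X}, \mathbf{y}_0) \;=\; |A| + |B| - 2|A \cap B|,
\end{equation*}
where $A := \{i : X_i \neq z_{0,i}\}$ is the (random) BSC-flip pattern and $B := \{i : y_{0,i} \neq z_{0,i}\}$ is the fixed difference set, with $|B| = nBd_0$. Joint typicality of $(\mathbf{X},\mathbf{z}_0)$ already yields $|A| = nBp + o(nB)$ with conditional probability $1-o(1)$, so the lemma reduces to the complementary ``anti-avoidance'' lower bound $|A\cap B| \geq nB pd_0 - o(nB)$: the BSC-noise flips must populate the fixed subset $B$ at approximately their natural density $p$, rather than clustering in $B^c$. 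Plugging these two estimates into the identity gives $d(\mathbf{X},\mathbf{y}_0) \leq nB(p + d_0 - 2pd_0) = nB(p*d_0)$ up to lower-order terms.

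To make this quantitative I would first convert $R = C_{XYZ} - \tau$ into structural control on the joint distribution of $(\mathbf{X},\mathbf{Z})$. Fano combined with the memoryless identity $H(\mathbf{Y},\mathbf{Z}|\mathbf{X})=2nBH(p)$ yields $H(\mathbf{Y},\mathbf{Z}) \geq nB[1 + H(p*p)] - o(nB)$; since $H(Y_i,Z_i) \leq 1+H(p*p)$ with equality only when $X_i \sim \mathrm{Bern}(1/2)$, the marginal of $X_i$ is within $o(1)$ of uniform on a $(1-o(1))$-fraction of coordinates, and a standard AEP strengthening shows that $(\mathbf{X},\mathbf{z}_0)$ is strongly jointly typical with respect to the product $\mathrm{Unif}(\{0,1\}) \times \mathrm{BSC}(p)$ with high conditional probability. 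I would then argue the anti-avoidance bound by contradiction: if $|A \cap B| < nB(pd_0 - \eta)$ held on a conditional event of probability $\geq \upsilon/2$, then combined with the global typicality $|A| \approx nBp$ the flips would be biased toward $B^c$ on that event, and because $\mathbf{z}_0$ is typical (so $B$ is partitioned into roughly equal pieces $B\cap\{i: z_{0,i}=0\}$ and $B\cap\{i: z_{0,i}=1\}$) this bias can be translated into an $\Omega(\eta)$ total-variation deviation of the empirical conditional distribution of $X_i$ given $z_{0,i}$ restricted to $i\in B$ from the capacity-achieving BSC$(p)$ posterior. By Pinsker this costs $\Omega(\eta^2 nB d_0)$ bits in $I(\mathbf{X}_B;\mathbf{Z}_B)$ and, through subadditivity of entropy across the partition $[nB] = B \cup B^c$, an $\Omega(\eta^2 nB)$ loss in the overall $I(\mathbf{X};\mathbf{Y},\mathbf{Z})$, contradicting the near-capacity lower bound $I(\mathbf{X};\mathbf{Y},\mathbf{Z}) \geq nBC_{XYZ} - nB\tau - o(nB)$ once $\tau$ is small enough relative to $\upsilon$ and $\eta$.

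The main obstacle is precisely this localization step: quantifying how an adversarially-chosen constant-fraction subset $B$ of coordinates can extract a rate loss from a deterministic code whose bit-flip pattern deviates from the ``spread-out'' capacity-achieving behavior. A general deterministic code can in principle correlate its codewords with an arbitrary subset $B$, and one needs the specific near-capacity identity $I(\mathbf{X};\mathbf{Y},\mathbf{Z}) \approx nBC_{XYZ}$ together with the uniqueness of the BSC capacity-achieving input distribution to rule this out. Making the implication ``adversarial bias of magnitude $\eta$ on a fraction $d_0$ of coordinates $\Rightarrow$ joint mutual-information loss of $\Omega(\eta^2 nB)$'' fully quantitative is the technical heart of the argument, and it is also the reason Theorem \ref{T:further}'s improved bound is available only in Cover's setting $R \to C_{XYZ}$ and not for general achievable rates.
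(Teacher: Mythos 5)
Your reduction is clean and correct: writing $d(\mathbf X,\mathbf y_0)=|A|+|B|-2|A\cap B|$, using $|A|\approx nBp$ (which holds because a $\mathrm{BSC}(p)$ flips a $\approx p$-fraction of coordinates regardless of the input marginal), and observing that the lemma then amounts to the anti-avoidance bound $|A\cap B|\gtrsim nBpd_0$ is a faithful reformulation of what must be shown, and you correctly identify that the near-capacity hypothesis is what makes the statement provable. The difficulty is the proof you sketch for the anti-avoidance bound. Lemma~\ref{L:geoanalysis} is a statement about the \emph{conditional} law of $\mathbf X$ given one particular realization $\mathbf z_0$, for \emph{every} typical $\mathbf z_0$ and every $\mathbf y_0$ at distance $nBd_0$. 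Your contradiction route tries to convert a bias of the posterior $p(\cdot\mid\mathbf z_0)$ at this one $\mathbf z_0$ into a loss in the averaged quantity $I(\mathbf X;\mathbf Y,\mathbf Z)$ (via Pinsker plus single-letter subadditivity). But a single $\mathbf z_0$ carries only an exponentially small fraction of the averaging measure, so an arbitrarily bad posterior at one $\mathbf z_0$ (or even on a subexponentially large family of them) produces no contradiction with $I(\mathbf X;\mathbf Y,\mathbf Z)\geq nBC_{XYZ}-nB\tau$. The averaged mutual-information constraint simply cannot enforce the required per-$\mathbf z_0$, per-$B$ coordinate-level control on the code. This is a genuine gap, not merely an unquantified constant.

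The paper sidesteps this exact obstruction by never analyzing the posterior of $\mathbf X$ given $\mathbf z_0$ directly. Instead it routes the argument through $\mathbf Y$, whose conditional law given any typical $\mathbf z_0$ enjoys a \emph{uniform} per-point bound on the conditionally typical set, $p(\mathbf y\mid\mathbf z_0)\leq 2^{-B(H(Y^n|Z^n)-\epsilon_0)}$, independent of which $\mathbf z_0$ is fixed. It then observes (i) if $d(\mathbf X,\mathbf y_0)\geq nBd$ with conditional probability $q_d$, then $d(\mathbf Y,\mathbf y_0)\gtrsim nB(d*p)$ with conditional probability $\gtrsim q_d$; (ii) with probability $1-\delta$ the pair $(\mathbf Y,\mathbf z_0)$ is jointly typical so $d(\mathbf Y,\mathbf z_0)\approx nB(p*p)$; (iii) the number of $\mathbf y$ satisfying both distance constraints is $\big|\mathrm{Sphere}(\mathbf y_0,nBr)\cap\mathrm{Sphere}(\mathbf z_0,nB(p*p))\big|\leq 2^{nBf(r)}$ with $f(r)<H(p*p)$ strictly once $r>d_0*p*p$; and (iv) Fano at rate $C_{XYZ}-\tau$ gives $\tfrac1n H(Y^n|Z^n)\geq H(p*p)-\tau-o(1)$. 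Multiplying the per-point bound by the sphere-intersection count and comparing against $q_d-2\delta$ forces $q_d$ to vanish. Thus the hard step you flag is resolved by a combinatorial volume estimate against a uniform per-point probability bound, not by an information-loss contradiction; the Fano-derived entropy inequality (iv) is the only place the near-capacity hypothesis enters, and it enters globally rather than coordinate-by-coordinate.
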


Due to the above lemma, we have for some $\epsilon'\to 0$ as $\epsilon \to 0$,
\begin{align*}
  &\mbox{Pr}(\mathbf X \in
  \text{Ball} ( \mathbf y_0, nB  (   (\sqrt{\frac{a_n\ln 2}{2}}+\epsilon)*p    +\epsilon'))\bigcap    \mathcal T_{\epsilon}^{(B)}(X^n| \mathbf z_0, \mathbf i_0)              |\mathbf z_0)\\
  \geq \ &1-  \mbox{Pr}(\mathbf X \notin
  \text{Ball} ( \mathbf y_0, nB  (   (\sqrt{\frac{a_n\ln 2}{2}}+\epsilon)*p    +\epsilon'))|\mathbf z_0)-\mbox{Pr}(\mathbf X \notin
  \mathcal T_{\epsilon}^{(B)}(X^n| \mathbf z_0, \mathbf i_0)   |\mathbf z_0 )\\
   \geq \ &   1-\upsilon -\upsilon \\
=\ &   1-2\upsilon ,
  \end{align*}
 where we have used the fact that  $\mbox{Pr}(\mathbf X \notin
  \mathcal T_{\epsilon}^{(B)}(X^n| \mathbf z_0, \mathbf i_0)   |\mathbf z_0 )\to 0$ as $B\to \infty$.
  Since for any $\mathbf x \in   \mathcal T_{\epsilon}^{(B)}(X^n| \mathbf z_0, \mathbf i_0)$, $p(\mathbf x|\mathbf z_0)\leq 2^{-B(H(X^n|Z^n)-\epsilon_1)}$ for some $\epsilon_1\to 0$ as $\epsilon \to 0$, we have
\begin{align*}
|\text{Ball} ( \mathbf y_0, nB  (   (\sqrt{\frac{a_n\ln 2}{2}}+\epsilon)*p    +\epsilon')) \bigcap    \mathcal T_{\epsilon}^{(B)}(X^n| \mathbf z_0, \mathbf i_0)   |\geq (1-2\upsilon )2^{B(H(X^n|Z^n)-\epsilon_1)}\geq  2^{B(H(X^n|Z^n)-\epsilon_1-\upsilon_1)},
\end{align*}
where $\upsilon _1\to0$ as $\tau\to 0$ and $n,B\to \infty$.
Therefore,
\begin{align}
p(\mathbf y_0|\mathbf i_0)&\geq \sum_{ \mathbf x \in \text{Ball} ( \mathbf y_0, nB  (   (\sqrt{\frac{a_n\ln 2}{2}}+\epsilon)*p    +\epsilon')) \bigcap    \mathcal T_{\epsilon}^{(B)}(X^n| \mathbf z_0, \mathbf i_0)  }p(\mathbf y_0|\mathbf x)p(\mathbf x|\mathbf i_0)\nonumber\\
&\geq 2^{B(H(X^n|Z^n)-\epsilon_1-\upsilon_1)}   2^{-B(H(X^n|I_n)+\epsilon_2)}  \min_{ \mathbf x \in \text{Ball} ( \mathbf y_0, nB  (   (\sqrt{\frac{a_n\ln 2}{2}}+\epsilon)*p    +\epsilon'))  }p(\mathbf y_0|\mathbf x)    \label{E:distance}
\end{align}
where $\epsilon_2 \to 0$ as $\epsilon \to 0$.
For any $\mathbf x$ with $d(\mathbf x, \mathbf y_0) \leq  nB (  (\sqrt{\frac{a_n\ln 2}{2}}+\epsilon)*p+\epsilon')=:nB(\sqrt{\frac{a_n\ln 2}{2}}*p +\epsilon_3 ) $, we have
\begin{align}
p(\mathbf y_0|\mathbf x) &= (1-p)^{nB-d(\mathbf x, \mathbf y_0)}p^{d(\mathbf x, \mathbf y_0)}\nonumber \\
&= (1-p)^{nB(1-p)}p^{nBp}\cdot   \frac{(1-p)^{nB-d(\mathbf x, \mathbf y_0)}p^{d(\mathbf x, \mathbf y_0)}}{(1-p)^{nB(1-p)}p^{nBp}} \nonumber \\
&= 2^{-nBH(p)}\cdot  \left( \frac{p}{1-p}\right)^{d(\mathbf x, \mathbf y_0)-nBp} \nonumber \\
&\geq 2^{-nBH(p)}\cdot  \left( \frac{p}{1-p}\right)^{nB(\sqrt{\frac{a_n\ln 2}{2}}*p +\epsilon_3 )-nBp}  \nonumber \\
&= 2^{-nBH(p)}\cdot  \left( \frac{p}{1-p}\right)^{nB( \sqrt{\frac{a_n\ln 2}{2}}(1-2p) +\epsilon_3 )}  \nonumber \\
&= 2^{-nB\left(H(p) + ( \sqrt{\frac{a_n\ln 2}{2}}(1-2p) +\epsilon_3 )\log \frac{1-p}{p}\right)} \nonumber \\
&= 2^{-nB\left(H(p) +  \Delta'\left(\sqrt{\frac{a_n\ln2}{2}} \right)+\epsilon_4  \right)} \label{E:bscplug}
\end{align}
where $\epsilon_3,\epsilon_4\to 0$ as $\epsilon \to 0$. Plugging \dref{E:bscplug} into \dref{E:distance}, we obtain that
\begin{align*}
p(\mathbf y_0|\mathbf i_0)\geq  2^{-B(H(X^n|I_n)-H(X^n|Z^n)+nH(p)+n\Delta'\left(\sqrt{\frac{a_n\ln2}{2}} \right) +\upsilon_1+\epsilon_1+\epsilon_2 +n\epsilon_4 )},
\end{align*}
which proves the lemma.
\end{proof}

\subsection{Proof of Lemma \ref{L:geoanalysis}}
Consider a specific $(\mathbf{y}_0,\mathbf{z}_0)$ pair where $\mathbf{z}_0\in \mathcal T_{ \epsilon}^{(B)}(Z^n)$ and $d(\mathbf{y}_0,\mathbf{z}_0)=nBd_0$. Let $$q_d :=\mbox{Pr}( d(\mathbf X, \mathbf y_0) \geq  nB d   |\mathbf z_0).$$ To show Lemma \ref{L:geoanalysis}, we show that there exists some $\epsilon' \to 0$ as $\epsilon\to 0$ such that for $d=d_0*p+\epsilon'$, $q_d\leq \upsilon$, for some $\upsilon$ satisfying $\lim_{\tau \to 0} \lim_{n\to \infty} \lim_{B\to \infty} \upsilon =0$.

First, using the properties of jointly typical sequences, we can show (see Appendix \ref{A:bscproperty}) that there exists some $\epsilon_0\to 0$ as $\epsilon\to 0$ such that for any $\delta > 0$ and $B$ sufficiently large,
\begin{align}
\mbox{Pr}(p(\mathbf{Y}|\mathbf{z}_0)\leq 2^{-B(H(Y^n|Z^n)-\epsilon_0)}, d(\mathbf{Y},\mathbf{z}_0) \in [ nB(p*p- \epsilon_0),  nB(p*p+ \epsilon_0)]    |\mathbf{z}_0) \geq 1- \delta. \label{E:qd2}
\end{align}

Then consider the following inequalities:
\begin{align}
\mbox{Pr}(  d(\mathbf Y, \mathbf y_0) \geq  nB (d*p -\epsilon_0 )   |\mathbf z_0)
&=\sum_{\mathbf x} \mbox{Pr}(  d(\mathbf Y, \mathbf y_0) \geq  nB (d*p -\epsilon_0 )|\mathbf x)p(\mathbf x|\mathbf z_0)\nonumber \\
&\geq \sum_{\mathbf x: d(\mathbf x, \mathbf y_0)\geq  nBd } \mbox{Pr}(  d(\mathbf Y, \mathbf y_0) \geq  nB (d*p -\epsilon_0 )|\mathbf x)p(\mathbf x|\mathbf z_0)\nonumber \\
&\geq q_d \cdot \min_{\mathbf x: d(\mathbf x, \mathbf y_0)\geq nBd } \mbox{Pr}(  d(\mathbf Y, \mathbf y_0) \geq  nB (d*p -\epsilon_0 )|\mathbf x).\label{E:plugd1d}
\end{align}
Without loss of generality, consider a specific pair $(\mathbf x, \mathbf y_0)$ as shown in Fig. \ref{F:xy_0},
where $\mathbf y_0=\mathbf 0$ and $d(\mathbf x, \mathbf y_0)=nBd_1\geq nBd$.
By the law of large numbers, we have for any $\delta>0$ and $B$ sufficiently large,
\begin{align}
1-\delta &\leq \mbox{Pr}( \frac{1}{nB} N(0,1|\mathbf x,\mathbf Y)\geq   \frac{1}{nB} N(0|\mathbf x) p  -\frac{\epsilon_0}{2},
  \frac{1}{nB} N(1,1|\mathbf x,\mathbf Y)\geq   \frac{1}{nB} N(1|\mathbf x) (1-p)-\frac{\epsilon_0}{2} |\mathbf x )\nonumber \\
&\leq \mbox{Pr}(  N(1|\mathbf Y)\geq  nB(d_1* p -\epsilon_0)|\mathbf x)\nonumber \\
&\leq \mbox{Pr}(  N(1|\mathbf Y)\geq  nB(d* p -\epsilon_0)|\mathbf x)\label{E:d1d}\\
&= \mbox{Pr}(  d(\mathbf Y, \mathbf y_0)  \geq  nB(d* p -\epsilon_0)|\mathbf x),\nonumber
\end{align}
where \dref{E:d1d} follows since the event $N(1|\mathbf Y)\geq  nB(d_1* p -\epsilon_0)$ implies $N(1|\mathbf Y)\geq  nB(d* p -\epsilon_0)$ due to the relation $d_1\geq d$. Plugging this into \dref{E:plugd1d}, we obtain
\begin{align}
\mbox{Pr}(  d(\mathbf Y, \mathbf y_0) \geq  nB (d*p -\epsilon_0 )   |\mathbf z_0)
\geq q_d  (1-\delta).\label{E:qd1}
\end{align}

\begin{figure}[hbt]
\centering
\includegraphics[width=0.4\textwidth]{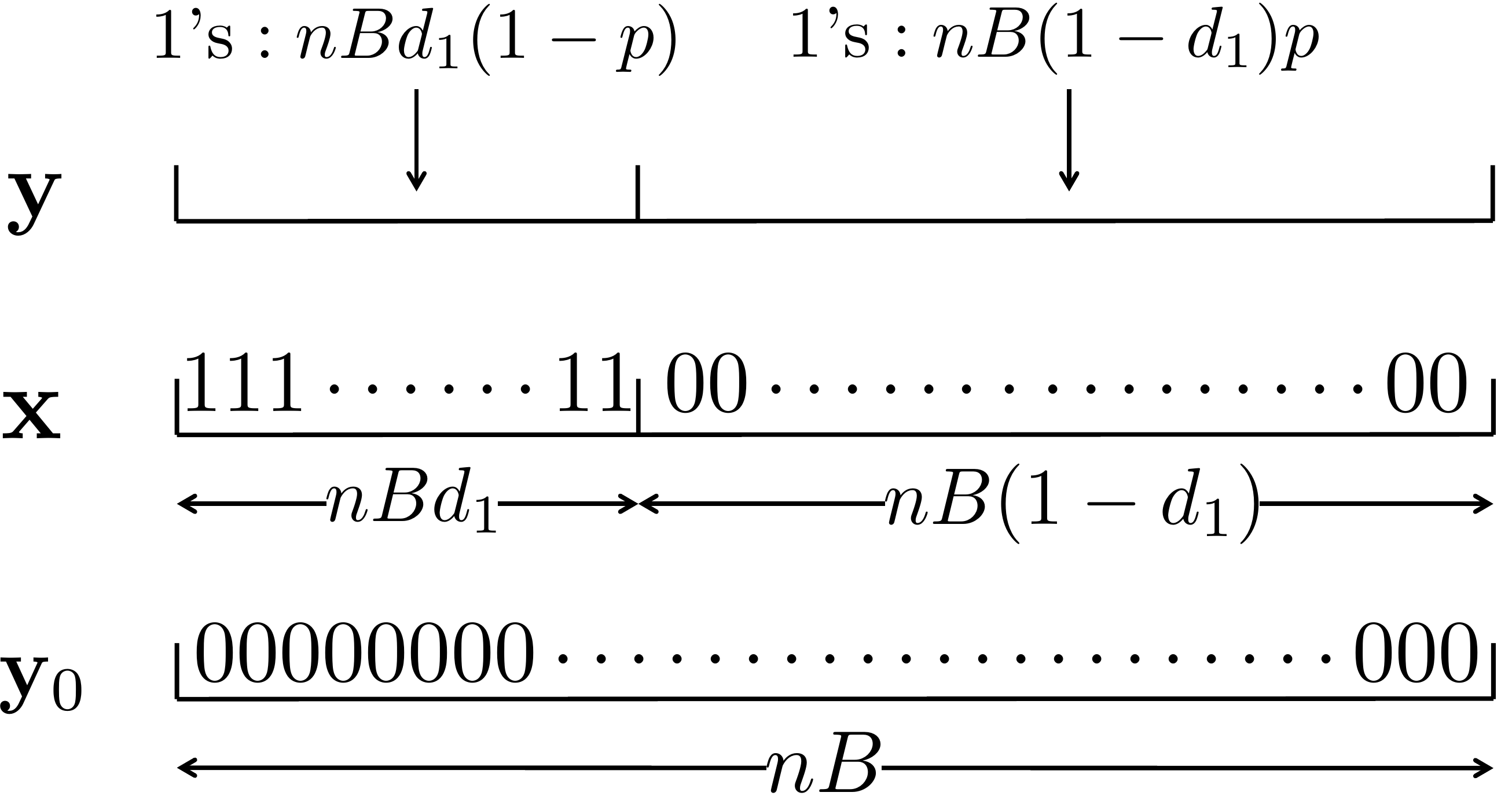}
\caption{Illustration of a specific pair $(\mathbf x,\mathbf y_0 )$.}
\label{F:xy_0}
\end{figure}

Combining \dref{E:qd2} and \dref{E:qd1}, we have for any $\delta>0$ and $B$ sufficiently large,
\begin{align*}
&~\mbox{Pr}(  d(\mathbf Y, \mathbf y_0) \geq  nB (d*p -\epsilon_0 ),p(\mathbf{Y}|\mathbf{z}_0)\leq 2^{-B(H(Y^n|Z^n)-\epsilon_0)}, d(\mathbf{Y},\mathbf{z}_0) \in [ nB(p*p- \epsilon_0),  nB(p*p+ \epsilon_0)]    |\mathbf z_0)\\
&\geq 1- ( {\delta}+1- q_d  (1-\delta))\\
&= q_d  (1-\delta) -  {\delta} \\
&\geq q_d  - 2 \delta.
\end{align*}

On the other hand,
\begin{align*}
&~\mbox{Pr}(  d(\mathbf Y, \mathbf y_0) \geq  nB (d*p -\epsilon_0 ),p(\mathbf{Y}|\mathbf{z}_0)\leq 2^{-B(H(Y^n|Z^n)-\epsilon_0)}, d(\mathbf{Y},\mathbf{z}_0) \in [ nB(p*p- \epsilon_0),  nB(p*p+ \epsilon_0)]    |\mathbf z_0)\\
&\leq 2^{-B(H(Y^n|Z^n)-\epsilon_0)} |\{\mathbf y: d(\mathbf y, \mathbf y_0) \geq  nB (d*p -\epsilon_0 ), d(\mathbf{y},\mathbf{z}_0) \in [ nB(p*p- \epsilon_0),  nB(p*p+ \epsilon_0)]    \}|\\
&= 2^{-B(H(Y^n|Z^n)-\epsilon_0)} \left| \Cup_{r=d*p-\epsilon_0}^{1}\mbox{Sphere}(\mathbf y_0, nBr ) \Cap   \Cup_{\rho=p*p- \epsilon_0}^{p*p+ \epsilon_0}\mbox{Sphere}(\mathbf z_0, nB\rho ) \right|\\
&= 2^{-B(H(Y^n|Z^n)-\epsilon_0)} \left| \Cup_{r=d*p-\epsilon_0}^{1}\Cup_{\rho=p*p- \epsilon_0}^{p*p+ \epsilon_0}\underbrace{\mbox{Sphere}(\mathbf y_0, nBr ) \Cap   \mbox{Sphere}(\mathbf z_0, nB\rho )}_{\mbox{Inter}(r,\rho)}\right|.
\end{align*}
Therefore,
\begin{align}
q_d\leq 2 \delta + 2^{-B(H(Y^n|Z^n)-\epsilon_0)} \left| \Cup_{r=d*p-\epsilon_0}^{1}\Cup_{\rho=p*p- \epsilon_0}^{p*p+ \epsilon_0} \mbox{Inter}(r,\rho) \right|. \label{E:vanish}
\end{align}

\begin{figure}[hbt]
\centering
\includegraphics[width=0.4\textwidth]{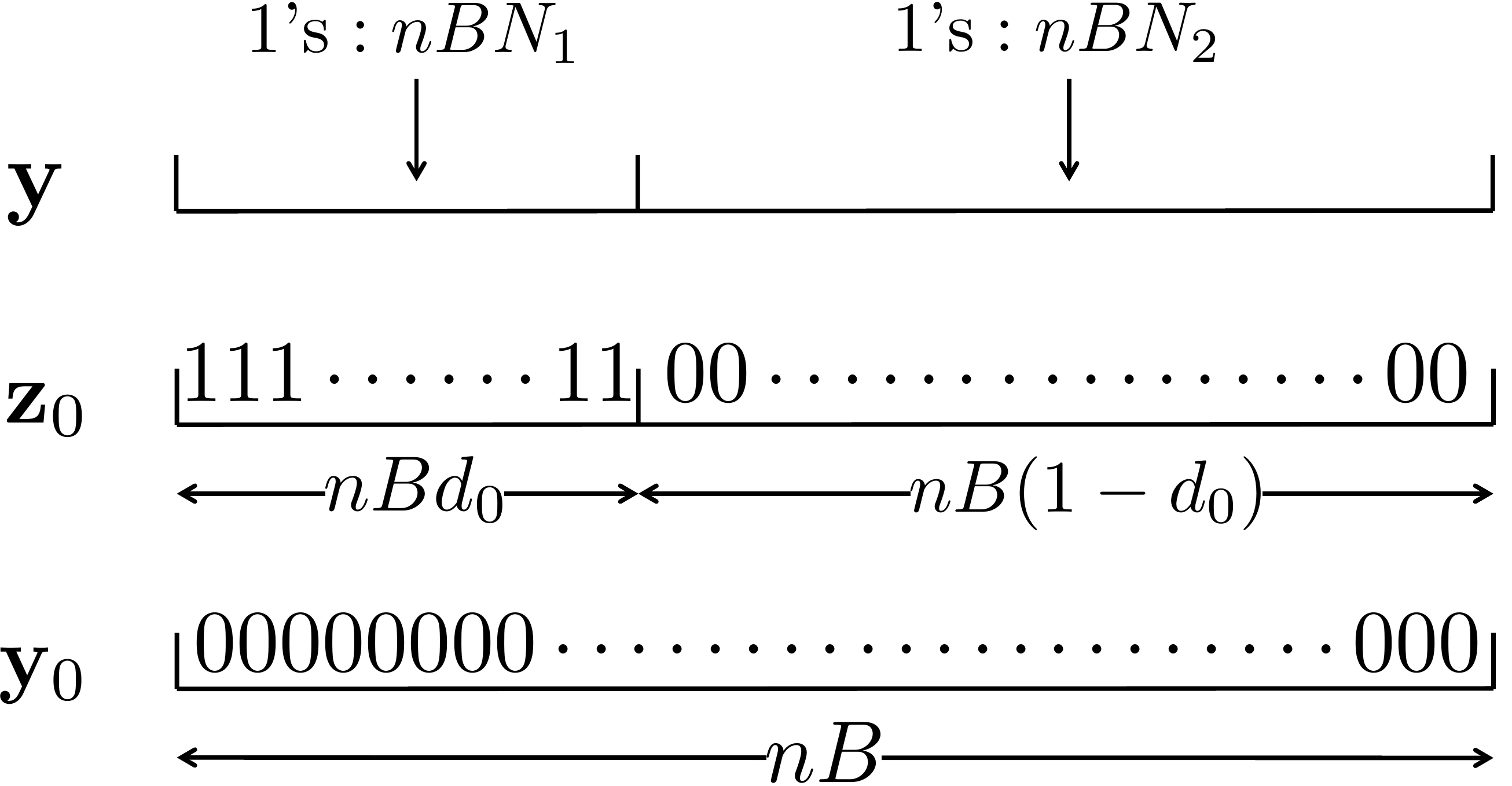}
\caption{Illustration of a specific pair $(\mathbf y_0,\mathbf z_0 )$.}
\label{F:z_0y_0}
\end{figure}

Now we show that the second term on the R.H.S. of \dref{E:vanish} vanishes if $d>d_0*p$.
Without loss of generality, consider a specific pair $(\mathbf y_0,\mathbf z_0 )$ as shown in Fig. \ref{F:z_0y_0},
where $\mathbf y_0=\mathbf 0$ and $d(\mathbf y_0,\mathbf z_0 )=nd_0$. We first characterize the volume of $\mbox{Inter}(r,\rho)$ for any $r$ and $\rho = p*p$, i.e.,
\begin{align}
\left| \mbox{Sphere}(\mathbf y_0, nBr ) \Cap   \mbox{Sphere}(\mathbf z_0, nB p*p ) \right| .\label{E:vol}
\end{align}
For any $\mathbf y$, let $nBN_1$ denote the number of 1's appearing in its first $nBd_0$ digits, and $nBN_2$ denote the number of 1's in the rest. Then the volume in \dref{E:vol} amounts to the number of $\mathbf y$'s such that the following two equalities hold
\begin{numcases}{}
 N_1+N_2= r \nonumber \\
 d_0-N_1+N_2=p*p\nonumber
\end{numcases}
i.e.,
\begin{numcases}{}
 N_1 = \frac{r+d_0-p*p}{2}\nonumber\\
 N_2 = \frac{r+p*p-d_0}{2}\nonumber.
\end{numcases}
Here, $N_1\in [0,d_0]$ and $N_2\in [0, (1-d_0)]$, i.e.,
\begin{numcases}{}
 r \in [p*p-d_0,p*p+d_0 ] \label{E:rcondition1}\\
 r \in [d_0-p*p,2-p*p-d_0 ]\label{E:rcondition2}.
\end{numcases}
Therefore, with $\rho=p*p$,
\begin{align}
\mbox{Inter}(r,\rho)&= { nBd_0  \choose nB\frac{r+d_0-p*p}{2}}{ nB(1-d_0)  \choose nB\frac{r+p*p-d_0}{2}}\nonumber \\
&\leq 2^{nBd_0 \left(H( \frac{r+d_0-p*p}{2d_0} ) \right)} 2^{nB(1-d_0) \left(H( \frac{r+p*p-d_0}{2(1-d_0)} ) \right)} \label{E:stirling}\\
&= 2^{nB \left(d_0 H( \frac{r+d_0-p*p}{2d_0} ) +(1-d_0) H( \frac{r+p*p-d_0}{2(1-d_0)} )\right)}\nonumber
\end{align}
for sufficiently large $B$, where \dref{E:stirling} follows from the bound ${n \choose nk} \leq \frac{1}{\sqrt{n\pi k(1-k)}} 2^{nH(k)}$ for any $k \in (0,1)$, as stated in \cite[Lemma 17.5.1]{Coverbook}.

Let $f(r)=d_0 H( \frac{r+d_0-p*p}{2d_0} ) +(1-d_0) H( \frac{r+p*p-d_0}{2(1-d_0)} )$.
It can be verified that $f(r)$ attains the maximum $H(p*p)$ if and only if $r=d_0*p*p$; see Appendix \ref{A:f(r)}. Thus,  when
\begin{align*}
\rho&=p*p\\
d&=d_0*p+\epsilon'\\
r&\geq d*p-\epsilon_0=(d_0*p+\epsilon')*p-\epsilon_0=d_0*p*p+\epsilon'(1-2p)-\epsilon_0,
\end{align*}
we have
\begin{align*}
\mbox{Inter}(r,\rho)\leq 2^{nB(H(p*p)-\epsilon_1 )}
\end{align*}
for some $\epsilon_1>0$ provided $\epsilon'(1-2p)-\epsilon_0>0$. Further, due to the continuity of $\mbox{Inter}(r,\rho)$ in $\rho$, for any $\rho \in [p*p-\epsilon_0, p*p+\epsilon_0]$, $d=d_0*p+\epsilon'$,  $r\geq d*p-\epsilon_0$,
\begin{align}
\mbox{Inter}(r,\rho)\leq 2^{nB(H(p*p)-\epsilon_1+\epsilon_2)}\label{E:pluggingtovanish}
\end{align}
for some $\epsilon_2\to 0$ as $\epsilon_0 \to 0$.

Plugging \dref{E:pluggingtovanish} into \dref{E:vanish}, we have for $d=d_0*p+\epsilon'$ and sufficiently large $B$,
\begin{align*}
q_d&\leq 2 \delta+ 2^{-B(H(Y^n|Z^n)-\epsilon_0)}  \sum_{r=d*p-\epsilon_0}^{1}\sum_{\rho=p*p-\epsilon_0}^{p*p+\epsilon_0}
2^{nB(H(p*p)-\epsilon_1 +\epsilon_2)}   \\
&\leq  2 \delta+ 2^{-B(H(Y^n|Z^n)-\epsilon_0)}
2^{nB(H(p*p)-\epsilon_1+\epsilon_0 +\epsilon_2)}  \\
&\leq  2 \delta+ 2^{-nB(\frac{1}{n}H(Y^{n}|Z^{n}) -H(p*p)  +\epsilon_1   -2\epsilon_0-\epsilon_2)}   ,
\end{align*}
with
\begin{align}
\frac{1}{n}H(Y^{n}|Z^{n})&=\frac{1}{n}(H(Y^n,Z^n)-H(Z^n)) \nonumber\\
&=\frac{1}{n}(H(X^n)+H(Y^n,Z^n|X^n)-H(X^n|Y^n,Z^n)-H(Z^n)) \nonumber\\
&=\frac{1}{n}(H(M)-H(M|X^n)+H(Y^n,Z^n|X^n)-H(X^n|Y^n,Z^n)-H(Z^n)) \nonumber\\
&\geq \frac{1}{n}(nR-n\epsilon_0+2nH(p)-n\epsilon_0 -n) \label{E:lastfano}\\
&= C_{XYZ}-\tau +2H(p)-1-2\epsilon_0 \nonumber\\
&=  H(p*p)-\tau -2\epsilon_0, \nonumber
\end{align}
for $n$ sufficiently large, where in \dref{E:lastfano} we have used Fano's inequality.
Thus, when $\tau,\epsilon_0$ are sufficiently small and $n,B$ are sufficiently large, we have for any $\epsilon'>0$, $d=d_0*p+\epsilon'$,
\begin{align*}
q_{d} &\leq 2 \delta+ 2^{-nB(\epsilon_1 -  \tau -4\epsilon_0-\epsilon_2)} \\
&\leq 3 \delta.
\end{align*}
We finally conclude that for a $(\mathbf{y}_0,\mathbf{z}_0)$ pair where $\mathbf{z}_0\in \mathcal T_{ \epsilon}^{(B)}(Z^n)$ and $d(\mathbf{y}_0,\mathbf{z}_0)=nBd_0$, there exists some $\epsilon'\to 0$ as $\epsilon\to 0$ such that
\begin{align*}
\mbox{Pr}( d(\mathbf X, \mathbf y_0) \leq  nB ( d_0*p+\epsilon')   |\mathbf z_0)\geq 1-\upsilon .
\end{align*}
where $\upsilon$ can be made arbitrarily small by choosing $n,B$ sufficiently large and $\tau$ sufficiently small.

\section{Conclusion}\label{conclusion}
We consider the symmetric primitive relay channel, and develop two new upper bounds on its capacity that are tighter than existing bounds, including the celebrated cut-set bound. Our approach uses measure concentration (the blowing-up lemma in particular) to analyze the probabilistic geometric relations between the typical sets of the  $n$-letter random variables associated with a reliable code for communicating over this channel. We then translate these relations  to new entropy inequalities between the $n$-letter random variables involved.

Information theory and geometry are indeed known to be inherently related; for example the differential entropy of a continuous random variable can be
regarded as the exponential growth rate of the volume of its typical set. Therefore, entropy relations can, in principle, be developed by studying the relative geometry of the typical sets of the random variables. However, we are not aware of many examples where such geometric techniques have been successfully used to develop converses for problems in network information theory. It would be interesting to see if the approach we develop in this paper, i.e. deriving information inequalities
by studying the geometry of typical sets, in particular using measure concentration,  can be used to make progress on other long-standing open problems in network information theory.

While we have exclusively focused on the symmetric relay channel in this paper, our results can be extended to asymmetric primitive relay channels \cite{ISIT2016} using the idea of channel simulation. An extension of these ideas to the Gaussian case has been provided in \cite{Allerton2015}. 
\appendices

\section{}\label{A:reliability}
To see $E(R)\leq R-C_{XY}$ for any $R>I(X;Y)$, recall that $E(R)$ has the following alternative form \cite{Korner}:
\begin{align}
E(R)=   \min_{p(x)} \min_{\tilde p(y|x)} D( \tilde p(y|x)||p(y|x)|p(x)) + |R-I(p(x),\tilde p(y|x))| ^{+}       \label{E:Edefine1}
\end{align}
where $|t|^+:=\max\{0,t\}$, $D( \tilde p(y|x)||p(y|x)|p(x))$ is the conditional relative entropy defined as
\begin{align*}
D( \tilde p(y|x)||p(y|x)|p(x)):=  \sum_{(x,y) } p(x)\tilde p(y|x) \log \frac{\tilde p(y|x) }{p(y|x)} ,
\end{align*}
and $I(p(x),\tilde p(y|x))$ is the mutual information defined with respect to the joint distribution $p(x)\tilde p(y|x)$, i.e.,
\begin{align*}
I(p(x),\tilde p(y|x)):=  \sum_{(x,y) } p(x)\tilde p(y|x) \log \frac{\tilde p(y|x) }{\sum_{x}p(x)\tilde p(y|x) } .
\end{align*}
In the regime of $R>C_{XY}$, simply choosing the $p(x)$ and $\tilde p(y|x)$ in \eqref{E:Edefine1} to be capacity-achieving distribution $p^*(x)$ and $p(y|x)$ respectively would make the objective function equal to $R-C_{XY}$, and thus $E(R)\leq R-C_{XY}$.

\section{}\label{A:example}

We demonstrate the improvements of our bound in Theorem \ref{T:sharpen} over Xue's bound using the following simple example.

\begin{example}
Suppose both $X$-$Y$ and $X$-$Z$ links are the binary asymmetric channels as depicted in Fig. \ref{F:bac}, with parameters $p_1=0.01$ and $p_2=0.3$. For the input distribution
\begin{numcases}{p(x)=}
\alpha   & $x=0$ \nonumber \\
1-\alpha &  $x=1$  \nonumber
\end{numcases}
we have $$I(X;Y)=H ( \alpha(1-p_1) +(1-\alpha) p_2   )- (\alpha H(p_1)+(1-\alpha) H(p_2) )$$
and
\begin{align*}
I(X;Y,Z)=& \ H ( [  \alpha(1-p_1)^2+(1-\alpha)p_2 ^2,   \alpha(1-p_1)p_1+(1-\alpha)(1-p_2)p_2 ,  \\
            &~~~~~~~ \alpha(1-p_1)p_1+(1-\alpha)(1-p_2)p_2 ,   \alpha p_1^2+(1-\alpha)(1-p_2)^2 ]   )\\
&-  2(\alpha H(p_1)+(1-\alpha) H(p_2) ).
\end{align*}
With $p_1=0.01$ and $p_2=0.3$, numerical evaluation of $I(X;Y)$ and $I(X;Y,Z)$ yields that $C_{XY}=\max_{\alpha} I(X;Y)=0.46432$ with the maximizer $\alpha^* _{XY}=0.58$, and $C_{XYZ}=\max_{\alpha} I(X;Y,Z)=0.72022$ with the maximizer $\alpha^* _{XYZ}=0.54$.

Suppose we want to achieve a rate $R=C_{XYZ}$, and we use Proposition \ref{P:Xue} and Theorem \ref{T:sharpen} to derive a lower bound on $R_0$, respectively.
First consider Proposition \ref{P:Xue}. Numerically, we have $E(R)=0.05951$ for $R=C_{XYZ}=0.72022$, and the minimum $a$ to satisfy \dref{E:Xue2} is $a=0.00008$. Thus, by \dref{E:Xue1}, we have
\begin{align}
R_0&\geq R-C_{XY}+a \nonumber \\
&\geq 0.72022- 0.46432 + 0.00008\label{E:bothtobesharpened} \\
& = 0.25598.\nonumber
\end{align} We then apply Theorem \ref{T:sharpen} and demonstrate that the improvements mentioned in Section \ref{SS:Newbounds} result in a tighter bound on $R_0$.
In Theorem \ref{T:sharpen}, $p(x)$ has to be chosen such that $\alpha =\alpha^* _{XYZ}=0.54$ due to the constraint \dref{E:sharpened1}. Under such a distribution of $p(x)$, numerically, we have $I(X;Y)=0.46223 < C_{XY}$, and $C_{XYZ}-I(X;Y)= 0.25799 > E(R)$. Noting the R.H.S.  of  \dref{E:Xue2} is also sharpened to that of \dref{E:rewrite}, we can calculate the minimum $a$ satisfying \dref{E:rewrite} to be $a=0.00546$. Thus, by \dref{E:sharpened2}, we have
\begin{align}
R_0&\geq R-I(X;Y) +a \nonumber \\
&\geq 0.72022- 0.46223 + 0.00546 \label{E:bothsharpened}\\
& =0.26345,\nonumber
\end{align}
where it is easy to see that the last two terms in \dref{E:bothsharpened} are both sharpened compared to those in \dref{E:bothtobesharpened}.

Therefore, in order to achieve the rate $R=C_{XYZ}$, the lower bounds on $R_0$ yielded by Proposition \ref{P:Xue} and Theorem \ref{T:sharpen} are
$$R_0 \geq 0.25598$$ and $$R_0\geq 0.26345$$ respectively. Viewed from another perspective, for $R_0 \in [0.25598, 0.26345 ) $, the bound in Theorem \ref{T:sharpen}
asserts that the capacity of the relay channel $C(R_0)<C_{XYZ}$, which excludes the possibility of achieving $R=C_{XYZ}$ while Xue's bound in Proposition \ref{P:Xue} cannot.

\begin{figure}[hbt]
\centering
\includegraphics[width=0.35\textwidth]{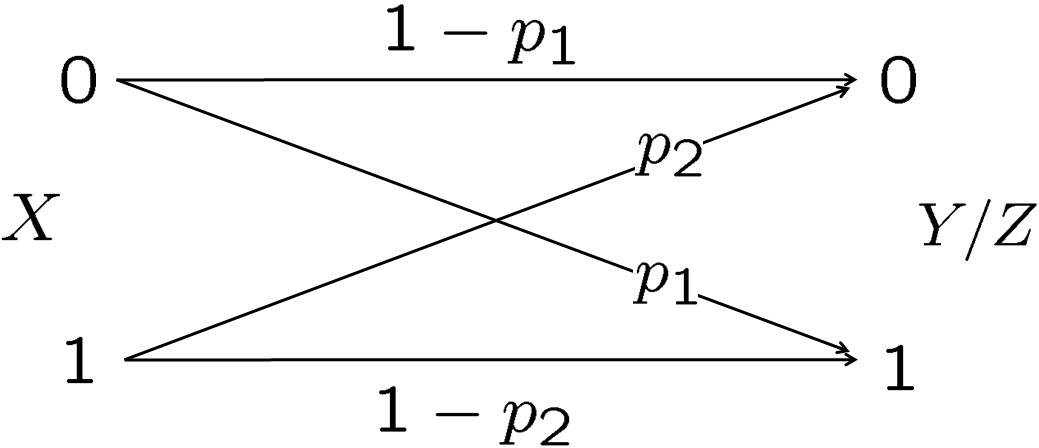}
\caption{Binary asymmetric channel.}
 \label{F:bac}
\end{figure}

\end{example}

\section{$\Delta\left(p(x),d\right)$ for Binary Symmetric Channels}\label{A:derivebscdelta}
For a binary symmetric channel with crossover probability $p<1/2$, the objective function in \dref{E:objective_def} can be expressed as
\begin{align}
&H(\tilde p(\omega|x) |p(x))+D( \tilde p(\omega|x) || p(\omega|x) |p(x))- H(p(\omega|x) |p(x))\nonumber \\
=\ &-\sum_{(x,\omega) } p(x)\tilde p(\omega|x) \log \tilde p(\omega|x)
+ \sum_{(x,\omega) }  p(x)\tilde p(\omega|x) \log \frac{\tilde p(\omega|x) }{p(\omega|x)} -\sum_{(x,\omega)  } p(x)  p(\omega|x) \log \frac{1}{p(\omega|x)}\nonumber  \\
=\ &  \sum_{(x,\omega) }  [p(x)\tilde p(\omega|x)-p(x)  p(\omega|x) ]\log \frac{1}{p(\omega|x)}\nonumber \\
=\ &  \sum_{x\neq \omega }  \left[p(x)\tilde p(\omega|x)-p(x)  p(\omega|x) \right]\log \frac{1}{p} +\sum_{x = \omega }  [p(x)\tilde p(\omega|x)-p(x)  p(\omega|x) ]\log \frac{1}{1-p} \label{E:already1}\\
=\ &  \sum_{x\neq \omega }  \left[p(x)\tilde p(\omega|x)-p(x)  p(\omega|x) \right]\log \frac{1}{p} +\sum_{x \neq \omega }  [p(x)(1-\tilde p(\omega|x))-p(x)  (1-p(\omega|x)) ]\log \frac{1}{1-p}\label{E:already2}\\
=\ &  \sum_{x\neq \omega }  \left[p(x)\tilde p(\omega|x)-p(x)  p(\omega|x) \right]\log \frac{1}{p} -\sum_{x \neq \omega }  [p(x)\tilde p(\omega|x)-p(x) p(\omega|x) ]\log \frac{1}{1-p}\label{E:already3}\\
=\ &  \sum_{x\neq \omega }  \left[p(x)\tilde p(\omega|x)-p(x)  p(\omega|x) \right]\log \frac{1-p}{p}\label{E:newobjective}
\end{align}

We now show that under the constraint 
\begin{align}
\frac{1}{2}\sum_{(x,\omega)}|p(x)\tilde p(\omega|x)-p(x)  p(\omega|x)|\leq  d,\label{E:constraint_defApendix}
\end{align}
the function in \dref{E:newobjective}, and thus $\Delta\left(p(x),d\right)$, are upper bounded by $$\min\left\{ d,1-p   \right\} \log \frac{1-p}{p}.$$
Along the similar lines as in \dref{E:already1}--\dref{E:already3}, we obtain
$$\sum_{x = \omega } |p(x)\tilde p(\omega|x)-p(x)  p(\omega|x) |=\sum_{x \neq \omega } |p(x)\tilde p(\omega|x)-p(x)  p(\omega|x) |,$$
and thus the constraint  \dref{E:constraint_defApendix} can be rewritten as
\begin{align}
\sum_{x \neq \omega } |p(x)\tilde p(\omega|x)-p(x)  p(\omega|x) |\leq d.\label{E:constraint_part1}
\end{align}
On the other hand,  we have
\begin{align}
\sum_{x \neq \omega } |p(x)\tilde p(\omega|x)-p(x)  p(\omega|x) |&=\sum_{x \neq \omega } p(x)|\tilde p(\omega|x)-  p(\omega|x) |\nonumber \\
&\leq \sum_{x \neq \omega }  p(x) (1-p) \nonumber \\
&= 1-p.\label{E:constraint_part2}
\end{align}
Combining \dref{E:newobjective}, \dref{E:constraint_part1} and \dref{E:constraint_part2} yields that
$$\Delta\left(p(x),d\right) \leq   \min\left\{ d,1-p   \right\} \log \frac{1-p}{p}.  $$
In fact, it can be easily checked that the equality sign in the above inequality can be attained by choosing
\begin{align*}
\tilde p(\omega|x)=p+\min\left\{ d,1-p   \right\}, \forall (x,\omega)
\text{ with } x\neq\omega,
\end{align*}
 and thus we conclude that
$$\Delta\left(p(x),d\right)= \min\left\{d,1-p   \right\}\log \frac{1-p}{p}.  $$

\section{Upper Bounds for Binary Symmetric Channel Case}\label{A:BSCBounds}

Various upper bounds are evaluated for the binary symmetric channel case as follows.
\subsection{Cut-Set bound (Prop. \ref{P:cutset})}
The optimal distribution for Prop. \ref{P:cutset} is $p^*(0)=p^*(1)=1/2$, under which,
\begin{numcases}{}
 I^*(X;Y,Z) = C_{XYZ}=1+H(p*p)-2H(p) \nonumber \\
 I^*(X;Y)=C_{XY}= 1 - H(p) \nonumber.
\end{numcases}
Therefore, the cut-set bound simplifies to
  \begin{align*}
  C(R_0) \leq  \min \left\{1+H(p*p)-2H(p), 1-H(p)+R_0   \right\}.
  \end{align*}

\subsection{Xue's bound (Prop. \ref{P:Xue})}
Since the function $E(R)$ is monotonic in $R$, its inverse function $E^{-1}(\cdot)$ exists and Xue's bound can be expressed as
\begin{align*}
  C(R_0) \leq \max_{a\in [0,R_0] } \min \left\{1-H(p)+R_0-a, E^{-1}(H(\sqrt{a})+  \sqrt{a} )   \right\}.
  \end{align*}

\subsection{Our first bound (Thm. \ref{T:sharpen})}
The uniform distribution of $X$ is also optimal for Thm. \ref{T:sharpen}, under which our first bound reduces to
  \begin{align*}
  C(R_0) \leq \max_{a\in [0, \min \{R_0, H(p),  \frac{1}{2\ln 2}\}] }  \min \Bigg \{1+H(p*p)-2H(p), \  &   1-H(p)+R_0-a ,    \\
&   1-H(p)+  H\left(\sqrt{\frac{a \ln 2}{2}}\right)-a         \Bigg\}.
  \end{align*}
where the constraint of $a$ follows since $H(Z|X)=H(p)$ for any $p(x)$ and $\frac{2}{\ln 2} \left( \frac{|\Omega|-1}{|\Omega|} \right)^2= \frac{1}{2\ln 2}$, and the term $\sqrt{\frac{a \ln 2}{2}}\log (|\Omega|-1)$ disappears compared to Thm. \ref{T:sharpen} since it becomes 0 with $|\Omega|=2$.

\subsection{Our second bound (Thm. \ref{T:novel})}
Recall that in the binary symmetric channel case, $\Delta\left(p(x),d\right)$ is independent of $p(x)$ and given by $$\min\left\{ d,1-p   \right\} \log \frac{1-p}{p}:= \bar{\Delta}\left(d\right).$$
Therefore, our new bound becomes
\begin{align}
  C(R_0) \leq \max_{a\in [0, \min \{R_0, H(p)\}] }  \min \left\{1+H(p*p)-2H(p) ,    1-H(p)+R_0-a ,   1-H(p)+\bar{\Delta}\left(\sqrt{\frac{a \ln 2}{2}}\right)\right\}.\label{E:rhs}
  \end{align}
It is not difficult to see that for the R.H.S. of \dref{E:rhs}, at least one of the maximizers must be no greater than
$\frac{2}{\ln 2}(1-p)^2$, i.e., satisfying $\sqrt{\frac{a \ln 2}{2}}\leq 1-p$. Therefore, \dref{E:rhs} can be equivalently stated as
\begin{align*}
  C(R_0) \leq \max_{a\in [0, \min \{R_0, H(p),\frac{2}{\ln 2}(1-p)^2\}] }  \min \Bigg \{1+H(p*p)-2H(p)&,     1-H(p)+R_0-a ,    \\
&   1-H(p)+\sqrt{\frac{a \ln 2}{2}} \log \frac{1-p}{p} \Bigg\}.
  \end{align*}

\section{Proof of \dref{E: HFopt}}\label{A:optimality}
To show \dref{E: HFopt}, it suffices to show $H(p)/H(p*p)\to 1/2$ as $p \to 0$. For this, we have
\begin{align*}
\lim_{p\to 0}\frac{H(p)}{H(p*p)}&=\lim_{p\to 0}\frac{H'(p)}{H'(p*p)\cdot (p*p)'}\\
&=\lim_{p\to 0}\frac{\log \frac{1-p}{p}}{\log \frac{1-p*p}{p*p}\cdot (2-4p)}\\
&=\frac{1}{2}\lim_{p\to 0}\frac{\log' \frac{1-p}{p} \cdot(\frac{1-p}{p})'}{\log' \frac{1-p*p}{p*p} \cdot (\frac{1-p*p}{p*p})'}\\
&=\frac{1}{2}\lim_{p\to 0}\frac{  \frac{ p}{1-p} \cdot (-\frac{1}{p^2})}{  \frac{ p*p}{1-p*p} \cdot (-\frac{1}{(p*p)^2})\cdot (2-4p)}\\
&=\frac{1}{4}\lim_{p\to 0}\frac{(p*p) (1-p*p) }{p(1-p)}\\
&=\frac{1}{4}\lim_{p\to 0}2 (1-p*p)  \\
&=\frac{1}{2}   \end{align*}
which proves \dref{E: HFopt}.

\section{Proof of Lemma \ref{L:fixed}}\label{A:fixedproof}
To prove Lemma \ref{L:fixed}, we need the following lemma, whose proof relies on the property of polynomial number of compositions and can be easily extended from the proof
for a single user channel \cite{Gallager}.
%

\begin{lemma}\label{L:fixedproof}
Suppose a code $ (\mathcal{C}_{(n,R)}, f_n,g_n     )$ has the probability of error $P_e^{(n)}$. Then there is some composition $Q$ for
which a fixed composition code $ (\mathcal{C}_{(n, \tilde R)}^{[Q]}, f_n,g_n     )$ exists, and its probability of error $\tilde P_e^{(n)}$
and rate $\tilde R$ satisfy $\tilde P_e^{(n)} \leq  P_e^{(n)}$ and $\tilde R\geq R-(|\Omega_X|-1) \frac{\log (n+1)}{n} $.
\end{lemma}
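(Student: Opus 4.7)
The plan is a standard type-counting and expurgation argument from the method of types. Since a composition $Q_{x^n}$ of a length-$n$ sequence over $\Omega_X$ is an $|\Omega_X|$-tuple of non-negative integer counts $(N(a|x^n))_{a\in\Omega_X}$ summing to $n$, the number of distinct compositions is bounded by $\binom{n+|\Omega_X|-1}{|\Omega_X|-1}\leq(n+1)^{|\Omega_X|-1}$; call this bound $T$. I would partition the $M=2^{nR}$ codewords of $\mathcal C_{(n,R)}$ by composition: for each $Q$, let $N_Q$ be the subset of codewords with composition $Q$, $M_Q=|N_Q|$, and $P_e^{(Q)}$ the average error probability conditional on transmitting a codeword in $N_Q$. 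This yields the two identities $\sum_Q M_Q=M$ and $\sum_Q(M_Q/M)P_e^{(Q)}=P_e^{(n)}$. Pigeonhole on the first immediately gives some composition with $M_Q\geq M/T$, while averaging on the second ensures some composition with $P_e^{(Q)}\leq P_e^{(n)}$.

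The heart of the argument is to exhibit a \emph{single} composition $Q^*$ meeting both bounds simultaneously. I would proceed by a two-step expurgation. First, applying a Markov-type bound to the second identity shows that the total number of codewords lying in compositions with $P_e^{(Q)}\leq 2P_e^{(n)}$ is at least $M/2$, since $\sum_{Q:P_e^{(Q)}>2P_e^{(n)}}(M_Q/M)\cdot 2P_e^{(n)}\leq P_e^{(n)}$. Second, within this ``good'' subset, pigeonhole over the at most $T$ contributing compositions yields some $Q^*$ with $M_{Q^*}\geq M/(2T)$. Declaring $\mathcal C_{(n,\tilde R)}^{[Q^*]}:=N_{Q^*}$ and inheriting $(f_n,g_n)$ from the original code then produces a fixed-composition code with rate $\tilde R=n^{-1}\log M_{Q^*}\geq R-(|\Omega_X|-1)\log(n+1)/n-O(1/n)$ and error $\tilde P_e^{(n)}=P_e^{(Q^*)}\leq 2P_e^{(n)}$.

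The main obstacle is this coupling of size and error within a single composition: treating the two bounds separately is each a one-line argument, but jointly requires the two-step expurgation above. The small multiplicative factor in the error and additive $O(1/n)$ in the rate incurred by the Markov variant are $o(1)$ and can be absorbed, yielding the stated bounds; in any case they have no bearing on the downstream use in Lemma~\ref{L:fixed}, which only needs $\tilde P_e^{(n)}\to 0$ together with a vanishing rate loss. The polynomial type count $T\leq(n+1)^{|\Omega_X|-1}$ is the key quantitative ingredient of the argument and is precisely what drives the $\log(n+1)/n$ scaling of the rate penalty.
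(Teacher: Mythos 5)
Your argument is correct and follows the standard type-counting route the paper alludes to (the paper does not give its own proof, deferring to Gallager's notes): partition the codebook by composition, use the polynomial bound $T\leq(n+1)^{|\Omega_X|-1}$ on the number of compositions, and extract a single composition class that is both large and has small conditional error. Your two-step expurgation (Markov to restrict to compositions with $P_e^{(Q)}\leq 2P_e^{(n)}$, then pigeonhole over at most $T$ classes) is the correct way to couple the two requirements, and you are right that the mild weakenings this produces are harmless downstream.

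One point worth flagging explicitly: the weakening you incur is not an artifact of a suboptimal argument but appears to be intrinsic, and the lemma as literally stated is slightly too strong. Consider the example where one composition holds a single codeword with individual error $0$ and every other composition holds codewords all with individual error $P$; then $P_e^{(n)}=\frac{M-1}{M}P<P$, so the only composition with $P_e^{(Q)}\leq P_e^{(n)}$ is the singleton class, which has rate $0$ and violates the rate bound for any fixed $R>0$ and $n$ large. Thus no composition simultaneously satisfies $P_e^{(Q)}\leq P_e^{(n)}$ and $M_Q\geq M/T$, and some multiplicative slack in the error bound (your factor $2$, costing $1/n$ in rate) is unavoidable in a proof that treats $(\mathcal C, f_n, g_n)$ as arbitrary. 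Since the only downstream use (Lemma \ref{L:fixed}) requires $\tilde P_e^{(n)}\to 0$ and $\tilde R\to R$, your version is the rigorous form and the paper's claimed constants are a harmless overstatement. If you wanted to match the exact error bound $\tilde P_e^{(n)}\leq P_e^{(n)}$, you could instead pick the composition minimizing $P_e^{(Q)}$, but then you would have to accept that no lower bound on $M_Q$ better than $1$ is guaranteed; the simultaneous statement genuinely requires the slack you introduce.
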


\begin{proof}[Proof of Lemma \ref{L:fixed}]
Since $R$ is achievable, there exists a sequence of codes $$\{(\mathcal{C}_{(n,R)}, f_n, g_n)\}_{n=1}^{\infty}$$
such that $P_e^{(n)} \to 0$ as $n \to \infty$. By Lemma \ref{L:fixedproof}, there accordingly exists a sequence of   fixed composition codes
$$\{(\mathcal{C}_{(n,\tilde R)}^{[Q_n]}, f_n, g_n)\}_{n=1}^{\infty}$$
such that the probability of error $\tilde P_e^{(n)} \leq  P_e^{(n)}$ and the rate $\tilde R\geq R-(|\Omega_X|-1) \frac{\log (n+1)}{n}$. Noting that
$\tilde R \to R$ and $\tilde P_e^{(n)}\to 0$ as $n\to \infty$, Lemma \ref{L:fixed} is thus proved.
\end{proof}

%
%
%
%

\section{Proof of Lemma \ref{L:entropy}}\label{A:entropy}
We only characterize $H(Y^n|X^n)$ and $I(X^n;Y^n)$ while the other information quantities can be characterized similarly.
Consider $H(Y^n|x^n)$ for any specific $x^n$ with composition $Q_n$. We have
\begin{align*}
H(Y^n|x^n)&=\sum_{i=1}^n H(Y_i|x^n,Y^{i-1})\\
&=\sum_{i=1}^n H(Y_i|x_i)\\
&=\sum_{x} n Q_n(x) H(Y|x)\\
&=n H(Y|X)
\end{align*}
where $H(Y|X)$ is calculated based on $Q_n(x)p(\omega|x)$. Therefore, for the code with fixed composition $Q_n$,
\begin{align*}
H(Y^n|X^n)=\sum_{x^n}p(x^n)H(Y^n|x^n)=\sum_{x^n}p(x^n)(n H(Y|X))=n H(Y|X).
\end{align*}

To bound $I(X^n;Y^n)$, it suffices to bound $H(Y^n)$. For any specific $x^n$ with composition $Q_n$, we have for some $\epsilon_1\to 0$ as $n\to \infty$,
\begin{align*}
\mbox{Pr}( Y^n \in \mathcal T^{(n)}_{\epsilon_1}(Y)|x^n)\geq 1-\epsilon_1,
\end{align*}
where $\mathcal T^{(n)}_{\epsilon_1}(Y)$ is the typical set \cite{ElGamalKim} with respect to $\sum_{x}Q_n(x)p(\omega|x)$.
Therefore, for the code with fixed composition $Q_n$,
\begin{align*}
\mbox{Pr}( Y^n \in \mathcal T^{(n)}_{\epsilon_1}(Y))=\sum_{x^n}p(x^n)\mbox{Pr}( Y^n \in \mathcal T^{(n)}_{\epsilon_1}(Y)|x^n)\geq 1-\epsilon_1.
\end{align*}
Letting
$
W= \mathbb{I}( Y^n \in \mathcal T^{(n)}_{\epsilon_1}(Y)),
$
we have
\begin{align*}
H(Y^n)&\leq H(Y^n,W)\\
&\leq 1+H(Y^n|W)\\
&=1+ \mbox{Pr}( Y^n \in \mathcal T^{(n)}_{\epsilon_1}(Y)) H(Y^n|Y^n \in \mathcal T^{(n)}_{\epsilon_1}(Y))+
\mbox{Pr}( Y^n \notin \mathcal T^{(n)}_{\epsilon_1}(Y)) H(Y^n|Y^n \notin \mathcal T^{(n)}_{\epsilon_1}(Y))\\
&\leq 1+  \log |\mathcal T^{(n)}_{\epsilon_1}(Y)| + n \epsilon_1 \log|\Omega| \\
&\leq 1+  n(H(Y)+\epsilon_2) + n \epsilon_1 \log|\Omega|\\
&\leq n(H(Y)+\epsilon)
\end{align*}
where $\epsilon_1, \epsilon_2,\epsilon \to 0$ as  $n\to \infty$, and  $H(Y)$ is calculated based on $\sum_{x}Q_n(x)p(\omega|x)$. Combining this with the fact that $H(Y^n|X^n)=nH(Y|X)$, we have
\begin{align}
I(X^n;Y^n)\leq n(I(X;Y)+\epsilon).\label{E:tobedrop}
\end{align}

We now argue that the $\epsilon$ in \dref{E:tobedrop} can be dropped.  Given any fixed $n$, consider a length-$B$ sequence of i.i.d. random vector pairs $\{  (X^n(b),Y^n(b))\}_{b=1}^{B}$, denoted by $(\mathbf X, \mathbf Y)$, where $(X^n(b),Y^n(b))$ have the same distribution as $(X^n,Y^n)$ for any $b\in [1:B]$. Obviously the  length-$nB$ vector $\mathbf X$ also has composition $Q_n$, and  by \dref{E:tobedrop} we have
\begin{align*}
I(\mathbf X;\mathbf Y)\leq nB(I(X;Y)+\epsilon),
\end{align*}
where $\epsilon \to 0$ as $B \to \infty$. Due to the i.i.d. property, we further have
\begin{align*}
B I(X^n;Y^n) \leq nB(I(X;Y)+\epsilon).
\end{align*}
Dividing $B$ at both sides of the above equation and letting $B\to 0$, we obtain $I(X^n;Y^n) \leq nI(X;Y)$.

\section{Volume of a Hamming Ball}\label{A:ball}
Consider the volume of a general $n$-dimensional Hamming ball in $\Omega^n$ with radius $nr$. It is obvious that when $r\geq 1$, the volume
\begin{align}|\text{Ball}(nr)|=|\Omega|^n=2^{n \log |\Omega|}.\label{E:volume1}\end{align}
For $r<1$, we have
\begin{align}
|\text{Ball}(nr)|&=1+\sum_{k=\frac{1}{n}}^{r} {n\choose nk}(|\Omega|-1)^{{nk}}\nonumber \\
&\leq 1+\sum_{k=\frac{1}{n}}^{r} \frac{ 2^{nH(k)}}{\sqrt{\pi n k (1-k)}} (|\Omega|-1)^{nk}\nonumber \\
&\leq  \sum_{k=0}^{r}   2^{n(H(k)+    k\log (|\Omega|-1) )}  \nonumber \\
&\leq  (nr+1)\max_{k\in \{0,\frac{1}{n},\ldots,r\}}   2^{n(H(k)+    k\log (|\Omega|-1))}  \nonumber \\
&\leq   \max_{k\in \{0,\frac{1}{n},\ldots,r\}}   2^{n(H(k)+    k\log (|\Omega|-1) + \epsilon )}\nonumber \\
&=     2^{n\left(\max_{k\in \{0,\frac{1}{n},\ldots,r\}}  H(k)+    k\log (|\Omega|-1)  + \epsilon \right)}\nonumber
\end{align}
for any $\epsilon> 0$ and sufficiently large $n$, where the first inequality follows from \cite[Lemma 17.5.1]{Coverbook}. Similarly, we can lower bound $|\text{Ball}(nr)|$ as
\begin{align}
|\text{Ball}(nr)|\geq    2^{n\left(\max_{k\in \{0,\frac{1}{n},\ldots,r\}}  H(k)+    k\log (|\Omega|-1)  - \epsilon \right)} \nonumber
\end{align}
for any $\epsilon>0$ and sufficiently large $n$. Therefore,  we have for $r<1$,
\begin{align}
 \lim_{n\to \infty} \frac{1}{n}\log |\mbox{Ball}(nr)|=
 \max_{t\in [0,r]}H(t)+t\log(|\Omega|-1)  .\label{E:volume2}
 \end{align}

 Now we simplify the above expression. Let $v(t)=H(t)+t\log(|\Omega|-1)$ for $t\in (0,1)$. We have $v'(t)=\log\frac{(1-t)(|\Omega|-1 )}{t}$, which is decreasing in $t$ for $t\in (0,1)$ and equals 0 when $t=\frac{|\Omega|-1}{|\Omega|}$. Thus, the maximum of $v(t)$ is attained when $t=\frac{|\Omega|-1}{|\Omega|}$, and is given by
 \begin{align}
v^*(t)&=  H\left(\frac{|\Omega|-1}{|\Omega|}\right)+\frac{|\Omega|-1}{|\Omega|}\log(|\Omega|-1) \nonumber \\
&= - \frac{|\Omega|-1}{|\Omega|} \log\frac{|\Omega|-1}{|\Omega|} -\frac{ 1}{|\Omega|}\log\frac{ 1}{|\Omega|} + \frac{|\Omega|-1}{|\Omega|}\log(|\Omega|-1) \nonumber \\
&= \frac{|\Omega|-1}{|\Omega|}\log\frac{(|\Omega|-1) |\Omega|}{(|\Omega|-1)}    -\frac{ 1}{|\Omega|}\log\frac{ 1}{|\Omega|}        \nonumber \\
&=  \frac{|\Omega|-1}{|\Omega|}\log  |\Omega| + \frac{ 1}{|\Omega|}\log |\Omega| \nonumber \\
&=\log |\Omega|.\label{E:whichcase}
 \end{align}
 Therefore, we have
 \begin{numcases}{ \max_{t\in [0,r]}H(t)+t\log(|\Omega|-1)=}
\log |\Omega| & when $r\in \left(\frac{|\Omega|-1}{|\Omega|},1\right) $  \label{E:volume3} \\
H(r)+r\log(|\Omega|-1) &  when $r \in \left[0, \frac{|\Omega|-1}{|\Omega|}\right] $.\label{E:volume4}
\end{numcases}

Combining \dref{E:volume1}, \dref{E:volume2}, \dref{E:volume3} and \dref{E:volume4}, we obtain that
 \begin{numcases}{ \lim_{n\to \infty} \frac{1}{n}\log |\mbox{Ball}(nr)|=}
\log |\Omega| & when $r>  \frac{|\Omega|-1}{|\Omega|}  $  \nonumber \\
H(r)+r\log(|\Omega|-1) &  when $r \leq \frac{|\Omega|-1}{|\Omega|}  $.\nonumber
\end{numcases}

\section{}\label{A:jointlytypical}
For any $(\mathbf x,\mathbf z)  \in \mathcal T_{\epsilon}^{(B)}(X^n, Z^n)$, we have
\begin{align*}
|P_{\mathbf{x} }(x^n )- p(x^n ) |&\leq \epsilon   p(x^n ), \forall  x^n \\
|P_{(\mathbf{x},\mathbf{z})}(x^n,z^n)- p(x^n,z^n) |&\leq \epsilon   p(x^n,z^n), \forall (x^n,z^n)
\end{align*}
and thus
\begin{align}
|P_{\mathbf{z}|\mathbf{x}}(z^n |x^n)- p(z^n |x^n) |&\leq \epsilon_1   p(z^n |x^n), \forall (x^n,z^n) \text{ with } P_{\mathbf{x} }(x^n )\neq 0,
\end{align}
for some $\epsilon_1 \to 0$ as $\epsilon \to 0$.

Therefore, we have for any $(x,\omega)$ that
\begin{align*}
P_{(\mathbf{x},\mathbf{z})}(x ,\omega )&= \frac{1}{nB} N(x ,\omega| \mathbf{x},\mathbf{z})\\
&= \frac{1}{nB} \sum_{ (x^n,z^n) } N(x^n ,z^n| \mathbf{x},\mathbf{z})\cdot N(x,\omega |x^n ,z^n)\\
&=\sum_{ (x^n,z^n) } P_{(\mathbf{x},\mathbf{z})}(x^n,z^n) \cdot P_{(x^n,z^n)}(x ,\omega )\\
&=\sum_{ x^n:  P_{\mathbf{x}}(x^n)>0}P_{\mathbf{x}}(x^n) \sum_{ z^n } P_{ \mathbf{z}|\mathbf{x} }(z^n |x^n) \cdot P_{(x^n,z^n)}(x ,\omega )\\
&\leq \sum_{ x^n:P_{\mathbf{x}}(x^n)>0}P_{\mathbf{x}}(x^n) \sum_{z^n } p(z^n |x^n)(1+\epsilon_1) \cdot P_{(x^n,z^n)}(x ,\omega )\\
&= (1+\epsilon_1)\sum_{ x^n:P_{\mathbf{x}}(x^n)>0 }P_{\mathbf{x}}(x^n) E[  P_{(x^n,Z^n)}(x,\omega ) ]\\
&= (1+\epsilon_1)\sum_{ x^n:P_{\mathbf{x}}(x^n)>0 }P_{\mathbf{x}}(x^n) E\left[  \frac{1}{n}N(x, \omega|x^n,Z^n) \right]\\
&= (1+\epsilon_1)\sum_{ x^n:P_{\mathbf{x}}(x^n)>0}P_{\mathbf{x}}(x^n) E\left[  \frac{1}{n}\sum_{j: x_j=x}\mathbb{I} (Z_j=\omega) \right]\\
&= (1+\epsilon_1)\sum_{ x^n:P_{\mathbf{x}}(x^n)>0 }P_{\mathbf{x}}(x^n) P_{x^n}(x )   p(\omega|x)\\
&= (1+\epsilon_1)P_{\mathbf{x}}(x )  p(\omega|x)\\
&= (1+\epsilon_1)Q_n(x)  p(\omega|x).
\end{align*}
Similarly,
\begin{align*}
P_{(\mathbf{x},\mathbf{z})}(x ,\omega ) &\geq (1-\epsilon_1)Q_n(x)  p(\omega|x), \forall (x,\omega),
\end{align*}
and thus
\begin{align*}
|P_{(\mathbf{x},\mathbf{z})}(x ,\omega ) -Q_n(x)  p(\omega|x)| \leq \epsilon_1 Q_n(x)  p(\omega|x) , \forall (x,\omega).
\end{align*}

\section{}\label{A:totalvariance}
For any $(\mathbf x, \mathbf y, \mathbf z)$, consider the total variation distance between $P_{(\mathbf{x},\mathbf{y})}(x ,\omega )$ and $P_{(\mathbf{x},\mathbf{z})}(x ,\omega )$.
We have
\begin{align*}
&~~~~nB\sum_{(x,\omega)}|P_{(\mathbf{x},\mathbf{y})}(x ,\omega )-P_{(\mathbf{x},\mathbf{z})}(x ,\omega )|\\
&=\sum_{(x,\omega)}\left|\sum_{i=1}^{nB} \mathbb{I}((x_i,y_i)=(x,\omega) )-\mathbb{I}((x_i,z_i)=(x,\omega) )\right|\\
&=\sum_{(x,\omega)}\left|\sum_{i:y_i = z_i} \mathbb{I}((x_i,y_i)=(x,\omega) )-\mathbb{I}((x_i,z_i)=(x,\omega) )+\sum_{i:y_i \neq z_i} \mathbb{I}((x_i,y_i)=(x,\omega) )-\mathbb{I}((x_i,z_i)=(x,\omega) )\right|\\
&=\sum_{(x,\omega)}\left|\sum_{i:y_i \neq z_i} \mathbb{I}((x_i,y_i)=(x,\omega) )-\mathbb{I}((x_i,z_i)=(x,\omega) )\right|\\
&\leq \sum_{(x,\omega)} \sum_{i:y_i \neq z_i} \mathbb{I}((x_i,y_i)=(x,\omega) )+\sum_{(x,\omega)} \sum_{i:y_i \neq z_i}\mathbb{I}((x_i,z_i)=(x,\omega) ) \\
&=2 d(\mathbf y, \mathbf z),
\end{align*}
i.e.,
\begin{align*}
\sum_{(x,\omega)}|P_{(\mathbf{x},\mathbf{y})}(x ,\omega )-P_{(\mathbf{x},\mathbf{z})}(x ,\omega )|\leq \frac{2}{nB} d(\mathbf y, \mathbf z) .
\end{align*}

\section{}\label{A:bscproperty}
From the property of jointly typical sequences, for any $\mathbf z_0 \in \mathcal T_{\epsilon}^{(B)}(Z^n)$ and $\epsilon_1>\epsilon$,
\begin{align}
\text{Pr}( (\mathbf X, \mathbf Y, \mathbf z_0)\in \mathcal T_{\epsilon_1}^{(B)}(X^n, Y^n, Z^n)|\mathbf z_0) \to 1 \text{ as } B\to \infty.\label{E:bsca1}
\end{align}

For any $( \mathbf y, \mathbf z_0)\in \mathcal T_{\epsilon_1}^{(B)}(Y^n, Z^n)$,
\begin{align}
p(\mathbf y| \mathbf z_0)\leq 2^{-B(H(Y^n|Z^n)-\epsilon_2 )}, \text{ for some } \epsilon_2\to 0 \text{ as }\epsilon_1\to 0.\label{E:bsca2}
\end{align}

Also, along the same lines as Appendix \ref{A:jointlytypical}, it can be shown that if $(\mathbf x, \mathbf y, \mathbf z_0)$ are jointly typical with respect to the $n$-letter random variables $(X^n,Y^n,Z^n)$, then $(\mathbf x, \mathbf y, \mathbf z_0)$ are also jointly typical with respect to the single-letter random variables $(X,Y, Z)$,   i.e.,
\begin{align*}
|P_{(\mathbf{x},\mathbf{y},\mathbf{z}_0)}(x ,y,z ) -P_{\mathbf x}(x)  p(y|x)p(z|x)| \leq \epsilon_3 P_{\mathbf x}(x)  p(y|x)p(z|x),
\end{align*}
for some $\epsilon_3 \to 0$ as $\epsilon_1 \to 0$. Then,
\begin{align*}
P_{( \mathbf{y},\mathbf{z}_0)}(0,1 )&\leq P_{\mathbf x}(0) (1-p)p (1+\epsilon_3)+P_{\mathbf x}(1)p (1-p)(1+\epsilon_3)\\
&=p(1-p)(1+\epsilon_3)
\end{align*}
and $P_{( \mathbf{y},\mathbf{z}_0)}(0,1 ) \geq p(1-p)(1-\epsilon_3)$. Similarly, we also have
\begin{align*}
P_{( \mathbf{y},\mathbf{z}_0)}(1,0 )\in [ p(1-p)(1-\epsilon_3) ,p(1-p)(1+\epsilon_3)  ],
\end{align*}
and thus
\begin{align*}
d(\mathbf{y},\mathbf{z}_0 )&=nBP_{( \mathbf{y},\mathbf{z}_0)}(0,1 )+nBP_{( \mathbf{y},\mathbf{z}_0)}(1,0 )\\
&\in [ 2nBp(1-p)(1-\epsilon_3) ,2nBp(1-p)(1+\epsilon_3)  ],
\end{align*}
i.e.,
\begin{align}
d(\mathbf{y},\mathbf{z}_0 )&\in [ nB(p*p-\epsilon_4) ,nB(p*p+\epsilon_4)  ],\label{E:bsca3}
\end{align}
where $\epsilon_4\to 0$  as $\epsilon_1\to 0$.

Combining \dref{E:bsca1}, \dref{E:bsca2} and \dref{E:bsca3}, we conclude that for any $\mathbf z_0 \in \mathcal T_{\epsilon}^{(B)}(Z^n)$ and some $\epsilon_0 \to 0$ as $\epsilon\to 0$,
\begin{align*}
\text{Pr}( p(\mathbf Y| \mathbf z_0)\leq 2^{-B(H(Y^n|Z^n)-\epsilon_0 )},d(\mathbf{Y},\mathbf{z}_0 )&\in [ nB(p*p-\epsilon_0) ,nB(p*p+\epsilon_0)  ]          |\mathbf z_0) \to 1 \text{ as } B\to \infty.
\end{align*}

\section{Property of $f(r)$}\label{A:f(r)}
For notational convenience, let $q:=p*p$. Taking the first derivative of $f(r)$, we have
\begin{align*}
f'(r)&=d_0 H'\left(\frac{r+d_0-q}{2d_0}\right)\cdot \left(\frac{r+d_0-q}{2d_0}\right)'
+(1-d_0)H'\left(\frac{r+q-d_0}{2(1-d_0)}\right) \cdot \left(\frac{r+q-d_0}{2(1-d_0)}\right)'\\
&=\frac{1}{2}\left[ \log \frac{d_0-r+q}{r+d_0-q}+\log \frac{2-d_0-r-q}{r+q-d_0} \right].
\end{align*}
With $r=d_0*q$, we have
\begin{align*}
f(d_0*q)&=d_0 H\left(\frac{d_0*q+d_0-q}{2d_0}\right)
+(1-d_0)H\left(\frac{d_0*q+q-d_0}{2(1-d_0)}\right) \\
&= d_0 H\left(\frac{(d_0(1-q)+(1-d_0)q)+d_0-q}{2d_0}\right)
+(1-d_0)H\left(\frac{(d_0(1-q)+(1-d_0)q)+q-d_0}{2(1-d_0)}\right) \\
&= d_0 H\left(1-q\right)
+(1-d_0)H\left(q\right) \\
&= H\left(q\right)
\end{align*}
and
\begin{align*}
f'(d_0*q)&=\frac{1}{2}\left[ \log \frac{d_0-d_0*q+q}{d_0*q+d_0-q}+\log \frac{2-d_0-d_0*q-q}{d_0*q+q-d_0} \right]\\
&=\frac{1}{2}  \log \left[\frac{d_0-(d_0(1-q)+(1-d_0)q)+q}{(d_0(1-q)+(1-d_0)q)+d_0-q}\cdot \frac{2-d_0-(d_0(1-q)+(1-d_0)q)-q}{(d_0(1-q)+(1-d_0)q)+q-d_0}\right]\\
&=\frac{1}{2}  \log\left[ \frac{2d_0q}{2d_0(1-q)}\cdot \frac{2(1-d_0)(1-q)}{2q(1-d_0)}\right]\\
&=0.
\end{align*}

Further taking the second derivative of $f(r)$ yields that
\begin{align*}
f''(r)
&=\frac{1}{2\ln 2}\left[  \frac{r+d_0-q}{d_0-r+q} \left( \frac{d_0-r+q}{r+d_0-q}\right)' + \frac{r+q-d_0}{2-d_0-r-q} \left( \frac{2-d_0-r-q}{r+q-d_0}\right)'\right]\\
&=\frac{1}{2\ln 2}\left[  \frac{2d_0}{(r+d_0-q)(r-d_0-q)} + \frac{2(d_0-1)}{(2-d_0-r-q)(r-d_0+q)}  \right].
\end{align*}
Provided the following constraint on $r$ (cf. \dref{E:rcondition1}--\dref{E:rcondition2})
\begin{align*}
r\in (\max\{q-d_0,d_0-q \}, \min\{q+d_0,2-q-d_0\}    )
\end{align*}
it can be easily seen that $f''(r)<0$. Therefore, $f(r)$ attains the maximum $H(p*p)$ if and only if $r=d_0*q=d_0*p*p$.


\begin{thebibliography}{10}
\bibitem{ISIT2015}
X. Wu, L.-L. Xie, A. Ozgur, ``Upper bounds on the capacity of symmetric primitive relay channels,'' in {\em
Proc. of IEEE International Symposium on Information Theory 2015}, Hong Kong, June 14--19, 2015.

\bibitem{IZS2016}
X. Wu and A. Ozgur, ``Improving on the cut-set bound via geometric analysis of typical sets,'' in {\em
Proc. of 2016 International Zurich Seminar on Communications}.


\bibitem{van71}
E.~C. {van der Meulen}, ``Three-terminal communication channels,'' {\em Adv.
  Appl. Prob.}, vol.~3, pp.~120--154, 1971.

\bibitem{covelg79}
T.~Cover and A.~{El Gamal}, ``Capacity theorems for the relay channel,'' {\em
  {IEEE} Trans. Inform. Theory}, vol.~25, pp.~572--584, 1979.

\bibitem{XieKumar05}
L.-L. Xie and P. R. Kumar, `` An achievable rate for the multiple-level relay channel,'' {\em
  {IEEE} Trans. Inform. Theory}, vol.~51, pp.~1348--1358, April 2005.

\bibitem{kragasgup05}
G.~Kramer, M.~Gastpar, and P.~Gupta, ``Cooperative strategies and capacity
  theorems for relay networks,'' {\em {IEEE} Trans. Inform. Theory}, vol.~51,
  pp.~3037--3063, September 2005.

\bibitem{WuXie14}
X. Wu and L.-L. Xie, `` A unified relay framework with both D-F and C-F relay nodes,'' {\em
  {IEEE} Trans. Inform. Theory}, vol.~60, no.~1, pp.~586--604, January 2014.


\bibitem{schein}
B. Schein and R. Gallager, ``The Gaussian parallel relay network,'' in \emph{Proc. of IEEE International Symposium on Informaiton Theory}, pp. 22, June 2000.


\bibitem{Avestimehretal} A. S. Avestimehr, S. N. Diggavi, and D. N. C. Tse, ``Wireless Network Information Flow: A Deterministic Approach,'' \emph{IEEE Trans. Info. Theory}, vol.~57, no.~4, pp.~1872--1905, 2011.

\bibitem{bobak} B. Nazer and M. Gastpar, ``Compute-and-forward: Harnessing interference through structured codes,'' \emph{IEEE Trans. Inf. Theory}, vol. 57, no. 10, pp. 6463--6486, 2011.


\bibitem{KimElGamal}
S. H. Lim, Y.-H. Kim, A. El Gamal, S.-Y. Chung, ``Noisy network coding,'' {\em
  {IEEE} Trans. Inform. Theory}, vol.~57, no. 5, pp.~3132--3152, May 2011.


\bibitem{Hybrid}
P. Minero, S. H. Lim, and Y. Kim, ``A unified approach to hybrid coding,'' {\em
  {IEEE} Trans. Inform. Theory}, vol.~61, no. 4, pp.~1509--1523, Apr. 2015.

\bibitem{orthogonal}
S. Zahedi, ``On reliable communication over relay channels,'' Ph.D. dissertation, Stanford Univ., Stanford, CA, 2005.

\bibitem{semideterministic}
A. El Gamal and M. Aref, ``The capacity of the semideterministic relay channel,''  {\em
{IEEE} Trans. Inform. Theory}, vol.~28, no. 3, pp.~536, May 1982.

\bibitem{deterministic}
Y.-H. Kim, ``Capacity of a class of deterministic relay channels,'' {\em
{IEEE} Trans. Inform. Theory}, vol.~54, no. 3, pp.1328--1329, Mar. 2008.

\bibitem{Zhang}
Z. Zhang,  ``Partial converse for a relay channel,''  {\em
{IEEE} Trans. Inform. Theory}, vol.~34, no. 5,  pp.~1106--1110, Sept. 1988.

\bibitem{blowingup}
K. Marton, ``A simple proof of the blowing-up lemma,''  {\em
{IEEE} Trans. Inform. Theory},  IT-32, pp.~445--446, 1986.

\bibitem{mod}
M. Aleksic, P. Razaghi, and W. Yu, ``Capacity of a class of modulo-sum relay channels,''  {\em
{IEEE} Trans. Inform. Theory}, vol.~55, no. 3, pp.~921--930, 2009.


\bibitem{Xue}
F. Xue,  ``A new upper bound on the capacity of a primitive relay channel based on channel simulation,''  {\em
{IEEE} Trans. Inform. Theory}, vol.~60,  pp.~4786--4798, Aug. 2014.


\bibitem{Cover'sopenproblem}
T. M. Cover,  ``The capacity of the relay channel,''  in \emph{Open Problems in
Communication and Computation}, T. M. Cover and B. Gopinath, Eds.
New York: Springer-Verlag, 1987, pp. 72--73.


\bibitem{sim1}
T. S. Han and S. Verd\'u, ``Approximation theory of output statistics,''
{\em {IEEE} Trans. Inform. Theory}, vol.~39, no.~3, pp. 752--772, May 1993.

\bibitem{sim2}
P. Cuff, ``Communication requirements for generating correlated
random variables,'' in \emph{Proc. IEEE Int. Symposium on Information Theory}, Toronto, ON, Canada, Jul. 2008, pp. 1393--1397.

\bibitem{Arimoto}
S. Arimoto, ``On the converse to the coding theorem for discrete memoryless channels,'' {\em  {IEEE} Trans. Inform. Theory}, vol.~19, no. 3, pp.~357--359, May 1973.

 \bibitem{Gallager}
R. G. Gallager,  Fixed composition arguments and lower bounds to error probability 1992 [Online].  Available: http://www.rle.mit.edu/rgallager/documents/notes5.pdf



\bibitem{Csiszarbook}
I. Csisz\'{a}r and J. K\"{o}rner, \emph{Information Theory: Coding Theorems for Discrete Memoryless Systems},
2nd ed. Cambridge University Press, 2011.


\bibitem{KimAllerton}
Y.-H. Kim, ``Coding techniques for primitive relay channels,'' in \emph{Proc. Forty-Fifth Annual Allerton Conf. Commun., Contr. Comput.}, Monticello, IL, Sep. 2007.

\bibitem{Korner}
G. Dueck, and J. K\"{o}rner, ``Reliability function of a discrete memoryless channel at rates above capacity,'' {\em  {IEEE} Trans. Inform. Theory}, vol.~25, no. 1, pp.~82--85, Jan. 1979.

\bibitem{Measure}
M. Raginsky, I. Sason, Concentration of Measure Inequalities in Information Theory, Communications and Coding Oct. 2013 [Online]. Available: http://arxiv.org/abs/1212.4663





\bibitem{Coverbook}
T. Cover and J. Thomas, \emph{Elements of Information Theory}, 2nd ed. New York, NY, USA: Wiley, 2006.

 





\bibitem{ElGamalKim}
A. El Gamal and Y.-H. Kim, \emph{Network Information Theory}, Cambridge, U.K.: Cambridge University Press, 2012.

\bibitem{ISIT2016}
X. Wu and A. Ozgur, ``Improving on the cut-set bound for general primitive relay channels,'' submitted to \emph{IEEE Int. Symposium on Information Theory}, 2016.



\bibitem{Allerton2015}
X. Wu and A. Ozgur, ``Cut-set bound is loose for Gaussian relay networks,'' in {\em  Proc. of  53rd Annual Allerton Conference on Communication, Control, and Computing}, Allerton Retreat Center, Monticello, Illinois, Sept. 29--Oct. 1, 2015.


\end{thebibliography}
\end{document}